\UseRawInputEncoding
\documentclass[]{llncs}
\usepackage{lmodern}
\usepackage{amsmath}
\usepackage{amsfonts}
\usepackage{bm}
\usepackage{afterpage}
\usepackage{xmpincl} 
\usepackage{bbold}
\usepackage{braket}
\usepackage{dsfont}
\usepackage{mathtools}
\usepackage{amssymb}
\usepackage{tikz}
\usetikzlibrary{calc}
\usetikzlibrary{matrix}
\usepackage{verbatim}
\usepackage{tabularx}
\usepackage{amsmath}
\usepackage{array}
\usepackage{longtable}
\usepackage{xspace}
\usepackage[colorlinks=true,
            linkcolor=red,
            citecolor=green,
            urlcolor=red]{hyperref}
\usepackage{xcolor}
\usepackage[capitalise]{cleveref}

\newcommand{\prover}[0]{\mathcal{P}}

\newcommand{\proj}[1]{\ket{#1}\!\!\bra{#1}}

\newcommand{\CompOperatorComp}[1]{\tilde{\mathcal{#1}}}

\newcommand{\sigp}{\ensuremath{\mathsf{\Sigma}\text{-}\mathrm{protocol}}\xspace}

\newcommand{\Sim}{\ensuremath{\mathsf{Sim}}\xspace}
\newcommand{\negl}{\mathrm{negl}}
\newcommand{\comp}{\ensuremath{\mathrm{Comp}}\xspace}
\newcommand{\outputreg}{\ensuremath{O}\xspace}
\newcommand{\Fis}{\ensuremath{\mathsf{Fis}}}
\newcommand{\Ext}{\mathsf{Ext}}
\newcommand{\challspacesize}{\ensuremath{N}}
\newcommand{\N}{\mathbb N}
\newcommand{\instance}{x}

\newif\ifsubmission
\submissiontrue
\submissionfalse

\newif\ifshowcomments
\showcommentsfalse

\ifsubmission
\showcommentsfalse
\fi

\ifshowcomments
\newcommand{\cm}[1]{\textcolor{purple}{\textit{[Chris: #1]}}}

\newcommand{\js}[1]{\textcolor{orange}{\textit{[Jaya: #1]}}}
\else
\newcommand{\cm}[1]{}

\newcommand{\js}[1]{}
\fi

\title{Security of the Fischlin Transform in Quantum Random Oracle Model}
\date{}
\ifsubmission
\author{}
\institute{}
\vspace{-1cm}
\else
\author{Christian Majenz \and Jaya Sharma}
\institute{	Department of Applied Mathematics and Computer Science, Technical University of Denmark \\
	\email{\{chmaj, jaysh\}@dtu.dk} }
\fi

\begin{document}

\maketitle

\begin{abstract}
	The Fischlin transform yields non-interactive zero-knowledge proofs with straight-line extractability in the classical random oracle model. This is done by forcing a prover to generate multiple accepting transcripts through a proof-of-work mechanism. Whether the Fischlin transform is straight-line extractable against quantum adversaries has remained open due to the difficulty of reasoning about the \emph{likelihood} of query transcripts in the quantum-accessible random oracle model (QROM), even when using the compressed oracle methodology. In this work, we prove that the Fischlin transform remains straight-line extractable in the QROM, via an extractor based on the compressed oracle. This establishes the post-quantum security of the Fischlin transform, providing a  post-quantum straight-line extractable NIZK alternative to Pass' transform with smaller proof size. Our techniques are built on different combinations of symmetrization, query amplitude and quantum union bound arguments, as well as tail bounds for sums of independent random variables and for martingales.
\end{abstract}

\keywords{Fischlin transform, QROM, NIZK, post-quantum}

\ifsubmission
\else
	\setcounter{tocdepth}{2}
	\tableofcontents
\fi

\section{Introduction}

Non-interactive zero-knowledge proofs (NIZKs) obtained from interactive protocols via the Fiat--Shamir paradigm form a central tool in modern cryptography. Many applications require \emph{straight-line extractability} -- the ability to efficiently extract a witness from any succesful prover, without rewinding. The Fischlin transform  \cite{C:Fischlin05} provides a prominent approach for achieving straight-line extractable NIZKs from $\Sigma$-protocols in the random oracle model. It constitutes an alternative to Pass' transform \cite{C:Pass03}, using what would now be called a proof-of-work-like mechanism instead of commitments to enable smaller proofs.

With the advent of quantum computing, it has become necessary to analyze the security of random-oracle-based constructions in the presence of quantum adversaries. This has led to the development of the quantum random oracle model (QROM) \cite{AC:BDFLSZ11}, in which adversaries may query the random oracle in superposition. Over the past decade, a substantial body of work has established a versatile technical toolbox for proving the post-quantum security of classical constructions in the QROM. These techniques have been successfully applied to a wide range of Fiat--Shamir-type transforms, including Pass' transform \cite{TCC:ChiManSpo19,C:DFMS22,C:RotTes25} and related primitives. However, despite its importance, the Fischlin transform has so far resisted  security analysis in the QROM. Existing proof techniques do not readily extend to this setting, and no QROM extractability proof for the Fischlin transform was previously known.

The main difficulty in analyzing the Fischlin transform in the QROM stems from the structure of its extraction mechanism. Informally, a prover succeeds only if it finds an oracle query whose output satisfies a somewhat sparse acceptance condition, which in turn requires many attempts, except with tiny probability. A classical extractor can thus read off two accepting transcripts from the adversarial prover's oracle query inputs, allowing extraction of a witness via special soundness. In the quantum setting, however, this reasoning becomes problematic. The adversary's output depends only on the single accepting query, the unsuccessful ones are in some sense \emph{counterfactual}. This stands in sharp contrast to Pass' transform, where accepting transcripts contain explicit commitments to responses for all challenges that can be leveraged for extraction even in the quantum setting. As a result, proving that a quantum adversary must invest substantial query complexity in order to succeed requires new techniques.

\vspace{.1cm}

\noindent \textbf{Our results.}
In this work, we prove the straight-line extractability of the Fischlin transform in the QROM. Our result provides concrete bounds for a wide range of parameter choices that include an infinite family of parameter sets for asymptotic security. In more detail, we have the following
\begin{theorem}[Informal version of \cref{cor:main}]
	Let $\Sigma$ be a \sigp for a witness relation $R$ with special soundness and unique responses. Under suitable conditions on the parameters of the Fischlin transform that allow unbounded values for the parallel repetition parameter of the Fischlin transform, $k$, (and thus for the security paramter,) the Fischlin transform $\mathsf{Fis}[\Sigma]$ is a proof of knowledge with straight-line extractability in QROM. The simulation can be made perfect and the extractor succeeds whenever a $q$-query malicious prover does, except with probability no more than
	\begin{align*}
		\varepsilon_{\mathrm{ex}}\le q^2\cdot \negl(k).
	\end{align*}
\end{theorem}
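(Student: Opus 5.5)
We prove this with a compressed-oracle straight-line extractor, and reduce extraction failure to a query-complexity lower bound for a "lonely accepting query" event. Recall the Fischlin transform: the prover sends $k$ commitments $a_1,\dots,a_k$. For each $i$ it searches challenges $c_i$ with responses $z_i$ such that $H(x,\vec a,i,c_i,z_i)$ has its first $b$ bits zero, or, in the relaxed variant, such that the sum of the hash values is at most a threshold $S$. The extractor runs the malicious prover $\mathcal P$ against Zhandry's compressed oracle, which it simulates efficiently. At the end it measures the database $D$ and searches for an index $i$ and two entries $(x,\vec a,i,c,z)$ and $(x,\vec a,i,c',z')$ in $D$ with $c\neq c'$ such that both are accepting $\Sigma$-transcripts for $a_i$. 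It then applies special soundness. Zero-knowledge is handled separately and classically: the simulator programs the oracle via the HVZK simulator, and this is perfect because we may reprogram fresh points of a compressed oracle. So the substance is extraction.

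\textbf{Step 1: a necessary event.} Fix the output $\pi=(\vec a,(c_i,z_i)_i)$. By unique responses, for each $i$ and each challenge $c$ there is at most one valid $z$. If extraction fails, then for every $i$ the database contains, among valid transcripts for $a_i$, only the single challenge $c_i$ used in the proof. We then show that, conditioned on this, the oracle values on the $\le N\cdot k$ relevant points $(\vec a,i,c,z_c)$ are essentially uniform and unqueried except at the points $c_i$. The prover's success then requires that, among the queried points, the $k$ chosen values satisfy the proof-of-work condition, while \emph{no} alternative challenge was ever probed. The standard compressed-oracle lemma, that outputs are consistent with the database up to $O(\sqrt{1/2^n})$ per unqueried point, lets us replace the final verification by a check on $D$, at additive cost $q\cdot 2^{-n/2}$-type terms.

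\textbf{Step 2: bounding the probability of the bad event.} We must show that a database with exactly one accepting query per index $i$, all together satisfying the sum condition, has small amplitude after $q$ queries. We use the query-amplitude method: each query adds a new database entry whose value is uniform. So the amplitude of having a "good" entry at a fresh point is at most $\sqrt{p}$, where $p\approx 2^{-b}$ is the acceptance probability. Summing over queries and applying a quantum union bound gives contributions of order $q\sqrt{p}$ per index. However, the key requirement is joint: all $k$ indices succeed while each uses only a single probe. We symmetrize over the order of challenges probed within each $i$. In the compressed picture, the database restricted to index $i$ is a set of independent uniform values, and the value of the selected one is the minimum-like statistic. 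Requiring that the prover pick values with small total sum using only one sample per index means the sum of $k$ independent uniform values on $[0,2^b)$ must be $\le S$. A Hoeffding/Chernoff tail bound then shows this is $\negl(k)$ when $S$ is chosen as in Fischlin's parameters, e.g.\ $S\approx k$ and $b\cdot k=\omega(\log)$. Adaptive choices across indices are handled by a martingale (Azuma) argument over the order in which the $k$ indices' entries are created. Converting the amplitude bound to a probability squares the $q$ factor, giving $q^2\negl(k)$.

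\textbf{Main obstacle.} We expect the hardest step to be Step 2's coupling of the quantum database with a classical random-variable picture, i.e.\ justifying that conditioned on "only one entry per index," the entry values behave like independent uniforms that the adversary could not select among. The plan is to first bound, by a query-amplitude argument, the amplitude of the projector onto databases containing a good entry at index $i$ that is not the unique one. We would then apply a symmetrization over permutations of database entries so that the adversary's final choice is independent of the values. Finally, we would invoke the tail bound on the symmetrized classical distribution.
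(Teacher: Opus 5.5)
There is a genuine gap, and it sits exactly where the paper locates the difficulty.

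\textbf{Step~1 is already too lossy.} You replace verification by a check on the compressed database at an additive cost of order $2^{-b/2}$ per verified point (your ``$q\cdot 2^{-n/2}$-type terms''). In the Fischlin transform the proof-of-work output is short; here $\ell=O(\log\log k)$. So this cost is at best $1/\mathrm{polylog}(k)$, and the bound is non-negligible before Step~2 even starts. The paper never uses the generic consistency lemma. Instead it expands $(\mathrm{Comp}\ket{0^\ell})^{\otimes k}$ (Lemma~\ref{lem:comp 0^k}) and applies a Chernoff bound over the $k$ verified registers. This shows that only a $\gamma$-fraction of them can fail to be $\ket{0^\ell}$ in the compressed basis, at a cost exponentially small in $2^{-\ell}k$.

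\textbf{Step~2 relies on a classical picture that does not transfer.} The idea ``one uniform sample per index, then Hoeffding'' breaks down quantumly. After projecting onto prover success and extractor failure, the database registers are neither independent nor uniform, and a quantum prover has no classical set of probes. Your per-index amplitude bound $q\sqrt{p}$ is not small once $q\ge 2^{b/2}$; the honest prover alone makes about $2^b$ queries per repetition. Nothing in the plan turns this into a jointly negligible bound over the $k$ indices. Symmetrizing ``so that the final choice is independent of the values'' is also impossible, because success necessarily correlates the choice with the values.

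\textbf{The missing idea.} Your Step~1 observation, that failure leaves the non-verified points uniform, is the right ingredient, but it must be played against the \emph{unconditional} oracle distribution in the uncompressed basis.
\begin{itemize}
\item For a fixed commitment vector, the number of zero outputs among the $kN$ relevant inputs is $\mathrm{Bin}(kN,2^{-\ell})$. Hence $\|P_\eta\ket{\phi}\|^2$ is exponentially small for $\eta=(1+\delta)\mu$ (Lemma~\ref{lem:simple-tail-bound}).
\item After projecting onto success and failure, the verified registers are mostly $\ket{0^\ell}$. All but $\gamma k$ of the other $m=k(N-1)$ registers are $\ket{\perp}$ in the compressed basis, i.e.\ $\ket{+^\ell}$ in the uncompressed one.
\item Symmetrizing over those $m$ registers, Lemma~\ref{lem:measure} together with Azuma--Hoeffding shows that the zero count reaches $\eta$ with high probability.
\end{itemize}
Sandwiching the two tail bounds, with the gentle-measurement cross term, bounds the probability of success-and-failure. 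This needs $N=c\,2^\ell\log k$ and the stated bounds on $k$, so that the excess of about $k$ zeros dominates fluctuations of order $\sqrt{\mu}$.

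\textbf{The general prover is not handled.} A prover may choose $\vec a$ adaptively and in superposition over exponentially many vectors. Fixing $\pi$ does not cover this, and a union bound over $\vec a$ fails. The paper lifts the fixed-$\mathbf a_0$ bound $\delta$ by splitting the run at the first query with prefix $\hat{\mathbf a}$; before that query, that part of the database is untouched. Cauchy--Schwarz and the quantum union bound then give $4(q+k)^2\delta$. That lift, not squaring an amplitude, is where the $q^2$ comes from.

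\textbf{A side point.} ``Perfect simulation'' in the statement means $\varepsilon_{\mathrm{sim}}=0$ for the extractor's compressed-oracle simulation, not zero-knowledge. In any case, reprogramming in the QROM is not free.
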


\vspace{.1cm}

\noindent \textbf{Technical Overview.}
Defining the straight-line extractor for the Fischlin transform is straightforward. We simulate the random oracle using a compressed oracle \cite{C:Zhandry19}. Once the prover has finished and its proof is verified, the extractor measures the compressed oracle database register. If the measurement result contains two accepting transcripts for the same repetition with different challenges, special soundness of the \sigp is used to produce a witness.

Proving that this extractor works turns out to be a formidable challenge. Known techniques for straightline extraction in the QROM are not sufficient. There seem to be two such techniques in the literature. Firstly, in \cite{EC:Unruh15}, straightline-extraction for commitments is achieved by simulating a length- preserving quantum-accessible random oracle (QRO) in a way that allows finding pre-images. And secondly, \cite{TCC:ChiManSpo19} showed how to use the compressed oracle for straightline extraction. This mechanism was further developed \newline\cite{EC:DFMS22,C:DFMS22,C:RotTes25} and applied \cite{AC:AHJMRY23,C:HJMN24,C:BBBDKM25}. None of these techniques can be used for the Fischlin transform though -- the absence of commitments prevents straightforward generalization, and the fact that extractor failure is characterized by the absence of certain patterns in the compressed oracle database, rather than their existence, prevents transition capacity reasoning \cite{EC:CFHL21}.

The main work goes into proving that no prover can succeed with significant probability at outputting a proof that involves a fixed commitment vector chosen ahead of time (``deterministic prover'') so that the extractor still fails. This result is then lifted to a general prover using an intricate argument based on quantum query amplitudes and a quantum union bound.

To show the result for deterministic provers, our strategy is reminiscent of the Monty Hall paradox. Here, a game master presents three doors. Behind a random one is a car, while there are goats behind the other two. The participant can now choose a door, which is, however, not opened. The game master now opens a \emph{different} door which reveals a goat. The participant can now decide: stick to their choice of door, or choose the only remaining door instead. The optimal strategy is to switch, which reveals the car with probability 2/3. What makes this situation appear paradoxical is the fallacy of regarding the remaining door as a fixed object instead of a random variable. The game master chooses the door they open depending on the randomness of the placement of the car and the participant's choice, which correlates the latter with the car's placement.

When a prover outputs a valid Fischlin proof, a number of hash values are now known (to fulfill the PoW condition, in the following discussion simplified to ``output 0''). This can only be achieved by correlating the choice of challenges with the random oracle. Clearly this, in turn, necessitates that some hash outputs not involved in proof verification are now \emph{less} likely to be 0 than suggested by the uniform distribution, otherwise the oracle output distribution would be biased. In the compressed oracle, analogously, the oracle registers corresponding to valid transcripts not involved in the verification of the proof output by the adversarial prover cannot contain the uniform superposition of outputs anymore. This deviation is then exploited by the extractor.

From a technical perspective, the task of making this argument rigorous is formidable. The output length of the random oracle is logarithmic, preventing the usual tight relationship between oracle register content and adversarial view up to negligible error. We therefore need to explicitly analyze every switch between the ``compressed'' and ``uncompressed'' bases of the superposition oracle using Chernoff bounds. The high-level strategy is as follows. Pick a number $\eta$ such that (i) it is unlikely that there are $\eta$ oracle inputs in a certain sub-domain that map to zero (by a standard Chernoff bound), but (ii) it is likely to find roughly $\eta-k$ zeroes, where $k$ is the number of parallel repetitions in the Fischlin transform and thus the number of hash values that need to be zero for a Fischlin proof to be accepted. We then use (a somewhat cumbersome compressed oracle version of) the fact that it is unlikely to find $\eta$ zeroes, but likely to find $\eta$ zeroes given that the prover has succeeded and the extractor has failed, to conclude that the probability that the prover succeeds and the extractor fails is small.
Proving that it is likely to find $\eta$ zeroes given that the prover has succeeded and the extractor has failed is non-trivial. Conditioning (or rather projecting) on these events renders the oracle register non-uniform and concentration inequalities for independent Bernoulli random variables no longer apply. Fortunately, we can simplify the analysis by symmetrizing the state with respect to permutations of the oracle database registers not associated to a hash input queried during verification. We can then define a martingale based on the outcomes of sequentially measuring these database registers. As a result of the symmetrization, the conditional expectations of these measurements can be bounded. This bound translates into a bound on the expected number of zeroes, and an application of the Azuma-Hoeffding concentration inequality supplies the necessary tail bound for deviations from that expected value.

Overall, our analysis establishes that, even in the presence of quantum superposition queries, success in the Fischlin transform implicitly certifies the existence of some ``hidden'' accepting transcripts. While these transcripts may not be classically observable, they manifest themselves in the compressed oracle database, enabling straight-line extraction in the QROM.

\vspace{.1cm}

\noindent \textbf{Organization.}
Section~2 reviews the QROM and the compressed oracle framework. Section~3 proves straight-line extractability of the Fischlin transform in the QROM. Section~4 establishes zero-knowledge.

\section{Preliminaries}

For asymptotic security analysis, protocol parameters are expressed as a function of the security parameter $\lambda\in \mathbb N$. A (possibly negative) function $f(\lambda)$ is called negligible if $|f(\lambda)| \leq 1 / p(\lambda)$ for any polynomial $p(\lambda)$ and all sufficiently large $\lambda$. We use $\negl$ to denote a negligible function.  A function which is not negligible is called noticeable. For any witness relation $R\subset \mathcal X\times\mathcal W$ we call the elements of $\mathcal X$ instances, the elements of $\mathcal W$ witnesses, and define the associated language $\mathcal L_R=\{x\in\mathcal X|\exists w\in\mathcal W: (x,w)\in R\}$. We denote by $\delta(\cdot,\cdot )$ the maximal 1-sample computational distinguishing advantage between two probability distributions. We allow algorithms as arguments here, which indicate computational distinguishability of their outputs.

A common idealization is the random oracle model where parties have access to a random function $H$ with some domain and range depending on $\lambda$. Our random oracles have output length at most polynomial in the security parameter.

\begin{definition}[$\mathsf{\Sigma}$-Protocol]\label{sigma}
	A \sigp $\Sigma = (\prover,\mathcal{V})$ for a relation
	$R \subseteq \mathcal{X} \times \mathcal{W}$ is a three-round two-party interactive protocol between a Prover $\prover=(\prover_1,\prover_2)$ and a verifier $\mathcal{V}$ where the prover produces $(a,\mathsf{st})\gets\prover_1(x,w)$ and sends the commitment $a\in \mathcal M$ to the verifier, the verifier sends a random challenge $c\leftarrow \mathcal C$ (\emph{public-coin}) and the prover sends a response $z\gets \prover_2(\mathsf{st},c)$. Finally, the verifier outputs the verdict $b\gets \mathcal V(x,a,c,z)$, with $b=1$ meaning ``valid''.  We denote the part of the interaction that produces the communication transcript as $(a, c, z) \leftarrow\langle\prover(x, w), \mathcal{V}\rangle$. It has to satisfy the following properties:
	\begin{enumerate}
		\item Completeness. For any $(x, w) \in R$, any $(a, c, z) \leftarrow\langle\prover(x, w), \mathcal{V}\rangle$ it holds $\mathcal{V}(x, a,  c, z)=1$.

		\item Special Soundness. There exists a probabilistic polynomial-time (PPT) extractor algorithm $\Ext_{\mathrm{ss}}$ such that for all  $x \in \mathcal{X}$ and for all pairs of transcripts $(a, c, z)$ and  $(a,c^{\prime}, z^{\prime})$ with $c \neq c^{\prime}$  such that $\mathcal{V}(x, a, c, z)=\mathcal{V}\left(x, a, c^{\prime}, z^{\prime}\right)=1$, the extractor produces a witness $w \leftarrow \Ext_{\mathrm{ss}}\left(x, a, c, z, c^{\prime}, z^{\prime}\right)$ with $(x, w)\in R$.

	\end{enumerate}
\end{definition}
\begin{definition}[Min-Entropy]
	The \textbf{min-entropy} of a discrete random variable \( X \) over a finite set \( \mathcal{X} \) is defined as:
	\[
		H_{\infty}(X) = -\log_2 \left( \max_{x \in \mathcal{X}} \Pr[X = x] \right)
	\]
	where \( \Pr[X = x] \) denotes the probability that \( X \) takes the value \( x \).
\end{definition}

We now list some additional properties required by our \sigp the make the transformation from interactive to non-interactive, work:

\begin{definition}[Commitment Entropy] Let $\Sigma$ be a \sigp. We say $\Sigma$ has \emph{commitment entropy} $\alpha\in\mathbb R_{\ge 0}$ if , for any $(x, w) \in R$ and $a\leftarrow \prover_{1}(x, w)$ it holds that $H_\infty(a)\ge \alpha$.
\end{definition}

\begin{definition}[Unique Responses]\label{Unique Responses} Let $\Sigma$ be a \sigp . We say $\Sigma$ has \emph{unique responses}, if for all tuples $\left(x,a,c,z,z'\right) \leftarrow A(\lambda)$ with $z\neq z'$, either $\mathcal V(x,a,c,z)=0$ or $\mathcal V(x, a,c,z')=0$. \end{definition}

\begin{definition}[Honest-Verifier Zero-Knowledge (HVZK)]  Let $\Sigma$ be a \sigp for relation $R$. We say $\Sigma$ has \emph{honest-verifier zero-knowledge (HVZK)} if there exists a PPT algorithm $\Sim_{\mathrm{HVZK}}$, the zero-knowledge simulator, such that for any quantum polynomial-time (QPT) algorithm $\mathcal{D}=\left(\mathcal{D}_0, \mathcal{D}_1\right)$, it holds that
	\begin{align*}
		 & \Big|\Pr_{\substack{(x, w,\mathsf{st})  \leftarrow \mathcal{D}_0\left(1^\lambda\right)  \\ (a,c,z) \leftarrow\langle\prover(x, w), \mathcal{V}\rangle}}[1\gets \mathcal{D}_1(a, c, z,\mathsf{st})\wedge (x,w)\in R]\\
		 & \quad-\Pr_{\substack{(x, w,\mathsf{st})  \leftarrow \mathcal{D}_0\left(1^\lambda\right) \\ (a,c,z) \leftarrow\Sim_{\mathrm{HVZK}}(x)}}[1\gets \mathcal{D}_1( a, c, z,\mathsf{st})\wedge (x,w)\in R]\Big|\le \negl(\lambda).
	\end{align*}

\end{definition}

Below we sometimes use a stronger zero-knowledge property where the simulator is able to generate proofs for a specific challenge.

\begin{definition}[Special Honest Verifier Zero-Knowledge]\label{SHVZK DEFN}
	A \sigp $\Sigma$ is special honest-verifier zero-knowledge (SHVZK) if it is HVZK and there exists a simulator $\Sim_{\mathrm{SHVZK}}$ such that $\Sim_{\mathrm{HVZK}}(x)$ simulates transcripts $(a,c,z)$ by sampling $c\gets\mathcal C$ and running $(a,z)\gets\Sim_{\mathrm{SHVZK}}(x,c)$.

\end{definition}

We now formally describe a Non-Interactive Zero-Knowledge(NIZK) proof with online extractors in the Random Oracle Model.
\begin{definition}[Non-Interactive Zero-Knowledge proof (NIZK)]\label{Nizk defn}
	A pair $(\prover, \mathcal{V})$ of probabilistic polynomial-time algorithms is called a non-interactive zero-knowledge proof (NIZK) for relation $R$ if the following holds:

	\noindent   \textbf{Completeness.} For any $(x, w) \in R$ and any $\pi \leftarrow \prover^H(x, w)$ we have \\ $\Pr\left[\mathcal{V}^H(x, \pi)=1\right]= 1-\negl(\lambda)$.

	\noindent\textbf{Zero-Knowledge.}
	There exists a probabilistic polynomial-time simulator $\Sim$ such that for any
	quantum polynomial-time distinguisher
	$\mathcal D=(\mathcal D_0,\mathcal D_1)$, we have
	\begin{align*}
		\Big|
		 & \Pr\Big[
			\mathcal D_1^{H}(\pi,\mathsf{st})=1 :
			(x,w,\mathsf{st})\leftarrow \mathcal D_0^{H}(1^\lambda),\,
			(x,w)\in R,\,
			\pi\leftarrow \prover^{H}(x,w)
			\Big]
		\\
		-
		 & \Pr\!\Big[
			\mathcal D_1^{H'}\!\!(\pi,\mathsf{st})\!=\!1\! :\!
			(x,w,\mathsf{st})\!\leftarrow \!\mathcal D_0^{H}(1^\lambda),
			(x,w)\in R,
			(H',\pi)\!\leftarrow \!\Sim(x)
			\Big]
		\Big|
		\!\le\! \negl(\lambda).
	\end{align*}
	Here, the simulator outputting $H'$ is to be understood as outputting (reprogramming) instructions how to instantiate $H'$ relative to $H$.

\end{definition}

In the QROM, straight-line extractability can be defined as follows, following \cite{C:DFMS22}. Let ${\cal P}^*$ be a dishonest prover that potentially outputs some additional auxiliary (possibly quantum) output $Z$ next to $\pi$.
We then consider an interactive algorithm $\Ext$, called {\em online extractor}, which takes the security parameter $\lambda$ as input and simulates the answers to the oracle queries in the execution of ${\cal V}^H \circ {\cal P}^*{}^H(x)$, which we define to run  $(\pi,Z) \leftarrow {{\cal P}^*}{}^H(x)$ followed by $v \leftarrow {\cal V}^H(\instance,\pi)$; furthermore, at the end, $\Ext$ outputs $w \in {\cal W}$.
We denote the execution of ${\cal V}^H \circ {\cal P}^*{}^H(x)$ with the calls to $H$ simulated by $\Ext$, and considering $\Ext$'s final output $w$ as well, as $(\pi,Z;v;w) \leftarrow {\cal V}^{\Ext} \circ {\cal P}^*{}^{\Ext}(x)$.
\begin{definition}[Definition 3.1 in \cite{C:DFMS22} specialized to non-adaptive adversaries]\label{def:OnlineExtr}
	A NIZK in the (QROM) for relation $R\subset\mathcal X\times \mathcal W$ is a {\em proof of knowledge with straight-line extractability} against non-adaptive adversaries if there exists an online extractor $\Ext$, and functions $\varepsilon_\text{\rm sim}$ (the {\em simulation error}) and $\varepsilon_\text{\rm ex}$ (the {\em extraction error}), with the following properties. For any $\lambda \in \N$, for any $x\in\mathcal X$ and for any dishonest prover ${\cal P}^*$ making no more than~$q$ queries,
	{
			$$
				\delta\bigl( [(\pi,Z,v)]_{{\cal V}^H \circ {\cal P}^*{}^H(x)} , [(\pi,Z,v)]_{{\cal V}^{\Ext} \circ {\cal P}^*{}^{\Ext}(x)} \bigr) \leq \varepsilon_\text{\rm sim}(\lambda,q)
			$$
			and
			$$
				\Pr\bigl[ v = {\tt accept} \,\wedge\, (\instance,w) \not\in R: (\pi,Z;v;w) \leftarrow {\cal V}^{\Ext} \circ {\cal P}^*{}^{\Ext}(x) \bigr] \leq \varepsilon_\text{\rm ex}(\lambda,q) \, .
			$$
			Furthermore,
			the runtime of $\Ext$ is polynomial in $\lambda+q$, and $\varepsilon_\text{\rm sim}(\lambda,q)$ and $\varepsilon_\text{\rm ex}(\lambda,q)$ are negligible in $\lambda$ whenever $q$ is polynomial in $\lambda$. }
	\label{def:PoKOnline}\end{definition}

\textbf{The Fischlin transform.}

We use the version of the Fischlin transform described in \cite{AC:Konshe22}.

\begin{definition}[The Fischlin transform]\label{Fischlin Protocol}
	The security parameter $\lambda$ defines the integers $k, \ell, t$,  related as $k\cdot \ell=\lambda$ and $t=\lceil\log \lambda\rceil \cdot \ell$. The protocol uses a hash function $H:\mathbf{X}=\{(\mathbf{a},i,c_i,z_i)\}=X_{\mathbf{a}}\times X_i\times X_{c_i}\times X_{z_i} \to\mathcal{Y}=\{0,1\}^{\ell}$, modeled as a random oracle, and a \sigp  $\Sigma=\left(\left(\prover_{\Sigma,1}, \prover_{\Sigma,2}\right), \mathcal{V}_{\Sigma}\right)$. The prover and verifier of the Fischlin transform $\Fis[\Sigma]$ are defined as follows.

	\noindent$\prover^{\Fis[\Sigma]}(x, w):$ \\
	1. For each $i \in[k]$, compute $\left(a_i\right.$, $\left.\mathrm{st}_i\right) \leftarrow \prover_{\Sigma,1}(x, w)$\\
	2. Set $\boldsymbol{a}=\left(a_i\right)_{i \in[k]}$, and initialize $c_i=-1$ for each $i \in[k]$\\
	3. For each $i \in[k]$, do the following:\\
	(a) If $c_i>t$, abort. Otherwise increment $c_i$ and compute $z_i=\prover_{\Sigma,2}\left(\operatorname{st}_i, c_i\right)$\\
	(b) If $H\left(\boldsymbol{a}, i, c_i, z_i\right) \neq 0^{\ell}$, repeat Step 3a\\
	4. Output $\pi=(\mathbf a, \mathbf c, \mathbf z)$, where $\mathbf c=\left(c_i\right)_{i \in[k]}$ and $\mathbf z$ are defined analogously.

	\noindent$\mathcal{V}^{\Fis[\Sigma]}(x, \pi):$\\
	1. Parse $(\mathbf a, \mathbf c, \mathbf z)=\pi$\\
	2. For each $i \in[k]$, verify that $H\left(\boldsymbol{a}, i, c_i, z_i\right)=0^{\ell}$ and $\mathcal{V}_{\Sigma}\left(x,\left(a_i, c_i, z_i\right)\right)=1$, aborting with output 0 if not\\
	3. Accept by outputting 1
\end{definition}
For simplicity, we set the parameter $t$ equal to the size of the challenge space of the underlying \sigp for the remainder of this article.

\subsection{Quantum Random Oracle (QROM) and Compressed Oracle Technique}\label{subsection: QROM}

In this section we summarize the mathematical tools and notational conventions used throughout the paper.
We review basic facts about Hilbert spaces and operators, introduce the computational and Fourier bases for finite Abelian groups, and recall the standard encoding of functions into quantum oracles.
These foundations are needed to formalize the compressed oracle technique and to analyze the behavior of quantum adversaries making oracle queries.

\textbf{Hilbert Spaces and Operators.}
We work over finite-dimensional complex Hilbert spaces. Unless stated otherwise, $\mathcal{H}=\mathbb{C}^d$ for some $d$, with the usual bra–ket notation. Linear operators between spaces $\mathcal{H}$ and $\mathcal{H}'$ are denoted $\mathcal{L}(\mathcal{H},\mathcal{H}')$, with $\mathcal{L}(\mathcal{H})$ for endomorphisms.
A (pure) quantum state is a unit vector $|\psi\rangle\in\mathcal{H}$.

\textbf{Norms.}
For $A\in\mathcal{L}(\mathcal{H},\mathcal{H}')$, $\|A\|$ denotes the operator norm. If $\mathcal{H}=\bigoplus_{i=1}^m \mathcal{H}_i$ and $A$ acts as $B_i$ on each $\mathcal{H}_i$, then $\|A\|=\max_i\|B_i\|$. When $A$ is defined only on a subspace, it can be extended by zero outside without ambiguity in its norm.

Any finite set  $\mathcal{Y}$ defines an associated quantum register with state space $\mathbb{C}\mathcal Y$ computational basis  $\{|y\rangle\}_{y\in\mathcal{Y}}$.
We will consider the extension $\overline{\mathcal{Y}}=\mathcal{Y} \cup\{\perp\}$
and the superspace $\mathbb C\overline{\mathcal{Y}}\supset \mathbb{C}\mathcal Y$.

We will denote the function table stored in a quantum register by $|H\rangle=\bigotimes_x|H(x)\rangle\in \left(\mathbb C\mathcal Y\right)^{\otimes \mathcal X}$.We implicitly consider the different registers to be labeled by $x \in \mathcal{X}$ in the obvious way.

\textbf{Compressed Oracle Technique.} The compressed oracle formalism introduced by Zhandry \cite{C:Zhandry19} is a way to simulate a quantum-accessible random oracle in a way that provides additional features for reductions. Instead of storing the full random function, we maintain a \emph{compressed database} $D$ that records only the points queried so far and their outputs. This suffices for simulating any adversary's interaction while simplifying the analysis.

Instead of a uniformly random choice of function, consider a superposition $\frac{1}{\sqrt{|\mathfrak H|}}\sum_{H}|H\rangle $, where $\mathfrak{H}$ is the set of functions $H$ is sampled uniformly from. This is the purified oracle which is indistinguishable from the original random oracle for any (quantum) query algorithm since the queries commute with measuring the superposition. The initial state of the oracle registers is given by

\begin{equation}\label{oracle init state eq}
	\left|\Pi_{0}\right\rangle=\frac{1}{\sqrt{|\mathfrak H|}}\sum_H|H\rangle=\frac{1}{\sqrt{|\mathfrak H|}}\bigotimes_{x}\left(\sum_{y}|y\rangle_{D_x}\right)\eqqcolon\bigotimes_{x}|+_{\mathcal Y}\rangle_{D_x}.\end{equation}

\textbf{Compressed Oracle.} The Compressed Oracle\footnote{More precisely we are using the \emph{pre-compressed} oracle, the compressed oracle is obtained from the pre-compressed oracle via the standard sparse representation} is now obtained by re-defining each local register $D_x$ to have state space $\mathbb C\overline{\mathcal{Y}}$, embedding the superposition oracle into it, and applying the basis change unitary

\begin{equation}\label{Comp}
	\comp=|\perp\rangle\langle+^{\ell}|+|+^{\ell}\rangle\langle\perp|+(\mathbb{1}-|+^{\ell}\rangle\langle+^{\ell}|-|\perp\rangle\langle\perp|).
\end{equation}
This compression operator acts on a single subregister of the compressed oracle database register. As for any operator, we denote by $\comp_{D_x}$ the compression operator applied to a particular register $D_x$. For the compression operator, we additionally abuse notation slightly by writing
\begin{align*}
	\comp_{D_S}=\bigotimes_{x\in S}\comp_{D_x}
\end{align*}
and thus also $\comp_D=\left(\comp^{\otimes |\mathcal X|}\right)_D$.

The compression operator Comp${}_D:=\bigotimes_{x}$ Comp $_{D_x}$ maps $\left|\Pi_{0}\right\rangle$ to

\begin{align*}
	\left|\Delta_{0}\right\rangle & :=\comp\left|\Pi_{0}\right\rangle=\left(\bigotimes_{x} \comp_{D_x}\right)\ket{+_{\mathcal Y}^{|\mathcal X|}} \\
	                              & =\bigotimes_{x} \comp_{D_x}|+_{\mathcal Y}\rangle=\bigotimes_{x}|\perp\rangle=|\perp^{|\mathcal X|}\rangle,
\end{align*}
which is the quantum representation of the trivial database that maps any $x \in \mathcal{X}$ to $\perp$.

As a consequence, the internal state of the compressed oracle after $q$ queries is supported by computational basis states $|D\rangle$
for which $D(x)=\perp$ (respectively $\hat{D}(x)=\perp$ ) for all but (at most) $q$ choices of $x$. We call the number of registers that are in a state other than $\ket\bot$ the \emph{size} of the database.

An oracle query to the superposition oracle is implemented using the unitary operator  $O$  given by \begin{equation}\label{Oracle Query}
	O:\ket{\mathbf{x}}_X|y\rangle_Y \otimes|H\rangle_D \mapsto\ket{\mathbf{x}}_X|y\oplus H(x)\rangle_Y \otimes|H\rangle_D.
\end{equation}

When the compressed oracle is queried, a unitary $O_{X Y D}$, acting on the query registers $X$ and $Y$ and the oracle register $D$, is applied, given by
$$
	O_{X Y D}=\sum_x|x\rangle\left\langle\left. x\right|_X \otimes O_{Y D_x}^x,\right.
$$
with
$$
	O_{Y D_x}^x=\comp_{D_x} \mathrm{CNOT}_{Y D_x} \comp_{D_x}
$$
where $\operatorname{CNOT}_{Y D_x}|y\rangle\left|y_x\right\rangle=\left|y \oplus y_x\right\rangle\left|y_x\right\rangle$ for $y, y_x \in\{0,1\}^{\ell}$ and acts as identity on $|y\rangle|\perp\rangle$

\textbf{Query algorithms.}
A quantum algorithm interacting with a compressed oracle works as follows. It has registers $XYZ\outputreg$, where $X$ is the query input, $Y$ the query output, $Z$ is the local workspace of the algorithm and \outputreg is the register that is measured to produce the final output. $D$ is the compressed oracle database register that is inaccessible to the algorithm except via oracle queries.

Let $|\phi_i\rangle$ denote the adversary's state before the $(i+1)$-th query, and $|\phi_i'\rangle$ the state immediately after that query, \[
	|\phi_i'\rangle = O_{XYD}|\phi_i\rangle.
\]
The adversary now applies a unitary $U$\footnote{WLOG, this unitary does not depend on $i$: We can include a counter register $Z_N$ in $Z$ which keeps track of the number of queries that have been applied thus far, and set $U$ as a controlled unitary controlled on $Z_N$ and increment $Z_N$.} on $XYZ$ to obtain
\[
	|\phi_{i+1}\rangle = U_{XYZ\outputreg}|\phi_i'\rangle.
\]

Thus, after $q$ queries, the overall computation evolves as:
\begin{enumerate}
	\item Start in some initial state $|\phi_0\rangle$.
	\item For each $1\leq i\leq q$:
	      \begin{enumerate}
		      \item Apply the oracle $O_{XYD}$ to obtain $|\phi_i'\rangle$.
		      \item Apply $U_{XYZ\outputreg}$ to obtain $|\phi_i\rangle$.
	      \end{enumerate}
	\item At the end, perform a computational basis measurement of register $\outputreg$.
\end{enumerate}

\paragraph{Reprogrammming in the QROM.}

We need an adaptive reprogramming lemma from  \cite{AC:GHHM21}.

Let $\mathcal X, \mathcal X', \mathcal Y$ be finite sets. Consider the game
$\texttt{Repro}_b$, where $b \in \{0,1\}$:
\begin{enumerate}
	\item Sample an initial random oracle $O_0 : \mathcal X  \to \mathcal Y$.
	\item Define $O_1$ to be $O_0$, but with access to a
	      \textsf{Reprogram} oracle:
	      on input a probability distribution $p$ on $\mathcal X\times \mathcal X'$, it samples $(x,x')\gets p$ and $y\gets \mathcal Y$, sets $O_1(x) := y$, and returns $(x,x')$.
	\item An adversary $A$ is given oracle access to $O_b$ (quantum) and
	      to \textsf{Reprogram}, and outputs a bit $b'$.
\end{enumerate}
The distinguishing advantage is
\[
	\left| \Pr[\texttt{Repro}^{A}_{1} \implies 1] -
	\Pr[\texttt{Repro}^{A}_{0} \implies 1] \right|.
\]
Let $p_X$ be the marginal of $p$
on $X$. Define
\[
	p^{(r)}_{\max} := \mathbb{E}\left[ \max_{x} p^{(r)}_X(x) \right],
\]
where $p^{(r)}$ is the adversary's $r$-th input to the $\mathsf{Reprogram}$ oracle, and the expectation is over the adversary's randomness up to
round $r$.

\begin{lemma}\label{lem:adarep}[Simplified from Theorem~1 in \cite{AC:GHHM21}]
	For any adversary $D$ making $R$ reprogramming calls and $q$ quantum
	queries in total,
	\[
		\left| \Pr[\texttt{Repro}^{D}_{1} \implies 1] -
		\Pr[\texttt{Repro}^{D}_{0} \implies 1] \right|
		\leq \sum_{r=1}^{R}
		\left( \sqrt{q \cdot p^{(r)}_{\max}}
		+ \tfrac{1}{2} q\cdot p^{(r)}_{\max}
		\right).
	\]
\end{lemma}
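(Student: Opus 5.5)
This lemma is a specialization of Theorem~1 in \cite{AC:GHHM21}. The most economical proof is to instantiate that theorem and check that our setting matches its hypotheses. The general theorem allows the reprogramming distribution to output extra side information $x'$ and to reprogram several points at once. Our game reprograms a single point $x$ per call and returns $(x,x')$ to the adversary, so it is covered by the single-point case. The only thing to verify is that the error term there, stated via the maximal marginal probability of the reprogrammed position, coincides with $p^{(r)}_{\max}$ as defined above. Since $x'$ is handed to the adversary, it acts as classical side information and does not enlarge the bound. Still, I would also outline a self-contained argument, as follows.

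\emph{Hybrid argument.} Define hybrids $\mathsf H_0,\dots,\mathsf H_R$, where in $\mathsf H_j$ the first $j$ calls to $\mathsf{Reprogram}$ really reprogram and the remaining calls only sample and return $(x,x')$. Then $\mathsf H_0=\texttt{Repro}_0$ and $\mathsf H_R=\texttt{Repro}_1$. By the triangle inequality it suffices to bound the distance between $\mathsf H_{r-1}$ and $\mathsf H_r$ by $\sqrt{q\,p^{(r)}_{\max}}+\tfrac12 q\,p^{(r)}_{\max}$.

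\emph{Single reprogramming step.} Purify the oracle as in \eqref{oracle init state eq}. Reprogramming $x$ to a fresh uniform $y$ is then equivalent to swapping the register $D_x$ with a fresh register in state $|+_{\mathcal Y}\rangle$. After the swap, both hybrids treat the fresh register identically, so the two final states can differ only through the correlation between the adversary's state and $D_x$ that was built up by queries before round $r$. For each $x$, let $w_x$ denote the total query weight on $x$, i.e.\ the sum over the $q$ queries of the squared amplitude on input $x$; these weights satisfy $\sum_x w_x\le q$. A standard computation of the overlap between the state before and after the swap then bounds the conditional trace distance for fixed $x$ by $\sqrt{w_x}+\tfrac12 w_x$. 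The first term is the amplitude-level disturbance and the second is the second-order fidelity correction. Averaging over $x\sim p^{(r)}_X$ gives
\begin{align*}
\sum_x p^{(r)}_X(x)\,w_x\le \max_x p^{(r)}_X(x)\cdot q .
\end{align*}
Taking the expectation over the adversary's randomness up to round $r$ and applying Jensen's inequality to the concave square root then yields $\sqrt{q\,p^{(r)}_{\max}}+\tfrac12 q\,p^{(r)}_{\max}$.

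\emph{Main obstacle.} The delicate step is the single-swap estimate under adaptivity. The distribution $p^{(r)}$ may depend on the adversary's earlier measurements, and the adversary's later queries may interact with the fresh register. Both issues are handled by conditioning on the classical transcript up to round $r$. With that conditioning, the only remaining work is to verify that queries made after the swap contribute nothing to the distance and that the overlap computation gives exactly the constants stated. This is precisely what \cite{AC:GHHM21} carries out, so in the paper I would simply cite their Theorem~1 for it.
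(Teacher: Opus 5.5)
Your proposal is correct and matches the paper, which also offers no independent proof and simply specializes Theorem~1 of \cite{AC:GHHM21}; your self-contained outline (hybrids over reprogramming rounds, swapping $D_x$ with a fresh $\ket{+_{\mathcal Y}}$ register, the total-weight bound $q$, and Jensen) faithfully sketches the proof in that reference. One caveat if you flesh it out: an overlap or fidelity estimate between the states before and after the swap only gives $\sqrt{2\varepsilon_x-\varepsilon_x^2}$, whereas the stated $\sqrt{\varepsilon_x}+\tfrac12\varepsilon_x$ needs a direct gentle-measurement trace-norm bound on $\proj{\psi}-\mathrm{Tr}_{D_x}\!\left(\proj{\psi}\right)\otimes\proj{+_{\mathcal Y}}$, and under adaptivity $\varepsilon_x$ is best defined as the transcript-conditioned probability that $D_x\neq\bot$ in the compressed database (whose sum over $x$ is at most $q$ for every transcript) rather than as a query weight.
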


\paragraph{Quantum union bound}
We need a lemma referred to as the quantum union bound.
\begin{lemma}[Theorem 1.1 in \cite{o2022quantum}]\label{lem:q-union}
	Let $\ket\psi\in\mathcal H$ be a quantum state and let $P_i$, $i=1,...,N$ be projectors on $\mathcal H$ such that $\bra\psi P_i\ket\psi\ge 1-\varepsilon_i$. Then
	\begin{align*}
		\left\|\left(\prod_{i=1}^NP_i\right)\ket\psi\right\|^2\ge 1-4\sum_{i=1}^N\varepsilon_i.
	\end{align*}
\end{lemma}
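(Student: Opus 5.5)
The statement is the quantum union bound of \cref{lem:q-union}, and I would prove it directly by a telescoping argument over the sequence of partially projected states, closed off with a single Cauchy--Schwarz step. Since the product $\prod_{i=1}^N P_i$ has to be read in some order, I fix the convention that $P_1$ is applied first; the argument is identical for any order after relabeling. Assume $\|\ket\psi\|=1$.

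Define the unnormalized vectors $\ket{\psi_0}=\ket\psi$ and $\ket{\psi_j}=P_j\ket{\psi_{j-1}}$ for $j=1,\dots,N$, so that $\ket{\psi_N}=\left(\prod_{i}P_i\right)\ket\psi$. Write $\varepsilon_j'=\|(\mathbb 1-P_j)\ket\psi\|^2=1-\bra\psi P_j\ket\psi\le\varepsilon_j$, and set $E=\sum_j\varepsilon_j$.

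\textbf{Step 1: telescoping the loss of norm.} Define the per-step losses $\delta_j=\|\psi_{j-1}\|^2-\|\psi_j\|^2$. Because $P_j$ is an orthogonal projector, $\delta_j=\|(\mathbb 1-P_j)\ket{\psi_{j-1}}\|^2$. Summing gives
\begin{align*}
S:=1-\|\psi_N\|^2=\sum_{j=1}^N\delta_j .
\end{align*}
The goal is therefore $S\le 4E$.

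\textbf{Step 2: linearizing through an overlap.} Each projection only shrinks norms, so $\|\psi_N\|\le 1$. This gives $S=(1-\|\psi_N\|)(1+\|\psi_N\|)\le 2(1-\|\psi_N\|)$. Since also $\mathrm{Re}\braket{\psi|\psi_N}\le\|\psi_N\|$, we get
\begin{align*}
S\le 2\left(1-\mathrm{Re}\braket{\psi|\psi_N}\right).
\end{align*}

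\textbf{Step 3: bounding the overlap term by term.} Telescoping the vectors themselves gives $\ket\psi-\ket{\psi_N}=\sum_j(\mathbb 1-P_j)\ket{\psi_{j-1}}$. Hence
\begin{align*}
1-\mathrm{Re}\braket{\psi|\psi_N}=\sum_j\mathrm{Re}\bra\psi(\mathbb 1-P_j)\ket{\psi_{j-1}}=\sum_j\mathrm{Re}\bra\psi(\mathbb 1-P_j)(\mathbb 1-P_j)\ket{\psi_{j-1}},
\end{align*}
where the second equality uses idempotence of $\mathbb 1-P_j$. Applying Cauchy--Schwarz first to each term and then to the sum over $j$ gives
\begin{align*}
1-\mathrm{Re}\braket{\psi|\psi_N}\le\sum_j\sqrt{\varepsilon_j'}\sqrt{\delta_j}\le\sqrt{E}\sqrt{S}.
\end{align*}

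\textbf{Step 4: closing the self-bounding inequality.} Combining Steps 2 and 3 yields $S\le 2\sqrt{ES}$. If $S=0$ there is nothing to prove; otherwise dividing by $\sqrt S$ gives $\sqrt S\le 2\sqrt E$, i.e.\ $S\le 4E$. This is $\|\left(\prod_iP_i\right)\ket\psi\|^2\ge 1-4\sum_i\varepsilon_i$, as claimed.

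\textbf{Main obstacle.} The delicate point is obtaining a bound linear in $\sum_i\varepsilon_i$ rather than the naive $(\sum_i\sqrt{\varepsilon_i})^2$. The naive bound is what one gets by applying the triangle inequality to $\|\psi-\psi_N\|$ directly. The fix is to pass to the overlap $\mathrm{Re}\braket{\psi|\psi_N}$ in Step 2, where each summand pairs one $\sqrt{\varepsilon_j}$ with one $\sqrt{\delta_j}$. The lost norm $S$ then reappears on both sides and is bounded self-consistently in Step 4.
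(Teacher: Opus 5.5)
Your proof is correct. Each step checks out:
\begin{itemize}
\item the Pythagorean telescoping $S=\sum_j\|(\mathbb 1-P_j)\ket{\psi_{j-1}}\|^2$ holds;
\item the linearization $S\le 2\bigl(1-\mathrm{Re}\braket{\psi|\psi_N}\bigr)$ needs only $\|\psi_N\|\le 1$;
\item the vector telescoping, idempotence of $\mathbb 1-P_j$, and two Cauchy--Schwarz steps give $1-\mathrm{Re}\braket{\psi|\psi_N}\le\sqrt{ES}$;
\item the self-bounding inequality $S\le 2\sqrt{ES}$ yields exactly the constant $4$.
\end{itemize}

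The paper gives no proof of its own here. It simply imports Theorem~1.1 of O'Donnell--Venkateswaran, so there is no alternative route to compare against. Your argument makes the pure-state version that the paper actually uses self-contained.

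Your ordering convention ($P_1$ applied first) is also the one needed in the proof of \cref{thm:genprover}. There, with $P_i=\bigl((O^H)^\dagger U^\dagger\bigr)^{i-1}(\mathbb 1-\proj{\hat{\mathbf a}})(UO^H)^{i-1}$, one has $P_{q'}\cdots P_1=(UO^H)^{-q'}\bigl(UO^H(\mathbb 1-\proj{\hat{\mathbf a}})\bigr)^{q'}$. The norm your lemma controls is therefore exactly the one appearing in that proof.
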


\paragraph{Chernoff Bounds.}

The following lemma is taken from combining Theorem 4.4 and 4.5 from \cite{MitzenmacherUpfal05}.
\begin{lemma}[Chernoff Bounds]
	\label{chernoff definition theorem}
	Let $X_1, \ldots, X_n$ be independent Bernoulli variables  such that
	$\Pr(X_i = 1) = p_i$.
	Let
	\[
		X = \sum_{i=1}^n X_i
		\quad \text{and} \quad
		\mu = \mathbb{E}[X].
	\]
	Then the following Chernoff bounds hold:
	\begin{enumerate}

		\item (\textbf{Upper tail}
		      ) For $0 < \delta \le 1$,
		      \[
			      \Pr\!\left(X \ge (1+\delta)\mu\right)
			      \le
			      e^{-\mu \delta^2 / 3}.
			      \tag{4.2}
		      \]

		\item (\textbf{Lower tail}) For $0 < \delta \le 1$,
		      \[
			      \Pr\!\left(X \le (1-\delta)\mu\right)
			      \le
			      e^{-\mu \delta^2 / 2}.
			      \tag{4.4}
		      \]
	\end{enumerate}
\end{lemma}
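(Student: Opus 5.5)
We follow the standard moment-generating-function argument from \cite{MitzenmacherUpfal05}. For any $t>0$ (upper tail), apply Markov's inequality to the nonnegative random variable $e^{tX}$:
\begin{align*}
\Pr\!\left(X\ge (1+\delta)\mu\right)=\Pr\!\left(e^{tX}\ge e^{t(1+\delta)\mu}\right)\le \frac{\mathbb E[e^{tX}]}{e^{t(1+\delta)\mu}}.
\end{align*}
By independence of the $X_i$, $\mathbb E[e^{tX}]=\prod_{i=1}^n \mathbb E[e^{tX_i}]=\prod_{i=1}^n\left(1+p_i(e^t-1)\right)$. Using $1+y\le e^{y}$ on each factor gives $\mathbb E[e^{tX}]\le \exp\!\left((e^t-1)\mu\right)$, since $\sum_i p_i=\mu$. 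Choosing $t=\ln(1+\delta)>0$ yields
\begin{align*}
\Pr\!\left(X\ge (1+\delta)\mu\right)\le \left(\frac{e^{\delta}}{(1+\delta)^{1+\delta}}\right)^{\mu}.
\end{align*}
For the lower tail we proceed symmetrically with $t<0$, i.e.\ apply Markov to $e^{-sX}$ with $s>0$. This gives $\Pr(X\le(1-\delta)\mu)\le \exp\!\left((e^{-s}-1)\mu+s(1-\delta)\mu\right)$, and the choice $s=-\ln(1-\delta)$ (for $\delta<1$) gives the bound $\left(e^{-\delta}/(1-\delta)^{1-\delta}\right)^{\mu}$. The case $\delta=1$ follows directly: $\Pr(X\le 0)=\prod_i(1-p_i)\le e^{-\mu}\le e^{-\mu/2}$.

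It remains to simplify the two closed-form bounds into the stated exponentials. This elementary calculus is the only step requiring care. For the upper tail, we must show $f(\delta):=\delta-(1+\delta)\ln(1+\delta)+\delta^2/3\le 0$ on $(0,1]$. We have $f(0)=0$, $f'(\delta)=-\ln(1+\delta)+\tfrac{2}{3}\delta$ and $f''(\delta)=-\tfrac{1}{1+\delta}+\tfrac23$. Hence $f''<0$ on $[0,1/2)$ and $f''>0$ on $(1/2,1]$, so $f'$ first decreases and then increases. Since $f'(0)=0$ and $f'(1)=\tfrac23-\ln 2<0$, we get $f'\le 0$ on $[0,1]$, and therefore $f\le 0$. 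For the lower tail, we show $g(\delta):=-\delta-(1-\delta)\ln(1-\delta)+\delta^2/2\le 0$ on $(0,1)$. Here $g(0)=0$, $g'(\delta)=\ln(1-\delta)+\delta\le 0$ (from $\ln(1-\delta)\le-\delta$), so $g$ is nonincreasing and hence nonpositive. Exponentiating with the factor $\mu$ gives $e^{-\mu\delta^2/3}$ and $e^{-\mu\delta^2/2}$ respectively.

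The main (minor) obstacle is the sign analysis of $f'$ in the upper-tail simplification. One must check the endpoint value $f'(1)<0$ together with the convexity change at $\delta=1/2$; this is exactly why the upper bound is restricted to $\delta\le 1$ and carries the constant $3$ rather than $2$. Everything else is a routine combination of Markov's inequality, independence, and $1+y\le e^y$.
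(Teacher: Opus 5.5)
Your proof is correct and is essentially the paper's approach: the paper gives no argument of its own and simply cites Theorems 4.4 and 4.5 of \cite{MitzenmacherUpfal05}, whose proof is exactly this moment-generating-function argument followed by the calculus simplification. Your separate treatment of $\delta=1$ in the lower tail, via $\Pr(X\le 0)=\prod_i(1-p_i)\le e^{-\mu}$, is a small genuine addition, since the cited Theorem 4.5 is stated only for $0<\delta<1$ while the lemma claims $0<\delta\le 1$.
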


The random variables $X_i$ defined by sequential measurements form a martingale difference sequence rather than independent trials. We therefore need to apply the following martingale Chernoff bound derived from Azuma-Hoeffding Inequality below, from Theorem 12.6 in \cite{MitzenmacherUpfal05}
\begin{lemma}[Azuma--Hoeffding Inequality]\label{azuma definition}
	Let $(Y_0, Y_1, \ldots, Y_n)$ be a martingale such that for all $i = 1,\ldots,n$,
	\[
		|Y_i - Y_{i-1}| \le c_i \quad \text{almost surely}.
	\]
	Then for any $t > 0$,
	\[
		\Pr\!\left[\, Y_n - Y_0 \ge t \,\right]
		\;\le\;
		\exp\!\left(
		- \frac{t^2}{2 \sum_{i=1}^n c_i^2}
		\right).
	\]

\end{lemma}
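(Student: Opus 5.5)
We prove the Azuma--Hoeffding inequality by the standard exponential-moment (Chernoff) method, applied to the martingale differences $X_i := Y_i - Y_{i-1}$ and using the tower property of conditional expectation. Let $\mathcal F_{i}$ denote the $\sigma$-algebra generated by $Y_0,\dots,Y_{i}$ (or the natural filtration of the martingale). The martingale property gives $\mathbb E[X_i\mid\mathcal F_{i-1}]=0$, and by hypothesis $|X_i|\le c_i$ almost surely.

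The first step is to fix a parameter $s>0$ and apply Markov's inequality to the nonnegative variable $e^{s(Y_n-Y_0)}$:
\begin{align*}
\Pr[Y_n-Y_0\ge t]\le e^{-st}\,\mathbb E\!\left[e^{s\sum_{i=1}^n X_i}\right].
\end{align*}
The second step is to bound this moment generating function by peeling off one increment at a time. Conditioning on $\mathcal F_{n-1}$, the factor $e^{s\sum_{i<n}X_i}$ is measurable, so
\begin{align*}
\mathbb E\!\left[e^{s\sum_{i\le n}X_i}\right]=\mathbb E\!\left[e^{s\sum_{i<n}X_i}\,\mathbb E\!\left[e^{sX_n}\mid\mathcal F_{n-1}\right]\right].
\end{align*}
Iterating this $n$ times reduces everything to a single-step estimate.

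The key single-step estimate is a conditional form of Hoeffding's lemma: if $\mathbb E[X\mid\mathcal F]=0$ and $|X|\le c$, then $\mathbb E[e^{sX}\mid\mathcal F]\le e^{s^2c^2/2}$. We obtain it from convexity of $x\mapsto e^{sx}$ on $[-c,c]$. Writing
\begin{align*}
e^{sx}\le \tfrac{c-x}{2c}e^{-sc}+\tfrac{c+x}{2c}e^{sc}
\end{align*}
and taking the conditional expectation, where the linear term vanishes, gives $\mathbb E[e^{sX}\mid\mathcal F]\le\cosh(sc)$. We then use $\cosh(u)\le e^{u^2/2}$, which follows by comparing Taylor series termwise, since $(2k)!\ge 2^k k!$.

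Combining the steps yields $\mathbb E[e^{s(Y_n-Y_0)}]\le \exp\bigl(\tfrac{s^2}{2}\sum_i c_i^2\bigr)$, and hence
\begin{align*}
\Pr[Y_n-Y_0\ge t]\le \exp\!\left(-st+\tfrac{s^2}{2}\textstyle\sum_i c_i^2\right).
\end{align*}
Choosing $s=t/\sum_i c_i^2$ gives the claimed bound $\exp\bigl(-t^2/(2\sum_i c_i^2)\bigr)$.

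The only genuinely delicate point is the conditional Hoeffding step. One must make sure the convexity bound is applied pointwise, so that it survives conditioning, and that boundedness holds almost surely, so that $X_i\in[-c_i,c_i]$ on the whole support. The rest is routine bookkeeping with the tower property.
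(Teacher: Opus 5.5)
Your argument is correct and is the standard proof (Markov on $e^{s(Y_n-Y_0)}$, peeling increments with the tower property, the convexity bound $\mathbb E[e^{sX_i}\mid\mathcal F_{i-1}]\le\cosh(sc_i)\le e^{s^2c_i^2/2}$, and choosing $s=t/\sum_i c_i^2$); this is exactly the Mitzenmacher--Upfal proof that the paper cites instead of reproving. The only detail to add is the degenerate case $c_i=0$: there the interpolation divides by zero, but $X_i=0$ almost surely, so the one-step bound holds trivially.
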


\section{Proof of Post Quantum Security of the Fischlin Transform}
We are now set to prove that the Fischlin transform is a NIZK (\cref{Nizk defn}) with straight-line extractability (\cref{def:PoKOnline}) and zero knowledge in Quantum Random Oracle Model.

\subsubsection{Proof Intuition.}
\label{sec:intuition}
We begin by describing the proof strategy for the deterministic prover case
(Section~\ref{subsec: Extactability -- Deterministic Commitments}), which contains the key ideas.

\paragraph{The counting argument and why independence breaks.}
The high-level strategy is a counting argument by contradiction. A valid
Fischlin proof requires $k$ hash outputs to equal $0^\ell$, an event that
occurs independently with probability $2^{-\ell}$ per query. By a standard
Chernoff bound (\cref{chernoff definition theorem}), the total number of $0^\ell$
outcomes in the oracle database concentrates tightly around its mean
$\mu = 2^{-\ell} kN$, so seeing significantly more than $\mu$ registers in
state $|0^\ell\rangle$ is exponentially unlikely (\cref{lem:simple-tail-bound}).
The plan is therefore to show that if the prover succeeds \emph{and} the
extractor fails, the database must contain \emph{many more} than $\mu$
registers in state $|0^\ell\rangle$, contradicting this tail bound.

The difficulty is that conditioning on both events simultaneously destroys
independence. The prover-success projector $P_s$ forces the $k$ oracle
registers corresponding to the verified transcripts into the state $|0^\ell\rangle$.
The extractor-failure projector $E_f$ then further constrains the
\emph{remaining} registers, correlating them with the verified ones. Once
both projectors have been applied, the oracle outputs stored in the database registers are no longer
independent, and the Chernoff bound used in \cref{lem:simple-tail-bound} does not
apply. A new strategy is needed to lower-bound the number of $0^\ell$
outcomes in this correlated state.

\paragraph{Decomposition, symmetrization, and the martingale.}
The resolution proceeds in three steps. First, we \emph{decompose} the
post-success state. When the success projector forces $k$ registers to
$|0^\ell\rangle$ and we switch to the compressed basis, the term
$(\mathrm{Comp}|0^\ell\rangle)^{\otimes k}$ is not simply $|0^\ell\rangle^{\otimes k}$;
it is a superposition that Lemma~\ref{lem:comp 0^k} decomposes by the
\emph{subset $S$ of registers that carry $|0^\ell\rangle$}:
\[
	(\mathrm{Comp}|0^\ell\rangle)^{\otimes k}
	\;=\;
	\sum_{\substack{S \subset [k] \\ |S| \ge (1-\gamma)k}}
	|0^\ell\rangle^{\otimes |S|}_S \,|\Gamma^{(|S|)}\rangle_{S^c}
	\;+\; |\delta_{\gamma,k}\rangle,
\]
where the tail $|\delta_{\gamma,k}\rangle$ is small. This isolates the
``bulk'' of the state, in which at least $(1-\gamma)k$ of the $k$ registers
carry $|0^\ell\rangle$ in the compressed picture, from a negligible
remainder. Crucially, applying $E_f$ to this decomposed state forces most of
the \emph{remaining} $m = k(N-1)$ database registers (those outside the
verified transcript) into the compressed initial state $|\perp\rangle$; and
since $\mathrm{Comp}|\perp\rangle = |{+^\ell}\rangle$, each such register
contributes a $0^\ell$ outcome with probability $2^{-\ell}$ when measured.

Second, we \emph{symmetrize} the resulting state over permutations of these
$m$ registers (Equation~\eqref{eq:sym}). The post-conditioning state is
not independent across registers, so we cannot apply Chernoff directly.
Symmetrization is the substitute: it ensures the state lies in the subspace
$W^m_n$ (where at least $n$ of the $m$ registers are in state
$|{+^\ell}\rangle$), which is closed under permutations. For any such
symmetric state, Lemma~\ref{lem:measure} gives a lower bound on the
probability of obtaining $0^\ell$ when measuring any single register;
and, crucially, the post-measurement state remains in the same class of
subspace $W^{m-1}_{n-1}$, so the bound applies recursively to each
subsequent register.

Third, this recursive structure defines a \emph{martingale}. Let $X_i$
indicate whether measuring the $i$-th register yields $0^\ell$, and let
$Z_i = \mathbb{E}[X_i \mid X_{<i}]$ be the conditional expectation, which
Lemma \ref{lem:measure} lower-bounds at each step. The sum $\hat{Z}_i =
	\sum_{j \le i}(Z_j - X_j)$ is a martingale with bounded differences
$|\hat{Z}_i - \hat{Z}_{i-1}| \le 1$, so the Azuma--Hoeffding inequality
(Lemma~\ref{azuma definition}) applies. Combined with the lower bound on the
cumulative mean $\mu' = \sum_i Z_i$ from Lemma~\ref{lem:cum-mean-bound}, this
gives: conditioned on prover success and extractor failure, the database
contains at least $\eta \approx \mu'$ registers in state $|0^\ell\rangle$
with high probability. Since $\eta > \mu$ under our parameter choices, this
contradicts the upper tail bound of Lemma~\ref{lem:simple-tail-bound}, completing the
argument.

The most involved part of the proof is now done. We lift the result to general malicious provers in \cref{sec:lift-to-general-prover}. This can be done by decomposing a general prover's final state according to when it has first queried an input starting with a fixed commitment vector. As the relevant part of the compressed oracle database is untouched before that query, we can \emph{post-select} on that query with input starting with a fixed commitment vector actually being made, without query cost. This allows the application of the result for deterministic provers to bound the probability of prover success and extractor failure \emph{conditioned} on a fixed commitment vector.

\subsubsection{Prover and Extractor Definitions.}We describe how the prover and extractor work in our analysis for QROM-extractability of the Fischlin Transform described in Definition \ref{Fischlin Protocol}. While it is obviously important for the security of the NIZK that not all transcripts are valid, we will only explicitly keep track of the part of the compressed oracle register that corresponds to transcripts that are accepted by the \sigp-verifier, as this turns out to be sufficient for extractability.

\begin{definition}[Extractor ${\Ext}_{\mathsf{Fis }}$]\label{Fischlin Extractor}
	Let $\Sigma$ be a \sigp. The quantum \\straight-line extractor for the Fischlin transform is given the statement $x$ and simulates the quantum random oracle using a compressed oracle. After the adversary has finished, it receives the proof $\pi$ output by the adversary.

	The extractor works as follows.\\
	$\Ext_{\text {Fis }}(x)$ :\\
	1. The prover receives oracle access to a compressed oracle with database $D$ and produces a proof $\pi$. \\
	2. Verify the proof. If verification fails, abort.\\
	3. Measure  the compressed oracle database register. Let $((\boldsymbol{a}, i, c, z),(\boldsymbol{a}, i, c', z'))$ be the lexicographically first pair of inputs in the measured database such that $(c, z) \neq\left(c', z'\right)$, and \mbox{$\mathcal{V}_{\Sigma}\left(x, a_i, c, z\right)=\mathcal{V}_{\Sigma}\left(x, a_i, c', z'\right)=1$}. If no such query is found, abort.\\
	4. Output $\operatorname{Ext}_{\mathrm{ss}}\left(x,a_i, c, z, c',z'\right)$.
\end{definition}

\subsection{Extractability -- Deterministic Commitments}\label{subsec: Extactability -- Deterministic Commitments}
We first prove that the extractor will succeed in extracting a witness whenever a prover outputs a valid proof \emph{that starts with a fixed vector of commitments}. We will later lift that result to arbitrary provers (see \cref{sec:lift-to-general-prover}).

In this section, we consider therefore a modified NIZK $\Pi_{\mathbf{Fis}}^{(\mathbf a_0)}$ parameterized by a commitment vector $\mathbf a_0$ where the verifier only accepts if the proof is valid \emph{and} the proof starts with the commitment vector $\mathbf a_0$. We can thus further restrict our attention to only the compressed oracle registers for inputs of the form $(\mathbf a, i, c, z)$ such that $\mathbf a=\mathbf a_0$, and $(a_i, c,z)$ is accepted by $\mathcal V_{\Sigma}$. As we assume the \sigp to have unique responses, any input is specified among the remaining set of inputs by the pair $(i,c)\in [k]\times \mathcal C$.

An arbitrary dishonest prover $\prover^*$ for the Fischlin transform is a quantum query algorithm as described in \cref{subsection: QROM}, with the output register $\outputreg$ having the right structure to hold a proof $\pi=(\mathbf a,\mathbf c,\mathbf z)$.
The joint state of the prover and the oracle in the uncompressed basis (i.e., in the superposition oracle picture, not the pre-compressed oracle) after the unitary part of a malicious prover has finished but $\outputreg$ has not been measured to produce the prover's output is denoted as

\begin{equation}\label{|phi>}
	|\phi\rangle=\sum_{\mathbf{x}, H}2^{-\frac{lk\challspacesize}{2}}\ket{\mathbf{x}}_\outputreg|\phi^{\mathbf x,H}\rangle_Z|{H}\rangle_D \end{equation}

Here, $D=(D_{(i,c)})_{i\in[k],c\in\mathcal C}$ and we subsume the malicious prover's query input and output registers as well as the remainder of the compressed oracle database in the register $Z$. The marginal of the register $D$ remains uniform (as the query operator is a controlled unitary with control register $D$).

To bound the probability for the event where the prover succeeds and the extractor fails, the analysis in the following sections is used.

To facilitate our statistical arguments we define the following projector.
$$
	P_{\eta}=\sum_{\bar\omega (H) \geq \eta}\proj{H}_D
$$
where $\bar\omega(H)$ is defined as
$\bar\omega(H)=\left|\left\{\mathbf x \mid H(\mathbf{x})= 0^{\ell}\right\}\right|$.

Applying the projector $\left(P_{\eta}\right)_{D}$ to the state $\ket{\phi}_{\outputreg ZD}$ yields a vector of small norm for $\eta$ significantly bigger than $2^{-l}k\challspacesize$ (where $\challspacesize=\Omega({2^{\ell}\log k})$ is the challenge space size) by a standard Chernoff bound.
\begin{lemma}\label{lem:simple-tail-bound}
	Let $\mu= 2^{-l}k\challspacesize$, $\delta>0$ and $\eta=(1+\delta)\mu$. We have the following probability bound \[
		||P_{\eta}|\phi\rangle||^2\le  \exp\left( -\frac{\delta^2 \mu}{3} \right)\eqqcolon\varepsilon''.
	\]
\end{lemma}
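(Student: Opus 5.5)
The plan is to reduce the quantum statement to a classical tail bound for the uniformly random function $H$ on the restricted domain $[k]\times\mathcal C$, and then apply the upper-tail Chernoff bound of \cref{chernoff definition theorem}.

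First, compute $\|P_\eta\ket\phi\|^2$ directly from \eqref{|phi>}. Since $P_\eta$ is diagonal in the computational basis of $D$, the vectors $\ket{H}_D$ for distinct $H$ are orthogonal. Hence
\begin{align*}
\|P_\eta\ket\phi\|^2=\sum_{H:\bar\omega(H)\ge\eta}2^{-\ell k\challspacesize}\Big\|\sum_{\mathbf x}\ket{\mathbf x}_\outputreg\ket{\phi^{\mathbf x,H}}_Z\Big\|^2 .
\end{align*}
As noted after \eqref{|phi>}, the reduced state on $D$ is the uniform mixture over $H$, because the prover's evolution is a unitary controlled on $D$. Therefore each inner norm equals $1$, and
\begin{align*}
\|P_\eta\ket\phi\|^2=\Pr_{H}\left[\bar\omega(H)\ge\eta\right],
\end{align*}
with $H$ uniform over functions $[k]\times\mathcal C\to\{0,1\}^\ell$. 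To make this rigorous, I would argue that the initial state is $\ket{\phi_0}\otimes\ket{+_{\mathcal Y}}^{\otimes}$ and that every oracle query and every prover unitary commutes with any operator diagonal in the $D$ basis. Thus measuring $D$ at the end gives the same distribution as measuring it at the start, which is uniform.

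Second, let $X_{(i,c)}$ be the indicator that $H(i,c)=0^\ell$. These are $k\challspacesize$ independent Bernoulli variables with parameter $2^{-\ell}$, and $\bar\omega(H)=\sum X_{(i,c)}$ has mean $\mu=2^{-\ell}k\challspacesize$. The upper tail of \cref{chernoff definition theorem} then gives
\begin{align*}
\Pr\left[\bar\omega(H)\ge(1+\delta)\mu\right]\le e^{-\delta^2\mu/3}
\end{align*}
for $0<\delta\le1$.

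The only real obstacle is that the lemma states $\delta>0$ without the restriction $\delta\le1$ required by the cited form. I would handle $\delta>1$ by monotonicity: $\Pr[X\ge(1+\delta)\mu]\le\Pr[X\ge 2\mu]\le e^{-\mu/3}$. This does not give $e^{-\delta^2\mu/3}$, so for $\delta>1$ I would instead invoke the general Chernoff form $\left(e^\delta/(1+\delta)^{1+\delta}\right)^\mu$ (Theorem 4.4(1) of \cite{MitzenmacherUpfal05}). Failing that, I would note that the lemma is only used with $\delta\le1$ and state that restriction explicitly.
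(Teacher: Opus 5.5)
Your proposal is correct and follows the same route as the paper. You identify $\|P_\eta\ket\phi\|^2$ with $\Pr_H[\bar\omega(H)\ge\eta]$ for uniform $H$ on $[k]\times\mathcal C$, then apply the upper-tail Chernoff bound to the $k\challspacesize$ independent indicators with parameter $2^{-\ell}$. Your justification of the first step (orthogonality of the $\ket H_D$, plus unit norm of each $D$-conditioned branch because the evolution is controlled on $D$) is cleaner than the paper's. The paper writes the norm as equal to $\mathfrak S$ itself, and it passes through the two-sided event $|\mathfrak S-\mu|\ge\delta\mu$, whose Chernoff bound would technically carry a factor~$2$.

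One correction concerns $\delta>1$. The general form $\bigl(e^\delta/(1+\delta)^{1+\delta}\bigr)^\mu$ implies $e^{-\delta^2\mu/3}$ only for $\delta\lesssim 1.8$. No other argument can do better. For large $\delta$ the binomial upper tail is genuinely of order $\exp\bigl(-\mu[(1+\delta)\ln(1+\delta)-\delta]\bigr)$ up to lower-order terms; for example, $\delta=10$ gives roughly $e^{-16\mu}$ rather than $e^{-33\mu}$. So the inequality as stated for all $\delta>0$ is false, except in the trivial range $(1+\delta)\mu>k\challspacesize$, where the probability is $0$.

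Your last fallback is therefore the right fix, and it is necessary rather than merely convenient: restrict to $0<\delta\le 1$. This costs nothing. In \cref{lem: negligiblity value lemma} one has $\delta\le k/(2\mu)=1/(2c\log k)\le 1/2$, using $k\ge 2^{1/c}$.
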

\begin{proof}
	Let $X_{i,c}$, $i\in[k], c\in\mathcal C$ be random variables defined as measurement results of the state $\ket\phi$ by
	\[
		X_{i,c}=\begin{cases}
			1 & \text{ if measuring register $D_{i,c}$ yields outcome $0^{\ell}$} \\
			0 & \text{ else}
		\end{cases},
	\]
	Clearly $||P_{\eta}|\phi\rangle||^2=\sum_{i\in[k], c\in\mathcal C}X_{i,c}\eqqcolon \mathfrak S$. Also, measuring the subregisters of $D$ in the computational basis yields independent uniformly random $l$-bit strings,\footnote{essentially by the correctness of the superposition oracle methodology} so $\Pr[X_{i,c}=1]=2^{-l}$ independently for all $(i,c)$.
	By a standard Chernoff bound, \cref{chernoff definition theorem}, we thus have \js{Notation $S_n$ clash with permutation group, this sum $S_n$ not defined before}\cm{I just used a new font now....}
	\begin{align*}\label{bound e''}
		\mathbb{P}\left[ (\mathfrak S\geq\eta=(1+\delta)\mu) \right]
		\le \mathbb{P}\left[ |\mathfrak S-\mu|\geq\delta\mu \right]\le \exp\left( -\frac{\delta^2 \mu}{3} \right).
	\end{align*}
	\qed
\end{proof}
Thinking classically for a moment, the proof strategy is via contradiction: a malicious prover succeeding \emph{and} the extractor failing with significant probability would imply that $||P_{\eta}|\phi\rangle||^2$ must in fact be large, contradicting the above tail bound. As we are dealing with a malicious quantum prover and with the compressed oracle that is directly manipulated (measured) by the extractor, the quantum analogue of this argument is significantly more involved.

Denote $\ket{\phi_{\mathrm{succ}}}=P_{s}|\phi\rangle$, where
\begin{equation}
	P_s=\sum_{\mathbf{x}} |\mathbf{x}\rangle\langle \mathbf{x}|_{O}\otimes |0^{\ell}\rangle\langle0^{\ell}|^{\otimes k}_{D_{\mathsf{hi}(\mathbf x)}}
\end{equation}
Is the success projector, i.e., the probability that the malicious prover succeeds is $\|\ket{\phi_{\mathrm{succ}}}\|^2$.
Here,
$D_{\mathsf{hi}(\mathbf{x})}$ is the subregister of $D$ containing the hash inputs needed to verify the proof $\mathbf{x}$, i.e.,  $\mathsf{hi}(\mathbf{x})=\left\{(\mathbf a,i,c_i,z_i)\mid i\in[k], V_{\Sigma}(a_i,c_i,z_i)=1\right\}$.

We now see the state of the oracle in the compressed picture. Writing $\mathbf x=(\mathbf{a},\mathbf{c},\mathbf{z})$, we apply $\comp$, \begin{equation}\label{phi'}
	\begin{aligned}
		 & |\phi'_{\mathrm{succ}}\rangle=\comp P_{s}|\phi\rangle =\sum_{\mathbf{x}}|\mathbf{x}\rangle_{O}|\phi^{\mathbf{x}}\rangle_{ZD_{\mathsf{hi}^c}}(\comp|0^{\ell}\rangle^{\otimes k})_{D_{\mathsf{hi}(x)}} \\
		 & =\sum_{\mathbf{x}}|\mathbf{x}\rangle_{O}|\phi^{\mathbf{x}}\rangle_{ZD_{\mathsf{hi}^c}}\left(\sum_{\substack{S \subset\mathsf{hi}(\mathbf x)                                                          \\
				s=|S| \geq (1-\gamma)k}}|0^{\ell}\rangle_{{D_S}}^{\otimes s}|\Gamma\rangle_{D_{\mathsf{hi}(\mathbf x)\setminus S}}+\ket{\delta_{\gamma, k}}_{D_{\mathsf{hi}(\mathbf x)}}\right),
	\end{aligned}
\end{equation}
where we have used \cref{lem:comp 0^k}, and $0<\gamma=\gamma(l,k)\leq\frac{1}{2}$ is chosen later in \cref{lem: negligiblity value lemma}..
\begin{lemma}\label{lem:comp 0^k}There exist sub-normalized states $\ket{\Gamma^{(s)}}$, $0\le s\le k$ such that
	$$(\comp|0^{\ell}\rangle)^{\otimes k}_D=\sum_{\substack{S \subset[k] \\
				s=|S| \geq (1-\gamma)k}}|0^{\ell}\rangle_{{D_S}}^{\otimes s}|\Gamma^{(|S|)}\rangle_{{D_{S^c}}}+\ket{\delta_{\gamma, k}},$$
	such that $\bra{0^\ell}_{D_i}\ket{\Gamma^{(s)}}_{D_{S^c}}=0$ for all $S$ and all $i\in S^c$, where $D=D_1\ldots D_k$,   and the tail bound is given by  $||\ket{\delta_{\gamma, k}}||^2\leq e^{-\left(\gamma-2 \cdot 2^{-l }\right) k / 2 }\eqqcolon \varepsilon_{\gamma,k}$.
\end{lemma}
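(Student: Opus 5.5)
The plan is to compute $\comp\ket{0^\ell}$ explicitly for a single register, decompose it into a $\ket{0^\ell}$ component and an orthogonal remainder, expand the $k$-fold tensor product over subsets, and bound the weight of the small subsets with a Chernoff-type estimate. Write $\ket{+^\ell}=2^{-\ell/2}\sum_y\ket y$, so that $\ket{0^\ell}=2^{-\ell/2}\ket{+^\ell}+\ket{\chi}$ with $\ket\chi=\ket{0^\ell}-2^{-\ell/2}\ket{+^\ell}$ orthogonal to $\ket{+^\ell}$ and $\ket\perp$. From \eqref{Comp}, $\comp$ maps $\ket{+^\ell}\mapsto\ket\perp$ and fixes $\ket\chi$. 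Hence
\[
\comp\ket{0^\ell}=2^{-\ell/2}\ket\perp+\ket{0^\ell}-2^{-\ell/2}\ket{+^\ell}=(1-2^{-\ell})\ket{0^\ell}+\ket{\beta},
\]
where $\ket\beta=2^{-\ell/2}\ket\perp-2^{-\ell}\sum_{y\neq 0^\ell}\ket y$. The vector $\ket\beta$ satisfies $\braket{0^\ell|\beta}=0$, and $\|\ket\beta\|^2=2^{-\ell}+2^{-2\ell}(2^\ell-1)\le 2\cdot 2^{-\ell}$.

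Next, expand $(\alpha\ket{0^\ell}+\ket\beta)^{\otimes k}$ with $\alpha=1-2^{-\ell}$ as a sum over subsets $S\subset[k]$ of the positions carrying $\ket{0^\ell}$:
\[
\sum_S \alpha^{|S|}\ket{0^\ell}^{\otimes|S|}_{D_S}\ket\beta^{\otimes(k-|S|)}_{D_{S^c}}.
\]
For $|S|\ge(1-\gamma)k$, define $\ket{\Gamma^{(s)}}=\alpha^{s}\ket\beta^{\otimes(k-s)}$; this depends only on $s=|S|$ up to relabelling of registers. It is sub-normalized, and $\bra{0^\ell}_{D_i}$ annihilates it for every $i\in S^c$ because $\braket{0^\ell|\beta}=0$. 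Collect all remaining terms ($|S|<(1-\gamma)k$) into $\ket{\delta_{\gamma,k}}$.

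For the norm bound, the terms for distinct $S$ are mutually orthogonal: they differ in some register that holds $\ket{0^\ell}$ in one term and $\ket\beta\perp\ket{0^\ell}$ in the other. Therefore
\[
\|\ket{\delta_{\gamma,k}}\|^2=\sum_{|S|<(1-\gamma)k}\alpha^{2|S|}\,\|\beta\|^{2(k-|S|)}.
\]
Since $\alpha^2+\|\beta\|^2=\|\comp\ket{0^\ell}\|^2=1$, this equals $\Pr[\mathrm{Bin}(k,p)>\gamma k]$ with failure probability $p=\|\beta\|^2\le 2\cdot2^{-\ell}$. The mean is at most $2\cdot 2^{-\ell}k$, and a Chernoff/Hoeffding bound for exceeding the mean by $(\gamma-2\cdot2^{-\ell})k$ gives a bound of the form $e^{-(\gamma-2\cdot2^{-\ell})k/2}$.

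\textbf{Main obstacle.} The main obstacle is matching the stated exponent exactly. The multiplicative Chernoff bound of \cref{chernoff definition theorem} is restricted to $\delta\le1$, while here the relative deviation is large. I would therefore try the general form $\Pr[X\ge a]\le e^{-a\ln(a/(e\mu))}$ directly. If that does not give the constant cleanly, I would fall back on a direct moment-generating-function computation: evaluate $(1-p+pe^{t})^k e^{-t\gamma k}$ at a fixed choice such as $t=1$, using $1+x\le e^x$, which yields $e^{-k(\gamma-(e-1)p)}$. For $\gamma\le 1/2$ and the assumed regime, this should be at least as strong as the claimed $e^{-(\gamma-2\cdot2^{-\ell})k/2}$. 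The claim is vacuous whenever $\gamma\le 2\cdot 2^{-\ell}$, so only the regime $\gamma>2\cdot2^{-\ell}$ needs checking.
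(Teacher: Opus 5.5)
Everything before the tail estimate is correct and matches the paper's argument: the formula $\comp\ket{0^\ell}=(1-2^{-\ell})\ket{0^\ell}+\ket\beta$, the expansion over subsets $S$, and the orthogonality of terms with distinct $S$. You justify that orthogonality more carefully than the paper's appeal to a computational-basis measurement. Together these give exactly $\|\ket{\delta_{\gamma,k}}\|^2=\Pr[\mathrm{Bin}(k,p)>\gamma k]$ with $p=2^{1-\ell}-2^{-2\ell}$.

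The step that fails is your final claim that the $t=1$ bound $e^{-k(\gamma-(e-1)p)}$ is at least as strong as $e^{-(\gamma-2\cdot 2^{-\ell})k/2}$. This holds only for $\gamma\gtrsim(4e-6)2^{-\ell}\approx 4.87\cdot 2^{-\ell}$. At the value $\gamma=4\cdot 2^{-\ell}$ fixed later in the paper, your exponent is about $0.56\cdot 2^{-\ell}k$ rather than $2^{-\ell}k$.

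No other choice of $t$, and no other tail inequality, can repair this, because the target is false there. The optimal Chernoff exponent $k\,\mathrm{KL}(\gamma\,\|\,p)$ is tight up to a polynomial factor in $k$, and $\mathrm{KL}(4\cdot 2^{-\ell}\,\|\,p)\approx(4\ln 2-2)2^{-\ell}\approx 0.77\cdot 2^{-\ell}$. Concretely, for $\ell=14$, $k=2^{25}$, $\gamma=4\cdot 2^{-14}$, the binomial tail exceeds $e^{-1600}$, while the claimed bound is $e^{-2048}$.

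Your choice of $\ket{\Gamma^{(s)}}$ is also already optimal. The subspaces $\mathrm{span}\{\ket{0^\ell}\}^{\otimes S}\otimes(\ket{0^\ell}^{\perp})^{\otimes S^c}$ are mutually orthogonal, and the orthogonality condition in the statement forces every main term into the one with its own $S$. So no admissible decomposition has smaller $\|\ket{\delta_{\gamma,k}}\|$ than yours. More generally, once $k$ is large the stated bound fails for all $\gamma$ between roughly $2\cdot 2^{-\ell}$ and $4.8\cdot 2^{-\ell}$.

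The paper's own proof reaches the stated exponent only through an invalid step. It applies the lower-tail Chernoff bound to the number of $0^\ell$ outcomes, which incidentally sidesteps the $\delta\le 1$ issue you raised. It correctly obtains $e^{-x^2k/(2(1-2^{-\ell})^2)}$ with $x=\gamma-2\cdot 2^{-\ell}+2^{-2\ell}$. It then replaces $x^2$ by $x$, which goes in the wrong direction since $0<x<1$.

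So you should not try to match the stated constant. Your argument, or the paper's done correctly, proves a corrected tail bound instead. Options are $\|\ket{\delta_{\gamma,k}}\|^2\le e^{-(\gamma-2\cdot 2^{-\ell})^2k/2}$, your $e^{-k(\gamma-(e-1)2^{1-\ell})}$, or the sharp $e^{-k\,\mathrm{KL}(\gamma\|p)}$. Whichever you pick, the weaker $\varepsilon_{\gamma,k}$ must then be carried through the later lemmas. With $\gamma=4\cdot 2^{-\ell}$ the first option gives $e^{-2\cdot 2^{-2\ell}k}$, which is still negligible when $2^{\ell}$ is polylogarithmic in $k$.
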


\begin{proof}
	We bound
	\begin{equation}\label{|0>^k eq}
		\begin{aligned}
			(\comp|0^{\ell}\rangle)^{\otimes k} & =((|\perp\rangle\langle+^{\ell}|+|+^{\ell}\rangle\langle\perp|+(\mathbb{1}-|+^{\ell}\rangle\langle+^{\ell}|-|\perp\rangle\langle\perp|)\ket{0^{\ell}})^{\otimes k} \\
			                                    & =\left(|0^{\ell}\rangle+2^{-l / 2}|\perp\rangle-2^{-l / 2}\ket{+^{\ell}}\right)^{\otimes k}                                                                        \\
			                                    & =\left(|0^{\ell}\rangle+2^{-l / 2}|\perp\rangle-2^{-l }\sum_{z\in\{0,1\}^{\ell}}\ket{z}\right)^{\otimes k}                                                         \\
			                                    & =\left((1-2^{-l})|0^{\ell}\rangle+2^{-l / 2}|\perp\rangle-2^{-l }\sum_{\substack{z\in\{0,1\}^{\ell}                                                                \\z\neq0^{\ell}}}\ket{z}\right)^{\otimes k}\\
			                                    & =\sum_{\substack{S \subset[k]                                                                                                                                      \\
			s=|S| \geq (1-\gamma)k}}|0^{\ell}\rangle_{{D_S}}^{\otimes s}|\Gamma^{(s)}\rangle_{{D_{S^c}}}+\sum_{\substack{S \subset[k]                                                                                \\
			s=|S| < (1-\gamma)k}}|0^{\ell}\rangle_{D_{S}}^{\otimes s}|\Gamma^{(s)}\rangle_{{D_{S^c}}},                                                                                                               \\
			                                    & =\sum_{\substack{S \subset[k]                                                                                                                                      \\
					s=|S| \geq (1-\gamma)k}}|0^{\ell}\rangle_{{D_S}}^{\otimes s}|\Gamma^{(s)}\rangle_{{D_{S^c}}}+\ket{\delta_{\gamma, k}}
		\end{aligned}
	\end{equation}
	where $\bra{0}_{D_i}\ket{\Gamma^{(|S|)}}_{D_{S^C}}=0$  for all $S\subset [k]$ and $i\in S^C$.

	We now bound $||\ket{\delta_{\gamma, k}}||^2$. We can use a standard Chernoff bound for the number of results $0^\ell$ we would obtain were we to measure the entire state in the computational basis. The mean is  \begin{align*}
		\nonumber& \mu=k\left(1-2^{-l }\right)^2 \end{align*}
	Towards using a Chernoff bound we define $\delta$ via $(1-\delta) \mu=(1-\gamma) k$, i.e.,
	\begin{align*}
		\delta & =1-\frac{(1-\gamma) k}{\mu} =1-\frac{(1-\gamma) k}{k p}=1-\frac{(1-\gamma) }{p}
	\end{align*}
	As $||\ket{\delta_{\gamma, k}}||^2$ is the probability of obtaining less then $(1-\gamma)k$ outcomes $0^{\ell}$, we then obtain
	\begin{align*}
		 & ||\ket{\delta_{\gamma, k}}||^2	\le e^{-\delta^2 \mu / 2} =e^{-[1-(1-\gamma) / p]^2 \times k p/2} =e^{-[p-(1-\gamma)]^2 k / 2 p}                                                         \\
		 & \leq e^{-\left[\left(1-2^{-l }\right)^2-(1-\gamma)\right]^2 k / 2 p} \leq e^{-\left[2^{-2l}-2 \cdot 2^{-l }+\gamma\right] k / 2 p} \leq e^{-\left(\gamma-2 \cdot 2^{-l }\right) k / 2 }
	\end{align*}
	\vspace{-1cm}

	\phantom{.}
	\qed
\end{proof}

\subsubsection{Analyzing the extractor.}

Denote by $\mathcal{E}_f$, and $\mathcal{E}_s=\mathbb{1}-\mathcal{E}_f$ the projections onto the subspace where the extractor $\Ext_{\mathsf{Fis}}$ (definition \ref{Fischlin Extractor} ) of the Fischlin Transform fails, and succeeds, respectively. The operator $\mathcal{E}_{f}$ projects onto the states with only one register $D_{\mathbf{a},i,c,z}$ for each $i$ that is not in the state $\ket\perp$ :

\begin{align*}
	(\mathcal{E}_f)_D=\left(\sum_{H: \Ext_{\mathsf{Fis}} \text{ fails}}
	|H\rangle\langle H|\right)_D=\bigotimes_{i=1}^k\left(P_{\bot,1}^{\challspacesize }\right)_{D_{i,*}},
\end{align*}
where $P_{\bot,1}^{\challspacesize }$ is the projector onto the subspace
\begin{align*}
	W_{\bot,1}^{\challspacesize }=\mathrm{span}\{\ket\psi_{D_{i,c}}\ket{\bot^{N-1}}_{D_{i, c^c}}|\ket\psi\in\mathbb C^{2^{\ell}}\},
\end{align*}
and for each repetition index $i\in[k]$, we write
\[
	D_{i,*} \coloneqq \{ D_{(\mathbf a_0,i,c,z)} : c\in\mathcal C \}
\]
for the collection of compressed oracle registers corresponding to all
valid transcripts with fixed commitment vector $\mathbf a_0$ and repetition
index $i$, ranging over all challenges $c\in\mathcal C$. Above, we have also used $D_{i, c^c}=(D_{i,c'})_{c'\in\mathcal C\setminus \{c\}}$.

We need to bound the probability that the extractor fails and the prover succeeds, so we study the norm of the state Here $\comp_*$ indicates compression operator $\comp$ applied to each register of set $*$. \begin{align}\label{eq:EfPs-phi}
	          & \comp_D\nonumber\mathcal{E}_f\comp_D P_s|\phi\rangle                                                                                                                                \\
	\nonumber & =\comp_D \mathcal{E}_f\sum_{\mathbf{x}}|\mathbf{x}\rangle_{\outputreg}|\phi^{\mathbf{x}}\rangle_{ZD_{\mathsf{hi}(x)^C}}\left(\sum_{\substack{S \subset\mathsf{hi}(\mathbf x)        \\
	s=|S| \geq (1-\gamma)k}}\!\!\!\!\!|0^{\ell}\rangle_{{D_S}}^{\otimes s}|\Gamma^{(s)}\rangle_{D_{\mathsf{hi}(\mathbf x)\setminus S}}+\ket{\delta_{\gamma, k}}_{D_{\mathsf{hi}(\mathbf x)}}\right) \\
	          & =\sum_\mathbf{x} |\mathbf{x}\rangle_{\outputreg}\sum_{\substack{S \subset\mathsf{hi}(\mathbf x)                                                                                     \\
			s=|S| \geq (1-\gamma)k}}\left|\psi^{\mathbf{\mathbf{x}}}_{S}\right\rangle_{ZD_{S^C}}\left(\comp|0^{\ell}\rangle\right)_{D_S}^{\otimes s}+\ket{\delta'_{\gamma, k}}_{\outputreg Z D}
\end{align}
where we implicitly define $\left|\psi^{\mathbf{\mathbf{x}}}_{S}\right\rangle_{ZD_{S^C}}=\comp_{D_{S^C}} \left|\hat \psi^{\mathbf{\mathbf{x}}}_{S}\right\rangle_{ZD_{S^C}}$ with \begin{equation*}
	\left|\hat\psi^{\mathbf{\mathbf{x}}}_{S}\right\rangle_{ZD_{S^C}}|0^{\ell}\rangle_{{D_S}}^{\otimes s}=\mathcal{E}_f|\phi^{\mathbf{x}}\rangle_{ZD_{\mathsf{hi}(x)^C}}|\Gamma^{(s)}\rangle_{D_{\mathsf{hi}(x)\setminus S}}|0^{\ell}\rangle_{{D_S}}^{\otimes s}
\end{equation*}
and have set  and
\begin{align}\label{eq:def-delta-prime-gamma-k}
	\ket{\delta'_{\gamma, k}}_{\outputreg Z D}=\comp_D \mathcal{E}_f\sum_{\mathbf{x}}|\mathbf{x}\rangle_{\outputreg}|\phi^{\mathbf{x}}\rangle_{ZD_{\mathsf{hi}(x)^C}}\left(\ket{\delta_{\gamma, k}}_{D_{\mathsf{hi}(\mathbf x)}}\right).
\end{align}
Note that
\begin{align}\label{eq:bound-on-delta-prime-gamma-k}
	\|	\ket{\delta'_{\gamma, k}}\|^2\le \varepsilon_{\gamma, k}
\end{align}
since both $\mathcal E_f$ and $\comp_D$ are contractions,
so the bound follows directly from Lemma~\ref{lem:comp 0^k}.

We will now show that measuring a normalized version of
\begin{equation}\label{eq:def-psi-x}
	\ket{\psi^{\mathbf x}}_{ZD}=\sum_{\substack{S \subset\mathsf{hi}(\mathbf x)\\
			s=|S| \geq (1-\gamma)k}}\left|\psi^{\mathbf{\mathbf{x}}}_{S}\right\rangle_{ZD_{S^C}}\left(\comp|0^{\ell}\rangle\right)_{D_S}^{\otimes s}
\end{equation}
in the computational basis yields many $0$s. Comparing to the final prover-oracle state without conditioning on success and extractor failure yields a bound on $\|\mathcal{E}_f\comp P_s|\phi\rangle\|^2$, i.e., the probability that the prover succeeds, but the extractor still fails.

For $m,n\in\mathbb N$, $m\ge n$, we define the subspaces \begin{align*}
	W_n^m=\mathrm{span}\Big\{ & \ket{+^{\ell}}^{\otimes s}_{D_T}\ket\psi_{D_{T^c}}                                                                                                                  \\
	                          & \quad \Big|T\subset[m], s\coloneqq|T|\ge n, \ket\psi\in\left(\mathbb C^{2^{\ell}}\right)^{\otimes m-s}\Big\}\subseteq\left(\mathbb C^{2^{\ell}}\right)^{\otimes m}.
\end{align*}

The projector $\mathcal E_f$ projects onto a subspace where for each $i$, only one register $D_{\mathbf a, i,c, z}$ is not in the state $\ket\bot$. As $\comp\ket\bot=\ket{+^{\ell}}$ and $1\le i\le k$, we have \begin{equation}\label{psi main}
	\ket{\psi^{\mathbf x}}\in W_n^m \otimes\mathcal H_{D_{\mathsf{hi}(x)}}\otimes \mathcal H_{Z},
\end{equation}
where $m=k(\challspacesize-1)$,  $n=m-\gamma k$, $W_n^m$ is viewed as a subspace of the state space of $D_{\mathsf{hi}(\mathbf x)^c}$, and $\mathcal H_{D_{\mathsf{hi}(x)}}$ and $\mathcal H_{Z}$
are the state spaces of registers $D_{\mathsf{hi}(x)}$ and $Z$, respectively.

In the following, we denote $\tilde D=D_{\mathsf{hi}(x)^{c}}$ and its subregisters by $\tilde D_i$, $1\le i\le m$. For simplicity, we sometimes write $\ket{\alpha}_I\coloneqq\ket\alpha_{\tilde D_I}$ to indicate that for some $I\subset [m]$, the register $\tilde D_I$ is in state $\ket\alpha$.

We want to bound $\left\|\left(P_\eta\right)_{\tilde D}|\psi^{\mathbf x}\rangle\right\|$  for some integer $\eta$. This quantity is invariant under permutation of the $m$ subregisters of $\tilde D$. We therefore analyze a symmetrized version of the state.

Consider the permutation action of any permutation $\pi \in S_m$, where $S_m$ is the group of permutations of $m$ elements,
on the subregisters of $\tilde D$,
$$
	\begin{aligned}
		\pi\left|y_1\right\rangle\left|y_2\right\rangle \ldots\left|y_m\right\rangle =\left|y_{\pi^{-1}(1)}\right\rangle\left|y_{\pi^{-1}(2)}\right\rangle \cdots \left|y_{\pi^{-1}(m)}\right\rangle.
	\end{aligned}
$$
We can now define the state\begin{align}\label{eq:sym}
	\ket{\psi_{\mathrm{sym}}^{\mathbf{x}}}_{ZDG}=\frac{1}{\sqrt{m!}}\sum_{\pi\in S_m}\pi_{\tilde D}\ket{\psi^{\mathbf{x}}}_{ZD}\ket\pi_G,
\end{align}
where $G$ is a register keeping track of which permutation was applied to $\tilde D$,
and observe that $\left\|\left(P_\eta\right)_{\tilde D}|\psi^{\mathbf x}\rangle\right\|=\ifsubmission \else\left\fi\|\left(P_\eta\right)_{\tilde D}\ket{\psi_{\mathrm{sym}}^{\mathbf{x}}}_{ZDG}\ifsubmission \else\right\fi\|$  since $P_\eta$ commutes with permutations of the $m$ subregisters.
We will now lower-bound $\ifsubmission \else\left\fi\|\left(P_\eta\right)_{\tilde D}\ket{\psi_{\mathrm{sym}}^{\mathbf{x}}}_{ZDG}\ifsubmission \else\right\fi\|$ using the following strategy.
\begin{enumerate}
	\item We define a sequence of binary random variables indicating whether measuring the $\tilde D_i$ resulted in $0^{\ell}$ or not
	\item We show a lower bound on the conditional expectations of this sequence of random variables
	\item We use the Azuma-Hoeffding martingale concentration inequality to show a lower bound on the number of $0^{\ell}$ outcomes that holds with high probability
\end{enumerate}
We begin with a lemma that will facilitate the lower bound on the conditional expectation. Informally, this lemma shows that, for a symmetric state $\ket\psi$ supported on subspace $W^m_n$ on register $\tilde D$ which is spanned by states with most sub-registers in state $\ket{+^{\ell}}$, the probability of obtaining $0^{\ell}$ when measuring the first sub-register is not too small. This is because due to the symmetry assumption, the amplitude of the first sub-register being in state $\ket{+^{\ell}}$ is at least $\sqrt{m/n}$.
\begin{lemma}\label{lem:measure}
	Let $\ket\psi_{\tilde DE}\in W_n^m\otimes \mathcal H_E$ be a normalized quantum state on registers $\tilde D=\tilde D_1\ldots \tilde D_m$ and an additional register $E$ such that for all operators $A$ acting on $\tilde D$ and for all $\pi\in S_m$ it holds that $\left\|A_{\tilde D}\pi_{\tilde D}\ket\psi_{\tilde DE}\right\|=\left\|A_{\tilde D}\ket\psi_{\tilde DE}\right\|$. Then
	\begin{align*}
		\left\|\bra{0^{\ell}}_{\tilde D_1}\ket\psi_{\tilde DE}\right\|^2\ge \frac{2^{-l}n}{m}-2^{1-l / 2} \sqrt{\frac{n(m-n)}{m^2}}
	\end{align*}
\end{lemma}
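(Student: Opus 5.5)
The plan is to split the first subregister $\tilde D_1$ according to the projector $P\coloneqq\proj{+^{\ell}}_{\tilde D_1}$ and its complement $Q\coloneqq\mathbb 1-P$. This gives
\[
\bra{0^{\ell}}_{\tilde D_1}\ket\psi=\bra{0^{\ell}}_{\tilde D_1}P\ket\psi+\bra{0^{\ell}}_{\tilde D_1}Q\ket\psi\eqqcolon\ket a+\ket b .
\]
Since $\braket{0^{\ell}|+^{\ell}}=2^{-\ell/2}$, the first term satisfies $\|\ket a\|^2=2^{-\ell}\|P\ket\psi\|^2$. For the second term we only use $\|\ket b\|\le\|Q\ket\psi\|$. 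Expanding the square and applying Cauchy--Schwarz to the cross term yields
\[
\|\ket a+\ket b\|^2\ge\|\ket a\|^2-2\|\ket a\|\,\|\ket b\|\ge 2^{-\ell}p-2\cdot 2^{-\ell/2}\sqrt{p(1-p)},
\]
where $p\coloneqq\|P\ket\psi\|^2$, so that $\|Q\ket\psi\|^2=1-p$. Everything is then reduced to a lower bound on $p$.

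To bound $p$, write $P_j$ for the projector onto $\ket{+^{\ell}}$ in subregister $\tilde D_j$.
\begin{enumerate}
\item The symmetry hypothesis, applied with $A=P_1$ and a transposition $\pi$ exchanging $1$ and $j$, gives $\|P_j\ket\psi\|^2=\|P_1\ket\psi\|^2=p$ for every $j$. Hence $mp=\bra\psi\sum_{j=1}^m P_j\ket\psi$.
\item The $P_j$ commute and are simultaneously diagonalized by the product basis obtained by completing $\ket{+^{\ell}}$ to an orthonormal basis of each $\mathbb C^{2^\ell}$.
\item I claim $W_n^m$ is exactly the span of those joint eigenvectors in which at least $n$ of the $P_j$ take the value $1$. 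Each generator $\ket{+^{\ell}}^{\otimes s}_{T}\ket\varphi_{T^c}$ with $|T|\ge n$ expands, after expanding $\ket\varphi$ in the product basis on $T^c$, only into basis vectors with at least $|T|\ge n$ factors $\ket{+^\ell}$. Conversely, every such basis vector is itself a generator.
\item On this span the operator $\sum_j P_j$ is therefore $\ge n$. Since $\ket\psi\in W_n^m\otimes\mathcal H_E$, this gives $mp\ge n$, i.e.\ $p\ge n/m$.
\end{enumerate}

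Substituting this into the displayed inequality, the first term is at least $2^{-\ell}n/m$. For the cross term, $p(1-p)$ is decreasing on $[1/2,1]$, so $p\ge n/m\ge1/2$ gives $\sqrt{p(1-p)}\le\sqrt{n(m-n)}/m$. This yields exactly the claimed bound $\frac{2^{-\ell}n}{m}-2^{1-\ell/2}\sqrt{\frac{n(m-n)}{m^2}}$.

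The step I expect to be most delicate is this last one. It needs $n\ge m/2$, which holds in the application since $n=m-\gamma k$ with $m=k(N-1)$ and $\gamma\le 1/2$. If the lemma is wanted for general $n$, I would instead keep $\sqrt{p(1-p)}\le\sqrt{p}\sqrt{(m-n)/m}$ with $\sqrt p\le1$, which only changes the lower-order term. I would also double-check the characterization of $W_n^m$ in step 3, since it is what converts the support assumption into the operator inequality.
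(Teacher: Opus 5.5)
Your proof is correct and follows essentially the paper's route. You split $\tilde D_1$ into its $\ket{+^{\ell}}$ component and the orthogonal complement, and use the symmetry to show the former has weight $p\ge n/m$; your operator inequality $\sum_j P_j\ge n$ on $W_n^m$ is the same count as the paper's $\sum_i\binom{m-1}{i-1}\zeta_i=\sum_i\frac{i}{m}\binom{m}{i}\zeta_i\ge\frac nm$. You then finish with $\|\ket a+\ket b\|^2\ge\|\ket a\|^2-2\|\ket a\|\,\|\ket b\|$, and your explicit monotonicity argument on $[1/2,1]$ is more careful than the paper's direct substitution of $p=n/m$.

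The restriction $n\ge m/2$ costs nothing, and your weaker fallback is unnecessary. For $1\le n<m/2$ the right-hand side equals $\frac{n}{m}\bigl(2^{-\ell}-2^{1-\ell/2}\sqrt{(m-n)/n}\bigr)<0$, and it equals $0$ for $n=0$, so the lemma holds trivially in that range.
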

\begin{proof}
	We can decompose
	\begin{align*}
		\ket\psi_{\tilde DE} & \eqqcolon\sum_{\substack{I \subset[m] \\
				|I| \geq n}}\left|+^{\ell}\rangle_{I}| \psi_I\right\rangle_{I^C E}
	\end{align*}
	with the convention that $\langle +^{\ell} |_i |\psi_I\rangle_{I^c E}=0$ for all $i\notin I$. We can use this decomposition to derive expressions for the parts of the state where the first subregister is in state $\ket{+^{\ell}}$, and orthogonal to it, respectively,
	\begin{equation}\label{Psi= psi+ + psi-}
		\begin{aligned}\ket\psi_{\tilde DE} & =\sum_{\substack{I \subset[m]                                                                                               \\
               |I| \geq n}}\ket{+^{\ell}}^{\otimes |I|}_{I}| \psi_I\rangle_{I^CE}=\sum_{\substack{I \subset[m]                                                    \\
               |I| \geq n\wedge 1\in I}}\ket{+^{\ell}}^{\otimes |I|}_{I}| \psi_I\rangle_{I^CE}+\sum_{\substack{I \subset[m]                                       \\
               |I| \geq n\wedge 1\notin I}}\ket{+^{\ell}}^{\otimes |I|}_{I}| \psi_I\rangle_{I^CE}                                                                 \\
                                    & =|+^{\ell}\rangle_{1}\left( \sum_{\substack{I' \subset[m]\backslash{1}                                                      \\
               |I'| \geq n-1}}|+^{\ell}\rangle_{I'}^{\otimes |l'|}| \psi_{I'\cup\left\{1\right\}}\rangle_{{(I'\cup\{1\})^CE}}\right)+\sum_{\substack{I \subset[m] \\
               |I| \geq n\wedge 1\notin I}}\ket{+^{\ell}}^{\otimes |I|}_{I}| \psi_I\rangle_{I^CE}                                                                 \\
                                    & \eqqcolon|+^{\ell}\rangle_{1}|\psi^+\rangle_{1^cE}+|\psi^-\rangle_{\tilde DE},
		\end{aligned}
	\end{equation}
	where $\bra{+^{\ell}}_1|\psi^-\rangle_{\tilde DE}=0$.
	By the assumed permutation symmetry of $\ket\psi$, the norm of $\ket{\psi_I}$ only depends on $|I|$. We thus define $\zeta_i=\|\ket{\psi_{[i]}}\|^2$ and note that by the normalization of $\ket\psi$ we have
	\begin{align*}
		\sum_{i=n}^m\binom{m}{i}\zeta_i=1.
	\end{align*}
	We can now bound
	\begin{align*}
		\|\ket{\psi_+}\|^2= & \sum_{\substack{I' \subset[m]\backslash{1}                                                                           \\
		|I'| \geq n-1}}\|| \psi_{I'\cup\left\{1\right\}}\rangle\|^2=\sum_{i=n}^m\binom{m-1}{i-1}\zeta_i=\sum_{i=n}^m\frac{i}{m}\binom{m}{i}\zeta_i \\
		\ge                 & \sum_{i=n}^m\frac{n}{m}\binom{m}{i}\zeta_i=\frac n m\eqqcolon p
	\end{align*}
	Therefore we get, for some $\alpha\in[0,1]$,
	\begin{align*}
		 & \left\|\bra{0^{\ell}}_{1}\ket{\psi}_{\tilde DE}\right\|^2
		= \|\langle0|+\rangle  \ket{\psi^{+}}_{1^c E}+\bra0_1 \ket{\psi^{-}}_{\tilde DE}\|^2                               \\
		 & \geq\left(2^{-l / 2} \|\left|\psi^{+}\right\rangle\|-\alpha\|\left|\psi^{-}\right\rangle \|\right)^2            \\
		 & \geq\left(2^{-l / 2} \sqrt{p}-\alpha\sqrt{1-p}\right)^2 =2^{-l} p+\alpha^2(1-p)-\alpha2^{1-l / 2} \sqrt{p(1-p)} \\
		 & \geq \frac{2^{-l}n}{m}-2^{1-l / 2} \sqrt{\frac{n(m-n)}{m^2}}.
	\end{align*}
	\qed\end{proof}

\subsubsection{Bounding $\left\|\left(P_\eta\right)_{\tilde D}|\psi^{\mathbf x}\rangle\right\|$ using martingale tail bound.}
We will now follow the same strategy as in \cref{lem:simple-tail-bound} to show that after applying the extractor failure projector, there are, in fact, more than $\eta$ registers in state $\ket{0^{\ell}}$ with high probability. There is, however, one crucial difference: the random variables obtained by measuring the sub-registers of $\tilde D$ in the state $\ket{\psi^{\mathbf x}}$ are not independent.
This renders the standard Chernoff bound we used in \cref{lem:simple-tail-bound} inapplicable. We therefore resort to a martingale tail bound for our analysis.

In the following, for ease of notation, let
\[
	\ket{\psi}=	\frac{1}{\|\ket{\psi_{\mathrm{sym}}^{\mathbf{x}}}\|}\ket{\psi_{\mathrm{sym}}^{\mathbf{x}}}.
\]

Let $X_i$, $0 \leq i \leq k(\challspacesize-1)$ be the random variables\footnote{Following established conventions, we denote both the query input register and certain random variables using the letter $X$. Abuse of notation is avoided as the random variables have subscripts and context should help the reader avoid confusion.} defined analogously to the proof of \cref{lem:simple-tail-bound}, i.e.,
\begin{equation*}
	X_i=\begin{cases}
		1 & \text{ if measuring register $\tilde D_i$ of state $\ket{\psi}$ yields $0^{\ell}$} \\
		0 & \text{else}.
	\end{cases}
\end{equation*}
We define $X_{<i}=(X_1,\ldots X_{i-1})$, and
\begin{align}\label{Xi,yi}
	Z_i       & =\mathbb{E}\left[X_i| X_{<i}\right], \\
	\hat{X_i} & =Z_i-X_i , \text{ and }              \\
	\hat{Z_i} & =\sum_{j=1}^i\hat{X_j}.
\end{align}
Note that conditioning on $X_{<i}$, $Z_{<i}$, $\hat X_i$ and $\hat Z_{<i}$ are all equivalent. Let the value of the standard cumulative mean be denoted as
\begin{equation}\label{mu'}
	\mu' =\sum_{i=1}^n\mathbb{E}[X_i \mid X_{<i}]=\sum_{i=1}^n Z_i.
\end{equation}

\begin{lemma}\label{lem:martingale-tail-bound}
	We have the following probability bound \[
		\left\|\left(P_\xi\right)_{\tilde D}|\psi\rangle\right\|^2\ge 1-\exp\left(-\frac{{\delta'}^2\underline\mu'^2}{2 m}\right)\eqqcolon 1-\varepsilon'
	\]
	Here,
	\[\underline \mu'\coloneqq 2^{-l}k\left[\challspacesize-1-\gamma\left(1+4\cdot 2^{\ell}+\log\left(\frac{4\gamma}{\log{k}}\right)\right)-4\sqrt{2^{\ell}\gamma\challspacesize}\right]\]
	and
	\[
		\xi=(1-\delta')\underline \mu'.
	\]
\end{lemma}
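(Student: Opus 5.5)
We plan to combine \cref{lem:measure}, applied iteratively, with the Azuma--Hoeffding inequality (\cref{azuma definition}) for the martingale $\hat Z_i=\sum_{j\le i}(Z_j-X_j)$. The sequential measurement of $\tilde D_1,\tilde D_2,\ldots$ in the computational basis has the same outcome distribution as measuring all of $\tilde D$ at once. So $\|(P_\xi)_{\tilde D}\ket\psi\|^2=\Pr[\sum_{i=1}^m X_i\ge\xi]$, and it suffices to lower bound this probability.

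\textbf{Step 1 (symmetry is preserved).} We first check that $\ket\psi$ satisfies the hypotheses of \cref{lem:measure} with $E=ZD_{\mathsf{hi}(\mathbf x)}G$. It lies in $W^m_n\otimes\mathcal H_E$ by \eqref{psi main}, because permutations preserve $W^m_n$. It is permutation invariant in the stated sense because of the register $G$ in \eqref{eq:sym}: relabeling $\pi\mapsto\sigma\pi$ shows that $\sigma_{\tilde D}\ket\psi$ differs from $\ket\psi$ only by a unitary acting on $G$, and such a unitary does not change $\|A_{\tilde D}\cdot\|$.

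Next we show that after measuring $\tilde D_1,\dots,\tilde D_{i-1}$ with outcomes $X_{<i}$, the normalized post-measurement state on $\tilde D_i\cdots\tilde D_m$ (with the measured registers absorbed into $E$) lies in $W^{m-i+1}_{n-i+1}$. To see this, project each spanning vector $\ket{+^\ell}_T\ket{\psi}_{T^c}$ onto a computational basis outcome on coordinate 1. This gives a vector with at least $|T|-1$ plus-states on the remaining coordinates. The post-measurement state also remains symmetric under permutations of the unmeasured registers, since those permutations commute with the measurement of $\tilde D_{<i}$. Hence \cref{lem:measure} applies at every step, and
\[
Z_i\ge \frac{2^{-\ell}(n-i+1)}{m-i+1}-2^{1-\ell/2}\sqrt{\frac{(n-i+1)(m-n)}{(m-i+1)^2}}.
\]
When $n-i+1\le 0$ we simply use $Z_i\ge 0$.

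\textbf{Step 2 (cumulative mean).} Summing the bound from Step 1 gives a deterministic lower bound $\mu'\ge\underline\mu'$. We write $m-n=\gamma k$ and $m=k(N-1)$. The first term gives $2^{-\ell}\sum_i (1-\frac{\gamma k}{m-i+1})$, and the harmonic sum $\sum_{j}\frac1j\approx\log(m/\gamma k)$ produces the $\gamma\log(\cdot)$ correction. The second term gives roughly $2^{1-\ell/2}\sqrt{\gamma k}\sum_j j^{-1/2}\lesssim 4\cdot2^{-\ell/2}\sqrt{\gamma k\, m}$, which yields the $4\sqrt{2^\ell\gamma N}$ term after factoring out $2^{-\ell}k$. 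The remaining $4\cdot 2^\ell\gamma$ term will absorb the last $\gamma k$ indices, where the per-step bound is useless, together with rounding. Using $N=\Omega(2^\ell\log k)$ should let the logarithm be written as in the statement. This bookkeeping is the step I expect to be the main obstacle. The constants ($4\cdot2^\ell$ and the precise argument of the logarithm) must come out exactly, and one must be careful that the per-step bound can be negative, in which case it should be replaced by $0$ for the last indices.

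\textbf{Step 3 (concentration).} We run the martingale over all $m$ registers (the statement's $\mu'$ sums to $n$, and the extra terms are nonnegative). Since $\hat Z$ is a martingale with $|\hat Z_i-\hat Z_{i-1}|=|Z_i-X_i|\le1$, \cref{azuma definition} with $t=\delta'\underline\mu'$ gives
\[
\Pr\Big[\sum_i Z_i-\sum_i X_i\ge\delta'\underline\mu'\Big]\le\exp\!\Big(-\tfrac{\delta'^2\underline\mu'^2}{2m}\Big)=\varepsilon'.
\]
By Step 2, $\sum_i Z_i\ge\underline\mu'$ almost surely. Therefore $\sum_iX_i\ge(1-\delta')\underline\mu'=\xi$ except with probability $\varepsilon'$, which is the claimed bound.
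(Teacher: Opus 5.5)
Your proposal is correct and follows the paper's proof. The paper likewise identifies $\|(P_\xi)_{\tilde D}\ket\psi\|^2$ with $\Pr[\sum_i X_i\ge\xi]$, applies \cref{azuma definition} to $\hat Z_i=\sum_{j\le i}(Z_j-X_j)$ with increments bounded by $1$ and deviation $\delta'\underline\mu'$, and combines this with the deterministic bound $\sum_i Z_i\ge\underline\mu'$ obtained by iterating \cref{lem:measure} on $W^{m-i+1}_{n-i+1}$; your explicit checks that permutation symmetry survives the partial measurements and that summing $Z_i$ over all $m$ registers is harmless fill in details the paper leaves implicit.

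Your Step 2 is exactly \cref{lem:cum-mean-bound}, which the paper proves separately in the appendix, so you can cite it. That proof truncates the sum at $n'=n-2^{2+\ell}(m-n)$, where the per-step bound turns negative; this truncation, not the last $\gamma k$ indices, is what produces the $4\cdot 2^{\ell}\gamma$ term. If you redo the bookkeeping, note the sign of the logarithm: the harmonic sum contributes $-2^{-\ell}\gamma k\log\frac{m-1}{m-n'}<0$, whereas the $\log$ term in $\underline\mu'$ is nonnegative once $4\gamma\le\log k$. You would have to absorb that difference with the slack of about $8\gamma k$ that is discarded when the $j^{-1/2}$ sum is bounded by $2\sqrt m$ instead of $2(\sqrt m-\sqrt{m-n'})$. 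That slack is ample in the parameter regime of \cref{Extractability main bound}.
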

\begin{proof}
	As in the proof of \cref{lem:simple-tail-bound}, we use the fact that \js{added third term}
	\[
		\left\|\left(P_\xi\right)_{\tilde D}|\psi \rangle\right\|^2=\Pr\left[\sum_{i=1}^m X_i\ge \xi\right]=1-\Pr\left[\sum_{i=1}^m X_i< \xi\right]
	\]
	Note that conditioning on $X_{<i}$, $Z_{<i}$, $\hat X_i$ and $\hat Z_{<i}$ are all equivalent.
	We get
	\begin{align*}
		\mathbb{E}\!\left[ \hat X_i \mid \hat Z_{< i} \right]
		 & =\mathbb{E}\!\left[Z_i-X_i\mid  X_{< i} \right]              \\
		 & = \mathbb{E}\!\left[
		\mathbb{E}\left[X_i| X_{<i}\right]-X_i\middle|\, X_{< i}\right] \\&=\mathbb{E}\left[X_i| X_{<i}\right]-\mathbb{E}\left[X_i| X_{<i}\right]\\
		 & = 0,                                                         \\
		\implies\mathbb{E}\!\left[ \hat Z_i \mid \hat Z_{< i} \right]
		 & = \hat Z_{i-1}
		+ \mathbb{E}\!\left[ \hat X_i \mid \hat Z_{< i} \right]
		= \hat Z_{i-1}
	\end{align*}
	Hence $\hat Z_{i}$ is a martingale. Additionally, $\hat Z_0=0$ and $|\hat Z_i-\hat Z_{i-1}|=|\hat X_i|\le 1$. Applying \cref{azuma definition} we thus get
	\begin{align*}
		\Pr\!\left[\, \sum_{j=1}^m X_j \le \mu'-\varepsilon \,\right] & =	\Pr\!\left[\, \sum_{j=1}^m X_j \le \sum_{j=1}^m Z_j-\varepsilon \,\right] \\
		                                                              & =\Pr\!\left[\, 0-\hat Z_m \le -\varepsilon \,\right]                        \\
		                                                              & \le\;
		\exp\!\left(
		- \frac{\varepsilon^2}{2m}
		\right).
	\end{align*}
	Using the lower bound on the cummulative mean from \cref{lem:cum-mean-bound} we obtain the desired bound,
	\begin{align*}
		\Pr\!\left[\, \sum_{j=1}^m X_j \le \underline{\mu}'-\varepsilon \,\right] & \le\;
		\exp\!\left(
		- \frac{\varepsilon^2}{2m}
		\right).
	\end{align*}
	\vspace{-1cm}

	\phantom{.}
	\qed\end{proof}

We continue by providing a lower bound on the cumulative mean.

\begin{lemma}\label{lem:cum-mean-bound}
	The cummulative mean defined in \cref{mu'} is bounded as
	\begin{align*}
		\nonumber\mu' & \geq \underline \mu'=2^{-l}k\left[\challspacesize-1-\gamma\left(1+4\cdot 2^{\ell}+\log\left(\frac{4\gamma}{\log{k}}\right)\right)-4\sqrt{2^{\ell}\gamma\challspacesize}\right].
	\end{align*}
\end{lemma}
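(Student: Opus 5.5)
The plan is to bound each conditional expectation $Z_i=\mathbb E[X_i\mid X_{<i}]$ from below by applying \cref{lem:measure} to the post-measurement state after the first $i-1$ registers of $\tilde D$ have been measured, and then to sum these pointwise bounds over $i$.

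\textbf{Step 1: the invariant.} Fix outcomes $X_{<i}$. I will show that the normalized post-measurement state of the unmeasured registers $\tilde D_i\ldots\tilde D_m$, together with $Z$, $D_{\mathsf{hi}(\mathbf x)}$, $G$ and the classical outcomes, satisfies two properties.
\begin{itemize}
\item It lies in $W^{m-i+1}_{n-i+1}\otimes\mathcal H_E$. Start from the spanning vectors $\ket{+^{\ell}}_T\ket{\psi}_{T^c}$ of $W_n^m$. Projecting $\tilde D_1$ onto $\ket{0^\ell}$ or onto its complement gives a vector in which at least $|T|-1$ of the remaining registers are still in state $\ket{+^{\ell}}$. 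The same holds for the later measured registers.
\item It is permutation invariant, in the sense required by \cref{lem:measure}, under permutations of the remaining $m-i+1$ registers. This follows from the symmetrization \eqref{eq:sym}: a permutation of the unmeasured registers can be absorbed into a relabeling of the register $G$, and the computational basis measurements of $\tilde D_1\ldots\tilde D_{i-1}$ commute with such permutations.
\end{itemize}
Checking the second property carefully, including that the norm condition of \cref{lem:measure} holds for all operators $A$ after conditioning, is the delicate but routine part of this step.

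\textbf{Step 2: pointwise bound.} Write $j=m-i+1$, and recall $m-n=\gamma k$, so the invariant above is $W^{j}_{j-\gamma k}$. Then \cref{lem:measure} gives
\[
Z_i\ \ge\ \max\Big\{0,\ 2^{-\ell}\frac{j-\gamma k}{j}-2^{1-\ell/2}\frac{\sqrt{(j-\gamma k)\gamma k}}{j}\Big\}
\]
whenever $j>\gamma k$, and trivially $Z_i\ge 0$ otherwise. This bound is deterministic, that is, it holds for every value of $X_{<i}$, so it passes directly to $\mu'$.

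\textbf{Step 3: summation.} Sum over $j$ from a cutoff $j_0$ to $m=k(\challspacesize-1)$ and drop all terms with $j<j_0$, since they are nonnegative anyway. I would take $j_0\approx 4\cdot 2^{\ell}\gamma k$, which is roughly where the first term starts to dominate the square-root penalty. Then:
\begin{itemize}
\item The main term is $2^{-\ell}\sum_{j\ge j_0}(1-\gamma k/j)$. This is at least $2^{-\ell}\big[(m-j_0)-\gamma k\ln(m/j_0)\big]$. With $m\approx k\challspacesize$ and $\challspacesize=\Omega(2^{\ell}\log k)$, the logarithm becomes the term $\gamma\log(4\gamma/\log k)$ (up to sign and normalization), and $j_0$ gives the $4\cdot 2^{\ell}\gamma$ term together with the $-\gamma$ term.
\item The penalty term is bounded by $2^{1-\ell/2}\sqrt{\gamma k}\sum_{j\le m}j^{-1/2}\le 2^{2-\ell/2}\sqrt{\gamma k\,m}$. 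This equals $2^{-\ell}k\cdot 4\sqrt{2^{\ell}\gamma\challspacesize}$, which matches the last term of $\underline\mu'$.
\end{itemize}

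\textbf{Main obstacle.} I expect the main obstacle to be Step 1, namely rigorously justifying that symmetry and membership in $W$ are preserved under conditioning on the earlier outcomes; in particular, verifying that the permutation-invariance hypothesis of \cref{lem:measure} survives after measuring $\tilde D_1\ldots\tilde D_{i-1}$. The choice of cutoff and the harmonic-sum bookkeeping needed to reproduce exactly the constants in $\underline\mu'$ are routine but fiddly.
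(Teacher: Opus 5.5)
Your route is the paper's: a pointwise lower bound on each $Z_i$ from \cref{lem:measure} applied to the conditioned state in $W^{m-i+1}_{n-i+1}$, then truncation where that bound turns negative, then integral comparisons for the two resulting sums. With $j=m-i+1$, your cutoff $j_0\approx(1+4\cdot 2^{\ell})\gamma k$ is exactly the paper's $n'=n-2^{2+\ell}(m-n)$. The symmetry issue you single out as the main obstacle takes one line, and the paper does not even state it. If $\Pi$ is the outcome projector on $\tilde D_1\ldots\tilde D_{i-1}$ and $\sigma$ permutes only the unmeasured registers, then $\sigma\Pi=\Pi\sigma$. So for every $A$ on the remaining registers, $\|A\sigma\Pi\ket\psi\|=\|(A\Pi)\sigma\ket\psi\|=\|A\Pi\ket\psi\|$, by the hypothesis applied to the operator $A\Pi$.

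The step that is not routine is the one you pass over with ``up to sign''. Your main term contributes $-2^{-\ell}\gamma k\ln\frac{m}{j_0}$. Inside the bracket $2^{-\ell}k[\cdots]$ this is $+\gamma\ln\frac{(1+4\cdot 2^{\ell})\gamma}{N-1}\approx\gamma\ln\frac{4\gamma}{c\log k}$, which is \emph{negative}. The bracket of $\underline\mu'$ instead contains $-\gamma\log\frac{4\gamma}{\log k}$, which is \emph{positive}.

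You also bound the penalty by $2^{2-\ell/2}\sqrt{\gamma k m}$, so essentially no slack is left to absorb this discrepancy. Your computation therefore does not reach $\underline\mu'$ in general; it falls short, e.g., for $\gamma=4\cdot 2^{-\ell}$, $\ell=14$, $c=1$, $\log k=2^{\ell}/256$. The paper's proof has the same defect. It discards the $\sqrt{m-n'}$ term of the penalty, and then in its last line silently turns $+\gamma\log\frac{4\gamma}{\log k}$ into $-\gamma\log\frac{4\gamma}{\log k}$. It also drops $c$ from the logarithm.

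The repair is to keep the lower end of the penalty sum: $\sum_{j=j_0}^{m}j^{-1/2}\le 2(\sqrt m-\sqrt{j_0-1})$. Since $j_0-1\ge 4\cdot 2^{\ell}\gamma k$, this gives back at least $2^{2-\ell/2}\sqrt{\gamma k}\cdot 2^{1+\ell/2}\sqrt{\gamma k}=8\gamma k$, i.e.\ an extra $+8\cdot 2^{\ell}\gamma$ in the bracket. For the parameters actually used ($\gamma=4\cdot 2^{-\ell}$, $\log k\le 2^{\ell}/(256c)$, $\ell\ge 14$, $c$ constant), both logarithmic terms are $O(\gamma\ell)$. That is far below $8\cdot 2^{\ell}\gamma=32$, so this does yield $\mu'\ge\underline\mu'$.
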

The proof proceeds by exploiting \cref{lem:measure} and elementary bounds, and can be found in \cref{app:mu'bound} in the Appendix.

\subsubsection{QROM Extractability Bound} We are now finally ready to prove our main QROM extractability  result for deterministic commitments. For brevity, define
$\CompOperatorComp{E}_f=\comp\mathcal{E}_f\comp$ and $\CompOperatorComp{E}_s=\comp\mathcal{E}_s\comp$. We make the convention that any projectors act on register $D$ unless specified otherwise.

Recall that the argument is a sandwich between two tail bounds pulling in opposite
directions. For a fixed commitment vector $a_0$, we want to bound $\|\tilde{E}_f|\phi_{succ}\rangle\|^2$, the probability
that the prover succeeds but the extractor fails. (Recall the (subnormalized) state  $|\phi_{succ}\rangle =
	P_s|\phi\rangle$ projected onto prover
success. The key quantity is a
threshold $\eta$ on the number of database registers in state $|0^\ell\rangle$.
On one hand, Lemma~\ref{lem:simple-tail-bound} gives an \emph{upper} bound: unconditionally,
the probability of seeing $\geq \eta$ such registers is exponentially small
(call it $\varepsilon''$), because the database registers start independent and
$\eta$ is chosen well above the mean $\mu = 2^{-\ell}kN$. On the other hand,
Lemma~\ref{lem:ineq 33} - which collects the decomposition, symmetrization,
and martingale argument of Section~\ref{sec:intuition} - gives a matching
\emph{lower} bound: conditioned on both prover success and extractor failure,
seeing $\geq \eta$ registers in state $|0^\ell\rangle$ is likely, with
probability at least $(1 - \varepsilon')\|\tilde{E}_f|\phi_{succ}\rangle\|^2$,
up to small error terms from the compressed-basis tail $\varepsilon_{\gamma,k}$.
Combining the two via the triangle inequality, together with a bound on the
cross term $\langle\phi_{succ}|(\mathds 1-\tilde{E}_f )P_\eta \tilde{E}_f|\phi_{succ}\rangle$
from Lemma~\ref{lem:ineq 33}, yields
$\varepsilon'' \geq (1 - \varepsilon')\|\tilde{E}_f|\phi_{succ}\rangle\|^2$,
from which
\[
	\|\tilde{E}_f|\phi_{succ}\rangle\|^2 \;\leq\; \frac{\varepsilon}{1 - \varepsilon}
\]
follows immediately, where $\varepsilon=\max(\varepsilon', \varepsilon'')$. Note that both $\varepsilon''$ and $\varepsilon'$ are negligible in $k$, making the
extraction error negligible. \begin{lemma}[Extractability -- Deterministic Commitments]\label{Extractability main bound}
	Let $\mathcal P $ be a dishonest prover for the Fischlin transform of a special-sound \sigp for a relation $R$ with unique responses. For some constant $c$, let the challenge space be of size  $\challspacesize=c\cdot 2^{\ell}\cdot \log{(k)}$. Let further $2^{\frac 1 c}\le k\le 2^{\frac{2^{\ell}}{256\cdot c}}$ and $l\ge 14$.
	The probability of the event where the prover succeeds and the extractor fails is bounded by
	$$\mathbb{P}[\text{Extractor fails and Prover Succeeds}]=||\CompOperatorComp{E}_fP_{s}|\phi\rangle||^2\leq{\varepsilon/(1-\varepsilon)=\operatorname{negl}(k)},$$
	where $\varepsilon\leq 3 \exp\left[-\frac{ k  }{128 \cdot c\cdot 2^{\ell}\cdot \log k}\right]+7\exp\left[-\frac{ k}{8\cdot 2^{\ell}} \right]$. \end{lemma}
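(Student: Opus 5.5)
The plan is the sandwich argument outlined just before the statement. Write $\ket{\phi_{\mathrm{succ}}}=P_s\ket\phi$ and split it as $\ket{\phi_{\mathrm{succ}}}=\CompOperatorComp{E}_f\ket{\phi_{\mathrm{succ}}}+\CompOperatorComp{E}_s\ket{\phi_{\mathrm{succ}}}$. The two tail bounds will be compared on this decomposition.

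\textbf{Upper bound.} By \cref{lem:simple-tail-bound} and $\|P_\eta P_s\ket\phi\|\le\|P_\eta\ket\phi\|$ (note $P_\eta$ and $P_s$ commute, both being diagonal in the computational basis of $D$), we get $\|P_\eta\ket{\phi_{\mathrm{succ}}}\|^2\le\varepsilon''$. Expanding this norm gives
\begin{align*}
\|P_\eta\ket{\phi_{\mathrm{succ}}}\|^2=\|P_\eta\CompOperatorComp{E}_f\ket{\phi_{\mathrm{succ}}}\|^2+\|P_\eta\CompOperatorComp{E}_s\ket{\phi_{\mathrm{succ}}}\|^2+2\,\mathrm{Re}\bra{\phi_{\mathrm{succ}}}\CompOperatorComp{E}_sP_\eta\CompOperatorComp{E}_f\ket{\phi_{\mathrm{succ}}}.
\end{align*}
The middle term is nonnegative and can be dropped.

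\textbf{Lower bound.} For the first term I lower bound $\|P_\eta\CompOperatorComp{E}_f\ket{\phi_{\mathrm{succ}}}\|^2$. Note that $P_\eta$ counts zeros on all of $D$, including the $k$ verified registers in $D_{\mathsf{hi}(\mathbf x)}$, which equal $0^\ell$ on the main branch. So it suffices to show the zeros on $\tilde D$ number at least $\eta-k$. Using \eqref{eq:EfPs-phi}, $\comp\CompOperatorComp{E}_f\ket{\phi_{\mathrm{succ}}}=\sum_{\mathbf x}\ket{\mathbf x}\ket{\psi^{\mathbf x}}+\ket{\delta'_{\gamma,k}}$ with $\|\delta'\|^2\le\varepsilon_{\gamma,k}$ by \eqref{eq:bound-on-delta-prime-gamma-k}. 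The branches $\mathbf x$ are orthogonal, so I treat each separately. By \eqref{psi main}, the symmetrized $\ket{\psi^{\mathbf x}_{\mathrm{sym}}}$ satisfies the hypotheses of \cref{lem:measure}, and after conditioning on outcomes it remains in $W^{m-1}_{n-1}$. Then \cref{lem:martingale-tail-bound} with \cref{lem:cum-mean-bound} gives $\|P_\xi\ket{\psi^{\mathbf x}}\|^2\ge(1-\varepsilon')\|\ket{\psi^{\mathbf x}}\|^2$.

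This requires choosing $\eta$ so that $\eta-k\le\xi=(1-\delta')\underline\mu'$ while $\eta=(1+\delta)\mu$. Summing over $\mathbf x$ and absorbing $\delta'$ by the triangle inequality yields $\|P_\eta\CompOperatorComp{E}_f\ket{\phi_{\mathrm{succ}}}\|^2\ge(1-\varepsilon')\|\CompOperatorComp{E}_f\ket{\phi_{\mathrm{succ}}}\|^2-O(\sqrt{\varepsilon_{\gamma,k}})$. One subtlety is that $P_\eta$ is a computational-basis projector, while $\ket{\psi^{\mathbf x}}$ was obtained after re-applying $\comp$. The verified registers carry $(\comp\ket{0^\ell})$ rather than $\ket{0^\ell}$, but on the subsets $S$ with $|S|\ge(1-\gamma)k$ at least $(1-\gamma)k$ of them are zero. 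So I will actually require $\eta-(1-\gamma)k\le\xi$.

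\textbf{Cross term.} The cross term $\bra{\phi_{\mathrm{succ}}}\CompOperatorComp{E}_sP_\eta\CompOperatorComp{E}_f\ket{\phi_{\mathrm{succ}}}$ is the main obstacle, since $P_\eta$ does not commute with $\CompOperatorComp{E}_f$. I would bound it by Cauchy--Schwarz as $\|P_\eta\CompOperatorComp{E}_s\ket{\phi_{\mathrm{succ}}}\|\cdot\|(\mathds 1-P_\eta)\ldots\|$-type estimates. Concretely, I would write $P_\eta\CompOperatorComp{E}_f\ket{\phi_{\mathrm{succ}}}=\CompOperatorComp{E}_f\ket{\phi_{\mathrm{succ}}}-(\mathds 1-P_\eta)\CompOperatorComp{E}_f\ket{\phi_{\mathrm{succ}}}$. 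The first piece is orthogonal to $\CompOperatorComp{E}_s$, and the second has squared norm at most $\varepsilon'\|\CompOperatorComp{E}_f\ket{\phi_{\mathrm{succ}}}\|^2+\varepsilon_{\gamma,k}$ by the martingale bound. Hence the cross term is $O(\sqrt{\varepsilon'}+\sqrt{\varepsilon_{\gamma,k}})$ times norms at most one. Combining everything gives $\varepsilon''\gtrsim(1-\varepsilon')\|\CompOperatorComp{E}_f\ket{\phi_{\mathrm{succ}}}\|^2-O(\sqrt{\varepsilon'}+\sqrt{\varepsilon_{\gamma,k}})$. Setting $\varepsilon=\max$ of the resulting error terms and rearranging gives $\|\CompOperatorComp{E}_fP_s\ket\phi\|^2\le\varepsilon/(1-\varepsilon)$.

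\textbf{Parameters.} Finally, I instantiate $N=c2^\ell\log k$ and $\gamma\approx\log k/2^{\ell+?}$ (so $\gamma 2^\ell\ll N$), with $\delta,\delta'$ constant-size such as $1/8$. Then $\mu$ and $\underline\mu'$ are both $\Theta(k\log k)$ and differ by less than the gap $(1-\gamma)k$. The exponents come out as $\delta^2\mu/3$, ${\delta'}^2\underline\mu'^2/(2m)$ and $(\gamma-2^{1-\ell})k/2$. The constraints $k\le2^{2^\ell/(256c)}$ and $\ell\ge14$ are exactly what keep the correction terms in $\underline\mu'$ below this gap, yielding the stated form of $\varepsilon$.
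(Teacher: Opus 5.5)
Your overall architecture is the paper's: the sandwich between \cref{lem:simple-tail-bound} and the martingale lower bound, dropping $\|P_\eta\tilde{\mathcal E}_s\ket{\phi_{\mathrm{succ}}}\|^2$, and bounding the cross term via $P_\eta\tilde{\mathcal E}_f=\tilde{\mathcal E}_f-(\mathds 1-P_\eta)\tilde{\mathcal E}_f$ with $\tilde{\mathcal E}_s\tilde{\mathcal E}_f=0$. However, two steps fail as written.

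\textbf{Zero count on the verified registers.} By \eqref{eq:EfPs-phi}, after $\comp\,\mathcal E_f\comp$ the registers $D_S$ carry $(\comp\ket{0^\ell})^{\otimes s}$. $P_\eta$ counts zeros in the computational basis, where each such register gives $0^\ell$ only with probability $(1-2^{-\ell})^2$. The fact that ``$|S|\ge(1-\gamma)k$ registers are $\ket{0^\ell}$'' holds in the compressed basis \emph{before} the second $\comp$, not where $P_\eta$ acts. The paper handles this in two steps:
\begin{itemize}
\item it first applies $P_\xi$ on $\tilde D$, which leaves the factor $(\comp\ket{0^\ell})^{\otimes s}_{D_S}$ untouched;
\item it then invokes \cref{lem:comp 0^k} a second time with $(1-\gamma)k$ in place of $k$, getting at least $(1-2\gamma)k$ zeros at extra cost $\varepsilon_{\gamma,(1-\gamma)k}$.
\end{itemize}
Hence the threshold is $\eta-(1-2\gamma)k=(1-\delta')\underline\mu'$, not $\eta-(1-\gamma)k$.

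There is a related slip in your cross-term estimate. The tail vectors enter through a triangle inequality, so the deficit $\|(\mathds 1-P_\eta)\tilde{\mathcal E}_f\ket{\phi_{\mathrm{succ}}}\|^2$ is $\varepsilon'\|\tilde{\mathcal E}_f\ket{\phi_{\mathrm{succ}}}\|^2+O(\sqrt{\varepsilon_{\gamma,(1-\gamma)k}})$, not $+\varepsilon_{\gamma,k}$. The cross term is therefore $O(\sqrt{\varepsilon'}+\varepsilon_{\gamma,(1-\gamma)k}^{1/4})$, and this fourth root is what produces the $7\exp[-k/(8\cdot 2^{\ell})]$ term.

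\textbf{Parameter instantiation.} Your choice is infeasible. Subtracting $\eta=(1+\delta)\mu$ from $\eta-(1-2\gamma)k=(1-\delta')\underline\mu'$ gives
\[
\delta\mu+\delta'\underline\mu'=(1-2\gamma)k-(\mu-\underline\mu')<k,
\]
while $\mu=2^{-\ell}k\challspacesize=ck\log k$. So $\delta,\delta'=O(1/(c\log k))$ and cannot be constants like $1/8$; you note yourself that the usable gap is only about $k$.

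The paper instead proceeds as follows:
\begin{itemize}
\item it balances $\delta\mu=\delta'\underline\mu'$ and proves $\mu-\underline\mu'\le 3k/4$;
\item this gives $\delta\gtrsim 1/(8c\log k)$, hence $\delta^2\mu\gtrsim k/(64c\log k)$, which is the origin of the $\log k$ in the denominator of the stated exponent;
\item converting ${\delta'}^2\underline\mu'^2/(2m)$ into something $\ge\delta^2\mu/2^{\ell}$ also uses \cref{lem:epsilon-prime-value} (i.e.\ $\underline\mu'/\challspacesize\ge 2^{-\ell}/2$) and $\underline\mu'\le\mu$.
\end{itemize}

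Likewise $\gamma$ must be $\Theta(2^{-\ell})$, not $\approx\log k/2^{\ell+O(1)}$:
\begin{itemize}
\item it must exceed $2\cdot 2^{-\ell}$ for $\varepsilon_{\gamma,k}$ to decay;
\item the correction $2^{-\ell}k\cdot 4\sqrt{2^{\ell}\gamma\challspacesize}=4k\sqrt{\gamma c\log k}$ inside $\mu-\underline\mu'$ must stay below the gap.
\end{itemize}
With the paper's $\gamma=4\cdot 2^{-\ell}$ and $\log k\le 2^{\ell}/(256c)$, that correction is at most $k/2$. With your scaling and $\log k$ near its allowed maximum, it grows like $2^{\ell/2}k$.
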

Note that the parameter restrictions allow picking $l=\Theta(\log\log k)$ which indeed makes the error negligible in $k$.
\begin{proof}
	Using \cref{lem:simple-tail-bound}, we bound
	$$
		\begin{aligned}
			\varepsilon'' & \geq ||P_{\eta}\ket{\phi}||^2                                                                                                                                                                                                                                                               \\
			              & \geq ||P_sP_{\eta}|\phi\rangle||^2= || P_{\eta}P_s|\phi\rangle||^2                                                                                                                                                                                                                          \\
			              & = ||P_{\eta}({\CompOperatorComp{E}}_s+{\CompOperatorComp{E}}_f)|\phi_{\mathrm{succ}}\rangle||^2                                                                                                                                                                                             \\
			              & \geq ||P_{\eta}\CompOperatorComp{E}_s|\phi_{\mathrm{succ}}\rangle||^2+||P_{\eta}\CompOperatorComp{E}_f|\phi_{\mathrm{succ}}\rangle||^2-2\operatorname{Re}\left\langle\phi_{\mathrm{succ}}\right|\CompOperatorComp{E}_sP_{\eta}\CompOperatorComp{E}_f\left|\phi_{\mathrm{succ}}\right\rangle \\
			              & \geq ||P_{\eta}\CompOperatorComp{E}_f|\phi_{\mathrm{succ}}\rangle||^2-2\operatorname{Re}\left\langle\phi_{\mathrm{succ}}\right|\CompOperatorComp{E}_sP_{\eta}\CompOperatorComp{E}_f\left|\phi_{\mathrm{succ}}\right\rangle                                                                  \\
			              & \geq (1-\varepsilon')\|\CompOperatorComp{E}_f|\phi_{\mathrm{succ}}\rangle||^2-4\sqrt{\varepsilon_{\gamma,(1-\gamma)k}}-2\operatorname{Re}\left\langle\phi_{\mathrm{succ}}\right|\CompOperatorComp{E}_sP_{\eta}\CompOperatorComp{E}_f\left|\phi_{\mathrm{succ}}\right\rangle                 \\
			              & \text{ (here $\exp \left[-\frac{\delta'^2\mu'}{2m}\right]=\varepsilon'$}\text{as in lemma \ref{lem:ineq 33}, applying lemma \ref{lem:ineq 33})}                                                                                                                                             \\
			              & \geq (1-\varepsilon')\|\CompOperatorComp{E}_f|\phi_{\mathrm{succ}}\rangle||^2-4\sqrt{\varepsilon_{\gamma,(1-\gamma)k}}-\sqrt{2{\varepsilon'}+8\sqrt{\varepsilon_{\gamma,(1-\gamma)k}}}\text{ (due to (\ref{re<-> bound}) below)}\end{aligned}
	$$
	In the second line, we have used that $P_\eta$ and $P_s$ are both diagonal in the standard basis on $D$.
	The bound on the inner product expression is obtained as
	\begin{align}
		 & \operatorname{Re}\nonumber\langle\phi_{\mathrm{succ}}|\CompOperatorComp{E}_s P_{\eta}\CompOperatorComp{E}_f|\phi_{\mathrm{succ}}\rangle \leq \left\langle\phi_{\mathrm{succ}}\right|(\mathbb{I}-\CompOperatorComp{E}_f )\CompOperatorComp{E}_f\left|\phi_{\mathrm{succ}}\right\rangle+ \left\langle\phi_{\mathrm{succ}}\right|(\mathbb{I}-\CompOperatorComp{E}_f) (P_{\eta}\CompOperatorComp{E}_f-\CompOperatorComp{E}_f)\left|\phi_{\mathrm{succ}}\right\rangle \\
		 & \label{re<-> bound}\leq ||(\mathbb{I}-\CompOperatorComp{E}_f)||||(P_{\eta}\CompOperatorComp{E}_f-\CompOperatorComp{E}_f)|\phi_{\mathrm{succ}}\rangle||\leq \sqrt{2{\varepsilon'}+8\sqrt{\varepsilon_{\gamma,(1-\gamma)k}}},\end{align}
	where the last inequality is a gentle-measurement-type bound which we formulate as a specialized lemma, \cref{P_n ext norm lemma}, in \cref{sec:gentle} in the Appendix for convenience. $$\begin{aligned}
			\implies\mathbb{P} & [\text{Extractor fails and Prover Succeeds}]=||\CompOperatorComp{E}_fP_s|\phi\rangle||^2                                                                                                 \\
			                   & \leq \frac{\varepsilon''+\sqrt{2{\varepsilon'}+8\sqrt{\varepsilon_{\gamma,(1-\gamma)k}}}+4\sqrt{\varepsilon_{\gamma,(1-\gamma)k}}}{1-\varepsilon'}\leq \frac{\varepsilon}{1-\varepsilon}
		\end{aligned}
	$$
	$$\begin{aligned}
			\text{  where } \varepsilon & =\operatorname{max}\{\varepsilon', \varepsilon''+\sqrt{2{\varepsilon'}+8\sqrt{\varepsilon_{\gamma,(1-\gamma)k}}}+4\sqrt{\varepsilon_{\gamma,(1-\gamma)k}}\}                   \\
			                            & =\varepsilon''+\sqrt{2{\varepsilon'}+8\sqrt{\varepsilon_{\gamma,(1-\gamma)k}}}+4\sqrt{\varepsilon_{\gamma,(1-\gamma)k}}                                                       \\
			                            & \leq\varepsilon''+\sqrt{2{\varepsilon'}}+\sqrt{8\sqrt{\varepsilon_{\gamma,(1-\gamma)k}}}+4\sqrt{\varepsilon_{\gamma,(1-\gamma)k}} \text{ ( as $\sqrt{a^2+b^2}\leq |a|+|b|$ )} \\
			                            & \leq \varepsilon''+2\sqrt{{\varepsilon'}}+7\sqrt{\sqrt{\varepsilon_{\gamma,(1-\gamma)k}}}                                                                                     \\
			                            & \leq 3 \exp\left[-\frac{ k  }{128 \cdot c\cdot 2^{\ell}\cdot \log k}\right]+7\exp\left[-\frac{ k}{8\cdot 2^{\ell}} \right].
		\end{aligned}$$
	The last inequality is obtained by crudely balancing errors, which is done in  \cref{lem: negligiblity value lemma} in the Appendix.
	\qed
\end{proof}

\begin{lemma}[Using the martingale bounds]\label{lem:ineq 33}$$||P_{\eta}\CompOperatorComp{E}_f|\phi_{\mathrm{succ}}\rangle||^2\geq
		(1-\varepsilon')||\CompOperatorComp{E}_f\ket{\phi_{\mathrm{succ}}} \|^2-4\sqrt{\varepsilon_{\gamma,(1-\gamma)k}}$$
	where $\varepsilon_{\gamma,(1-\gamma)k}$ is tail bound as in lemma \ref{lem:comp 0^k}, $\underline \mu'$ is as in  \cref{lem:martingale-tail-bound}, \js{added $\leq \eta$, edited a bit here}\cm{I reverted your edit, the whole point is that we get $(1-2\gamma)k$ zeroes for free here and only the remaining $\eta-(1-2\gamma)$ ones are from the martingale tail bound}
	\[
		\eta-(1-2\gamma)k =(1-\delta')\underline \mu'
		\quad \text{and}\quad
		\varepsilon' =\exp\left(-\frac{{\delta'}^2\underline\mu'^2}{2 m}\right)
	\]
\end{lemma}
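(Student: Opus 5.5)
We will prove Lemma~\ref{lem:ineq 33} by unfolding $\CompOperatorComp{E}_f\ket{\phi_{\mathrm{succ}}}=\comp\,\mathcal E_f\comp P_s\ket\phi$ via Equation~\eqref{eq:EfPs-phi}. Up to the global unitary $\comp$, this state equals $\sum_{\mathbf x}\ket{\mathbf x}_\outputreg\ket{\psi^{\mathbf x}}_{ZD}+\ket{\delta'_{\gamma,k}}$, where $\ket{\psi^{\mathbf x}}$ is as in \eqref{eq:def-psi-x}. Since $P_\eta$ is diagonal in the computational basis of $D$, it does not mix different $\mathbf x$. Hence
\[
\|P_\eta\textstyle\sum_{\mathbf x}\ket{\mathbf x}\ket{\psi^{\mathbf x}}\|^2=\sum_{\mathbf x}\|P_\eta\ket{\psi^{\mathbf x}}\|^2 ,
\]
and likewise for the unprojected norm. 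It therefore suffices to prove, for each fixed $\mathbf x$, that $\|P_\eta\ket{\psi^{\mathbf x}}\|^2\ge(1-\varepsilon')\|\ket{\psi^{\mathbf x}}\|^2$. We then reattach the tail $\ket{\delta'_{\gamma,k}}$ at the end.

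For fixed $\mathbf x$, we split the count $\bar\omega$ of zeros into the zeros on $D_{\mathsf{hi}(\mathbf x)}$ and those on $\tilde D$.
\begin{itemize}
\item \textbf{Zeros on $D_{\mathsf{hi}(\mathbf x)}$.} In each summand $\ket{\psi^{\mathbf x}_S}(\comp\ket{0^\ell})^{\otimes s}_{D_S}$ we have $|S|\ge(1-\gamma)k$. However, $\comp\ket{0^\ell}$ is not exactly $\ket{0^\ell}$, so we cannot simply count $|S|$ zeros. We therefore reapply Lemma~\ref{lem:comp 0^k} to $(\comp\ket{0^\ell})^{\otimes s}$ with $s\ge(1-\gamma)k$. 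This yields at least $(1-\gamma)s\ge(1-2\gamma)k$ registers exactly in $\ket{0^\ell}$, up to a remainder of squared norm at most $\varepsilon_{\gamma,(1-\gamma)k}$. This explains the appearance of $\varepsilon_{\gamma,(1-\gamma)k}$ and the shift $\eta-(1-2\gamma)k$.
\item \textbf{Zeros on $\tilde D$.} For the good part, it remains to show that $\tilde D$ contains at least $\xi=\eta-(1-2\gamma)k=(1-\delta')\underline\mu'$ zeros with probability at least $1-\varepsilon'$. We use \eqref{psi main}, which places the $\tilde D$ part in $W^m_n$. We then pass to the symmetrized state \eqref{eq:sym}, which preserves $\|P_\xi\cdot\|$ because $P_\xi$ is permutation invariant. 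Lemma~\ref{lem:martingale-tail-bound}, which relies on Lemma~\ref{lem:measure} and Lemma~\ref{lem:cum-mean-bound}, then gives $\|(P_\xi)_{\tilde D}\ket\psi\|^2\ge1-\varepsilon'$.
\end{itemize}
Since $P_\eta\ge P_\xi^{\tilde D}\otimes P^{\mathsf{hi}}_{\ge(1-2\gamma)k}$ on these product-structured pieces, combining the two counts gives the desired lower bound on the good part.

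Finally, we account for the error terms with the elementary estimate $\|a+b\|^2\ge\|a\|^2-2\|a\|\|b\|$ applied to norms at most $1$. Both the new remainder from the second application of Lemma~\ref{lem:comp 0^k} and $\ket{\delta'_{\gamma,k}}$ (with $\varepsilon_{\gamma,k}\le\varepsilon_{\gamma,(1-\gamma)k}$) contribute at most $2\sqrt{\varepsilon_{\gamma,(1-\gamma)k}}$ each. Together they give the additive loss $4\sqrt{\varepsilon_{\gamma,(1-\gamma)k}}$, and the relation $\|\sum_{\mathbf x}\ket{\mathbf x}\ket{\psi^{\mathbf x}}\|^2\ge\|\CompOperatorComp{E}_f\ket{\phi_{\mathrm{succ}}}\|^2-O(\sqrt{\varepsilon_{\gamma,k}})$ follows in the same way.

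The main obstacle is the first bullet. The second use of Lemma~\ref{lem:comp 0^k} must be applied coherently inside the sum over $S$ and must commute with the subsequent symmetrization and martingale argument on $\tilde D$. We expect this to be handled by noting that the $\tilde D$-analysis is uniform in the $D_{\mathsf{hi}}$ content: the bound from Lemma~\ref{lem:martingale-tail-bound} holds with register $E$ absorbing $D_{\mathsf{hi}(\mathbf x)}Z$. Consequently the two projectors act on disjoint registers and their success probabilities multiply.
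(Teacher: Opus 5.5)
Your proposal follows the paper's proof essentially step for step: you split over $\mathbf x$, bound $P_\eta$ from below by $(P_{(1-2\gamma)k})_{D_{\mathsf{hi}(\mathbf x)}}(P_{\eta-(1-2\gamma)k})_{\tilde D}$, obtain the $\tilde D$ zeros from the martingale bound on the symmetrized state, and obtain $(1-2\gamma)k$ zeros on $D_{\mathsf{hi}(\mathbf x)}$ from a second application of Lemma~\ref{lem:comp 0^k} with $(1-\gamma)k$ in place of $k$. The one correction concerns your last paragraph: probabilities of the two commuting events need not multiply on an entangled state, but you only need that their failure probabilities add (a union bound for commuting projectors); the paper avoids the ordering worry altogether by first forming $\ket{\psi'}=(P_{\eta-(1-2\gamma)k})_{\tilde D}\ket\psi$ (squared norm at least $1-\varepsilon'$), which keeps the form $\sum_S\ket{\tilde\psi^{\mathbf x}_S}(\comp\ket{0^\ell})^{\otimes s}_{D_S}$ because the projector acts only on $\tilde D$, and only then applying Lemma~\ref{lem:comp 0^k}.
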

\begin{proof}
	Consider $\CompOperatorComp{E}_fP_s\ket{\phi}=\CompOperatorComp{E}_f\ket{\phi_{\mathrm{succ}}}$  and
	\begin{align*}
		\ket{\tilde{\phi}^{\prime}} & =\frac{\CompOperatorComp{E}_f\left|\phi_{\mathrm{succ}}\right\rangle}{\| \CompOperatorComp{E}_f|\phi_{\mathrm{succ}}\rangle \|}                                                                                       \\
		                            & = \sqrt{\Gamma}\sum_{\mathbf x}\sqrt{p_{\mathbf x}}\ket{\mathbf x}_{\outputreg}\ket{\psi_{\mathrm{norm}}^{\mathbf x}}_{ED}+\frac{\ket{\delta'_{\gamma,k}}}{\| \CompOperatorComp{E}_f|\phi_{\mathrm{succ}}\rangle \|}.
	\end{align*}
	Here we have used \cref{eq:EfPs-phi,eq:def-delta-prime-gamma-k,eq:def-psi-x} and defined
	\begin{align*}
		\ket{\psi_{\mathrm{norm}}^{\mathbf x}} & =\frac{\ket{\psi^{\mathbf x}}}{\|\ket{\psi^{\mathbf x}}\|},\text{ the probability distribution} \\
		p_{\mathbf x}                          & =\frac{\|\ket{\psi^{\mathbf x}}\|^2}{\sum{\mathbf x}\|\ket{\psi^{\mathbf x}}\|^2} \text{ and }
		\Gamma=\frac{\sum{\mathbf x}\|\ket{\psi^{\mathbf x}}\|^2}{\| \CompOperatorComp{E}_f|\phi_{\mathrm{succ}}\rangle \|^2}.
	\end{align*}

	We bound \begin{align*}
		\|P_\eta	\ket{\tilde{\phi}^{\prime}}\|^2 & \ge \Gamma\mathbb E_{\mathbf x\leftarrow p}\left[\|P_\eta\ket{\psi_{\mathrm{norm}}^{\mathbf x}}\|^2\right]-2\sqrt{\varepsilon_{\gamma,k}}                                                              \\
		                                         & =	\Gamma\mathbb E_{\mathbf x\leftarrow p}\left[\left\|P_\eta\frac{1}{\|\ket{\psi_{\mathrm{sym}}^{\mathbf{x}}}\|}\ket{\psi_{\mathrm{sym}}^{\mathbf{x}}}\right\|^2\right]-2\sqrt{\varepsilon_{\gamma,k}}
	\end{align*}
	where we have used \cref{eq:bound-on-delta-prime-gamma-k}. Here, $\mathbf x\gets p$ denotes sampling $\mathbf x$ from the probability distribution $p$. Again using the simplifying notation\\ $\ket\psi=\frac{1}{\|\ket{\psi_{\mathrm{sym}}^{\mathbf{x}}}\|}\ket{\psi_{\mathrm{sym}}^{\mathbf{x}}}$ we bound
	\begin{align*}
		\|P_\eta\ket\psi_{ED}\|^2 & \ge \|\left(P_{(1-2\gamma)k}\right)_{D_{\mathsf{hi}(\mathbf x)}}\left(P_{\eta-(1-2\gamma)k}\right)_{D_{\mathsf{hi}(\mathbf x)^c}}\ket\psi_{ED}\|^2
	\end{align*}
	as choosing particular registers to be in $\ket{0^{\ell}}$ makes the norm smaller. Defining
	\begin{align*}
		\ket{\psi'}=\left(P_{\eta-(1-2\gamma)k}\right)_{D_{\mathsf{hi}(\mathbf x)^c}}\ket\psi_{ED}
	\end{align*}
	we have that \js{edited line below}\cm{reverted}
	\begin{align*}
		\|	\ket{\psi'}\|^2\ge 1-\Pr\left[\sum_{i=1}^m X_i\le \eta-(1-2\gamma)k\right]\geq 1-\varepsilon'
	\end{align*}
	by applying \cref{lem:martingale-tail-bound} with $\xi=\eta-(1-2\gamma)k$, and
	\begin{align*}
		\ket{\psi'}=\sum_{\substack{S \subset\mathsf{hi}(\mathbf x) \\
				s=|S| \geq (1-\gamma)k}}\left|\tilde \psi^{\mathbf{\mathbf{x}}}_{S}\right\rangle_{ZD_{S^C}}\left(\comp|0^{\ell}\rangle\right)_{D_S}^{\otimes s}
	\end{align*}
	for some states $\left|\tilde \psi^{\mathbf{\mathbf{x}}}_{S}\right\rangle_{ZD_{S^C}}$ due to \cref{eq:EfPs-phi}.
	Applying \cref{lem:comp 0^k} with $(1-\gamma)k$ in place of $k$ we get
	\begin{align}\label{eq:applying-comp-0-in-the-end}
		\ket{\psi'}=\sum_{\substack{S \subset\mathsf{hi}(\mathbf x) \\
				s=|S| \geq (1-2\gamma)k}}\left|\tilde{\tilde \psi}^{\mathbf{\mathbf{x}}}_{S}\right\rangle_{ZD_{S^C}}|0^{\ell}\rangle_{D_S}^{\otimes s}+\ket{\delta'_{\gamma,(1-\gamma)k}}
	\end{align}
	for some states $\tilde{\tilde \psi}^{\mathbf{\mathbf{x}}}$, where
	\begin{align*}
		\|\ket{\delta_{\gamma, (1-\gamma)k}}\|^2\le \varepsilon_{\gamma,(1-\gamma)k}.
	\end{align*}
	It follows that
	\begin{align*}
		\|\left(P_{(1-2\gamma)k}\right)_{D_{\mathsf{hi}(\mathbf x)}}\left(P_{\eta-(1-2\gamma)k}\right)_{D_{\mathsf{hi}(\mathbf x)^c}}\ket\psi_{ED}\|^2 & = \|\left(P_{(1-2\gamma)k}\right)_{D_{\mathsf{hi}(\mathbf x)}}\ket{\psi'}\|^2 \\
		                                                                                                                                               & \ge 1-\varepsilon'-2\varepsilon_{\gamma,(1-\gamma) k}.
	\end{align*}
	In summary, we have
	\begin{align*}
		\|P_\eta\ket\psi_{ED}\|^2 & \ge 1-\varepsilon'-2\varepsilon_{\gamma,(1-\gamma) k}
	\end{align*}
	And thus
	\begin{align*}
		||P_{\eta}\CompOperatorComp{E}_f|\phi_{\mathrm{succ}}\rangle||^2 & \ge \|\CompOperatorComp{E}_f\ket{\phi_{\mathrm{succ}}} \|^2(1-\varepsilon'-2\varepsilon_{\gamma,(1-\gamma) k})-2\sqrt{\varepsilon_{\gamma,k}} \\
		                                                                 & \ge \|\CompOperatorComp{E}_f\ket{\phi_{\mathrm{succ}}} \|^2(1-\varepsilon')-4\sqrt{\varepsilon_{\gamma,(1-\gamma )k}}.
	\end{align*}
	\qed
\end{proof}

\subsection{Extractability -- lifting to general prover.}\label{sec:lift-to-general-prover}
We now go on to apply our bound for a prover outputting a proof with a fixed value for $\mathbf a$ to obtain a bound for a general prover. To this end, we prove a bound on the probability that a general prover succeeds but the extractor fails, in terms of the probability that the same happens for a fixed vector of commitments $\mathbf a$.
\begin{theorem}\label{thm:genprover}
	Let $\mathcal P $ be a dishonest prover for the Fischlin transform of a special-sound \sigp for a relation $R$ with unique responses, in the QROM, making no more than $q$ quantum  queries to the random oracle. We have
	\begin{equation}\label{eq:extred-1}
		\Pr_{(x,\pi,w)\leftarrow\langle \mathcal P, \mathcal E\rangle}[\mathcal V(x,\pi)=1]=\Pr_{(x,\pi)\leftarrow\mathcal P^H}[\mathcal V(x,\pi)=1]
	\end{equation}
	and
	\begin{equation}\label{eq:extred-2}
		\Pr_{(x,\pi,w)\leftarrow\langle \mathcal P, \mathcal E\rangle}[\mathcal V(x,\pi)=1\wedge (x,w)\notin R]\le 4(q+k)^2\cdot \delta,
	\end{equation}
	where $\delta$ is such that for any prover $\mathcal P'$ and any $\mathbf a_0\in\mathcal M^{k}$,
	\begin{equation}\label{eq:extred-assump1}
		\Pr_{(x,\pi=(\mathbf a,\mathbf c,\mathbf z),w)\leftarrow\langle \mathcal P', \mathcal E\rangle}[\mathbf a=\mathbf a_0\wedge \mathcal V(x,\pi)=1\wedge (x,w)\notin R]\le\delta.
	\end{equation}
\end{theorem}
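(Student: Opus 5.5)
The first claim \eqref{eq:extred-1} should follow directly from the correctness of the compressed oracle. The extractor $\Ext_{\mathsf{Fis}}$ only acts on the database register $D$ after the prover has output $\pi$ and after $\mathcal V$ has evaluated its $k$ hash queries. Up to that point the compressed oracle is an isometric re-encoding of the purified uniform oracle $\ket{\Pi_0}$, and the query unitary $O_{XYD}$ is conjugated by $\comp_D$. The joint distribution of $(\pi,v)$ is therefore identical to the one obtained with a uniformly random $H$. The measurement of $D$ in Step~3 of the extractor comes afterwards and cannot change the distribution of $v$. I will treat the combined process $\mathcal V^{H}\circ\mathcal P^{H}$ as a single query algorithm making at most $q+k$ queries, which is where the factor $(q+k)$ will come from.

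For \eqref{eq:extred-2}, the plan is to decompose the subnormalized ``bad'' vector $\ket{\phi_{\mathrm{bad}}}=\mathcal E_f P_s\ket{\phi}$, where $P_s$ now refers to the full verification of the combined algorithm. The decomposition is first by the commitment vector $\mathbf a$ in the output register $\outputreg$, and then by the index $j\in[q+k]$ of the first query that touches a database register with prefix $\mathbf a$. The components for different $\mathbf a$ are orthogonal because of $\outputreg$, so
\[
\|\ket{\phi_{\mathrm{bad}}}\|^2=\sum_{\mathbf a}\|\ket{\phi^{\mathbf a}_{\mathrm{bad}}}\|^2 .
\]
For a fixed $\mathbf a$, let $R^{\mathbf a}$ be the projector onto ``all registers $D_{(\mathbf a,\cdot)}$ are $\ket\bot$''. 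Before the first query with prefix $\mathbf a$, the state lies in the range of $R^{\mathbf a}$, and in the bad event it does not lie there at the end. Inserting $R^{\mathbf a}$ and $\mathbb 1-R^{\mathbf a}$ after each query and telescoping writes $\ket{\phi^{\mathbf a}_{\mathrm{bad}}}=\sum_{j}\ket{\phi^{\mathbf a,j}}$. Here $\ket{\phi^{\mathbf a,j}}$ is the branch in which the $j$-th query moves the state out of $R^{\mathbf a}$. Equivalently, up to the query operator, this is the branch in which the query input register $X$ at query $j$ carries prefix $\mathbf a$ while $D_{(\mathbf a,\cdot)}$ is still empty. Cauchy--Schwarz then gives
\[
\|\ket{\phi^{\mathbf a}_{\mathrm{bad}}}\|^2\le (q+k)\sum_j\|\ket{\phi^{\mathbf a,j}}\|^2 .
\]

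Next, for each fixed $j$, I will build a legitimate prover $\mathcal P'_j$. It runs $\mathcal P$ (together with the verifier's queries) up to query $j$ and measures the commitment-vector prefix of $X$, obtaining some $\mathbf a_0$ with probability $p_j(\mathbf a_0)$. It then continues the computation. Conditioned on outcome $\mathbf a_0$, the registers $D_{(\mathbf a_0,\cdot)}$ are still $\ket\bot$ on the relevant branch. The post-measurement state can therefore be regarded as the initial state (quantum advice) of a new prover $\mathcal P'_{j,\mathbf a_0}$ that interacts with a fresh compressed oracle on the prefix-$\mathbf a_0$ part. This is the ``post-selection without query cost'' step. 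By hypothesis \eqref{eq:extred-assump1}, the probability that $\mathcal P'_{j,\mathbf a_0}$ succeeds with $\mathbf a=\mathbf a_0$ while the extractor fails is at most $\delta$. Averaging over $\mathbf a_0$ with weights $p_j(\mathbf a_0)$ gives a total of at most $\delta$ for each $j$.

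The remaining and hardest step is to compare $\sum_{\mathbf a}\|\ket{\phi^{\mathbf a,j}}\|^2$ with this quantity for $\mathcal P'_j$. The difficulty is that $\ket{\phi^{\mathbf a,j}}$ involves the non-commuting projectors $R^{\mathbf a}$ and $\mathbb 1-R^{\mathbf a}$ at two consecutive times, rather than a single measurement. My first attempt will be to relate the projector onto the branch where the $j$-th query leaves $R^{\mathbf a}$ to the projector onto ``$X$ at query $j$ has prefix $\mathbf a$'' composed with $R^{\mathbf a}$. I would then use the quantum union bound (\cref{lem:q-union}) or the elementary inequality $\|(A+B)v\|^2\le2\|Av\|^2+2\|Bv\|^2$ to absorb the discrepancy at the cost of a factor $4$. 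That would give $\sum_{\mathbf a}\|\ket{\phi^{\mathbf a,j}}\|^2\le 4\delta$. Summing over the $q+k$ values of $j$ and multiplying by the Cauchy--Schwarz factor $(q+k)$ yields $4(q+k)^2\delta$, as claimed.
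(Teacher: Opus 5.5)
Your treatment of \eqref{eq:extred-1} matches the paper, and so does your overall accounting: a factor $q+k$ from Cauchy--Schwarz over the first-touch index, $\delta$ from the hypothesis, and a factor $4$ from the quantum union bound. However, the step that carries the reduction is not supplied, and the decomposition you chose gets in its way.

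\textbf{The reduction prover.} Your $\mathcal P'_j$ runs $\mathcal P$ \emph{unprojected} up to query $j$ and then measures the prefix. In that run, earlier queries may already have written into $D_{(\mathbf a_0,\cdot)}$. So the post-measurement state is not of the form $\ket{\chi}\otimes\ket{\bot}_{D_{(\mathbf a_0,\cdot)}}$, and \eqref{eq:extred-assump1} cannot be invoked. Your branch $\ket{\phi^{\mathbf a,j}}$ is fresh only because of the projectors $R^{\mathbf a}$ on the extractor's database. In particular, $\mathds 1-R^{\mathbf a}$ applied right after the $j$-th query acts on registers no prover controls.

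The paper instead projects the \emph{query input register}, using $\Pi_{\hat{\mathbf a}}=\proj{\hat{\mathbf a}}_{X_{\mathcal M}}$ for a commitment vector $\hat{\mathbf a}$ fixed before the run. It works as follows:
\begin{enumerate}
\item It uses the exact identity $(UO)^{q'}=(UO(\mathds 1-\Pi_{\hat{\mathbf a}}))^{q'}+\sum_i(UO)^{q'-i+1}\Pi_{\hat{\mathbf a}}(UO(\mathds 1-\Pi_{\hat{\mathbf a}}))^{i-1}$.
\item It observes that $\Pi_{\hat{\mathbf a}}(UO(\mathds 1-\Pi_{\hat{\mathbf a}}))^{i-1}\ket{\phi_0}\ket{+^{\ell}}^{\otimes\Xi}=\ket{\tilde\psi_i}\ket{+^{\ell}}_{D_{\hat{\mathbf a}}}$.
\item For each pair $(\hat{\mathbf a},i)$, it builds a prover that prepares $\ket{\psi_i}$ by rejection sampling against a self-simulated oracle $H'$, measuring only its own register $X$. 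Afterwards the prover routes $\hat{\mathbf a}$-prefix queries to the real oracle.
\end{enumerate}
Because the post-selection on earlier queries must know the prefix in advance, you need one prover per pair $(\hat{\mathbf a},i)$, not one measuring prover per $j$.

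This choice also kills the residual branch exactly, because the coherent verifier's own queries carry prefix $\hat{\mathbf a}$. In your version, the ``never leaves $R^{\mathbf a}$'' branch is \emph{not} annihilated by $\mathcal E_fP_s$. The reason is that $\comp\ket{0^\ell}$ has a $\ket\bot$-component of amplitude $2^{-\ell/2}$. So your claim that the bad vector lies outside the range of $R^{\mathbf a}$ is false; this term is small but must be bounded separately.

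\textbf{The union-bound step.} Your per-$j$ bound $\sum_{\mathbf a}\|\ket{\phi^{\mathbf a,j}}\|^2\le 4\delta$ is asserted, not derived. Averaging with the weights $p_j(\mathbf a_0)$ of the unprojected run uses the wrong weights: the relevant weights are the norms of the post-selected branches. The quantum union bound does not relate the two index by index.

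The paper applies \cref{lem:q-union} for fixed $\hat{\mathbf a}$ across \emph{all} indices at once. It uses the norm telescoping $\sum_i\|\ket{\tilde\psi_i}\|^2=1-\|(UO(\mathds 1-\Pi_{\hat{\mathbf a}}))^{q'}\ket{\phi_0}\ket{+^{\ell}}^{\otimes\Xi}\|^2\le 4\sum_i\varepsilon_{\hat{\mathbf a},i}$. The Heisenberg-picture projectors here are $((UO)^{i-1})^\dagger(\mathds 1-\Pi_{\hat{\mathbf a}})(UO)^{i-1}$. Then $\sum_{\hat{\mathbf a}}\varepsilon_{\hat{\mathbf a},i}=1$ for each $i$ supplies the second factor $q'$. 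That identity holds because the $\varepsilon_{\hat{\mathbf a},i}$ are outcome probabilities of a single measurement on $X$.

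If you ran the union bound with your database projectors $R^{\mathbf a}$ instead, the analogous quantities would be $\|(\mathds 1-R^{\mathbf a})\ket{\phi_j}\|^2$, where $\ket{\phi_j}$ is the unprojected state after $j$ queries. These sum over $\mathbf a$ to the expected number of distinct prefixes in the database, which can be close to $j$. That would cost an extra factor of order $q$.
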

Combining with \cref{Extractability main bound}, we get our main result.
\begin{corollary}[straight-line extractability of the Fischlin transform in the QROM]\label{cor:main}
	Let $\Sigma$ be a \sigp for a relation $R$ with special soundness and unique responses. If the challenge space is of size  $\challspacesize=c\cdot 2^{\ell}\cdot \log{(k)}$ for some constant $c$ and parameters $k,l$ of the Fischlin transform,
	$2^{\frac 1 c}\le k\le 2^{\frac{2^{\ell}}{256\cdot c}}$, $l\ge 14$, and $l=O(\log\log k)$ then $\mathsf{Fis}[\Sigma]$ is a proof of knowledge with straight-line extractability. The extractor's simulation of the quantum-accesible random oracle can be made perfect ($\varepsilon_{\mathrm{sim}}=0$) and for a $q$-query adversarial prover,
	\begin{align*}
		\varepsilon_{\mathrm{ex}}\le q^2\cdot \negl(k)
	\end{align*}
\end{corollary}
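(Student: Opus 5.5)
The corollary follows by composing \cref{thm:genprover} with \cref{Extractability main bound}, plus a separate check that the simulation is perfect.

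\textbf{Simulation error.} The extractor $\Ext_{\mathsf{Fis}}$ answers all oracle queries with Zhandry's compressed oracle. This is perfectly indistinguishable from a uniformly random $H$, because $\comp$ is a unitary that maps $\ket{\Pi_0}$ to $\ket{\Delta_0}$ and conjugates the superposition-oracle query unitary into $O_{XYD}$. Verification is run through the same interface. The extractor's measurement of $D$ happens only after $(\pi,Z,v)$ has been produced, so it cannot change their joint distribution. Hence $\varepsilon_{\mathrm{sim}}=0$, which is also the content of \eqref{eq:extred-1}.

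\textbf{Uniform bound on $\delta$.} I want to instantiate the hypothesis \eqref{eq:extred-assump1} of \cref{thm:genprover} with $\delta=\varepsilon/(1-\varepsilon)$ from \cref{Extractability main bound}. For this, fix an arbitrary prover $\mathcal P'$ and an arbitrary $\mathbf a_0\in\mathcal M^k$. The event "$\mathbf a=\mathbf a_0$, $\mathcal V$ accepts, and extraction fails" is exactly the event that $\mathcal P'$ succeeds against the modified NIZK $\Pi^{(\mathbf a_0)}_{\mathbf{Fis}}$ while $\Ext_{\mathsf{Fis}}$ fails. The extractor fails whenever it finds no two accepting transcripts with distinct challenges for some index $i$ under $\mathbf a_0$, and this is implied by the event $\mathcal E_f$ restricted to $\mathbf a_0$-registers.

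The bound $\|\CompOperatorComp{E}_fP_s\ket\phi\|^2\le\varepsilon/(1-\varepsilon)$ is stated for an arbitrary dishonest prover. Its parameters $\gamma,\eta,\delta'$ are fixed in \cref{lem: negligiblity value lemma} depending only on $k,\ell,c$, not on $q$ or on the prover. So $\delta$ is uniform over $\mathcal P'$ and $\mathbf a_0$, which is what \cref{thm:genprover} requires. One point needs care: under unique responses, registers for non-accepting transcripts or other commitment vectors are absorbed into $Z$. The projector restricted to the $\mathbf a_0$ registers therefore dominates the true failure event, and this domination goes in the right direction.

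\textbf{Asymptotics.} \cref{thm:genprover} then gives $\varepsilon_{\mathrm{ex}}\le 4(q+k)^2\delta$. It remains to show $\delta=\negl(k)$ under the stated constraints. With $\ell=O(\log\log k)$ we have $2^\ell=\mathrm{polylog}(k)$, so both exponents $k/(128c2^\ell\log k)$ and $k/(8\cdot 2^\ell)$ are $k^{1-o(1)}$. Hence $\varepsilon$, and therefore $\varepsilon/(1-\varepsilon)$, is negligible in $k$. The hypotheses $2^{1/c}\le k\le 2^{2^\ell/(256c)}$ and $\ell\ge14$ are compatible with $\ell=\Theta(\log\log k)$ for a suitable constant, which gives an unbounded family of parameter sets.

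Finally, $(q+k)^2\le 2q^2+2k^2$, and $k^2$ times a negligible function is still negligible. This yields $\varepsilon_{\mathrm{ex}}\le q^2\cdot\negl(k)$ for $q\ge1$; the case $q=0$ is trivial, since success then has negligible probability. The extractor runs in time polynomial in $\lambda+q$, since maintaining the compressed database and running $\Ext_{\mathrm{ss}}$ are both efficient.

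The main obstacle I expect is the uniformity check: verifying that \cref{Extractability main bound} really applies to every prover $\mathcal P'$, in the sense required by \eqref{eq:extred-assump1}, once the non-$\mathbf a_0$ registers are absorbed into $Z$. Everything else is bookkeeping.
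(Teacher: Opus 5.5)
Your proposal is correct and is essentially the paper's own argument: perfect simulation follows from correctness of the compressed oracle, \eqref{eq:extred-1}, and the extraction bound follows by instantiating hypothesis \eqref{eq:extred-assump1} of \cref{thm:genprover} with the $q$-independent bound $\varepsilon/(1-\varepsilon)=\negl(k)$ of \cref{Extractability main bound} and then absorbing the factor $4(q+k)^2$ (your domination argument for the $\mathbf a_0$-restricted failure projector and your runtime remark make explicit what the paper leaves implicit). The one slip is the $q=0$ aside: a zero-query prover still succeeds with probability on the order of $2^{-\ell k}>0$ while extraction always fails, so ``negligible'' does not give $\varepsilon_{\mathrm{ex}}\le 0$ there, and the statement should be read as $\varepsilon_{\mathrm{ex}}\le 4(q+k)^2\cdot\negl(k)$ (equivalently $q^2\cdot\negl(k)$ for $q\ge 1$), which is exactly what the composition yields.
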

A quick note on instantiation. Given a \sigp $\hat \Sigma$  with constant challenge space of size $\hat\challspacesize$ we can pick $k$ (not too small) first.  Now we can choose $c$ and $l=\lceil\log\log k+\log c+8\rceil$, then set $r=\lceil\log_{\hat \challspacesize}(c\cdot 2^{\ell}\cdot \log(k))\rceil$ to allow for the desired challenge space size, and define the \sigp $\Sigma$ as the $r$-fold parallel repetition of $\hat \Sigma$ with the challenge space artificially restricted to size $c\cdot 2^{\ell}\cdot \log(k)$.\footnote{We ignore rounding here, which changes $c$ slightly.} The combination of $\Sigma$, $k,l,c$ now fulfils the prerequisites of \cref{cor:main}. As $k$ is a lower bound for the security parameter, this calculation shows that our result supports a parameter region that allows asymptotic security (and thus arbitarily high concrete levels of security).
\begin{proof}[of \cref{thm:genprover}]
	The statement in Equation \eqref{eq:extred-1} is immediate by the perfect correctness of the compressed oracle.

	We begin by constructing a family of provers that either output a valid proof with a fixed commitment vector $\hat{\mathbf a}$, or a dummy symbol. Construct prover $\tilde{\mathcal P}_{\hat{\mathbf a}}$ as follows.
	\begin{enumerate}
		\item Run $(x,\pi=(\mathbf a,\mathbf c,\mathbf z))\leftarrow\mathcal P^H$.
		\item Run $b\leftarrow \mathcal V^H(x,\pi)$.
		\item If $b=1$ and $\mathbf a=\hat{\mathbf a}$, output $(x,\pi)$, else, output $\bot$.
	\end{enumerate}
	Let $\mathcal P_{\hat{\mathbf a}}$ be a version of $\tilde{\mathcal P}_{\hat{\mathbf a}}$ with all measurements delayed to the end, i.e., where verification is run coherently. Observe that $\mathcal P_{\hat{\mathbf a}}$ makes at most $q'=q+k$ queries. Without loss of generality, $\mathcal P_{\hat{\mathbf a}}$ starts with some initial state $\ket{\phi_0}$, applies $\left(UO^H\right)^{q'}$ and measures the output register $\outputreg$. We now consider $\mathcal P_{\hat{\mathbf a}}$ as run by the extractor, i.e., it gets query access to the compressed oracle instead of a standard quantum-accessible random oracle. Denote by $\Xi=|\mathcal M|^k\cdot k\cdot \challspacesize \cdot |\mathcal Z|$ the cardinality of the  domain of the random oracle. We decompose the final state after the unitary part of $\mathcal P_{\hat{\mathbf a}}$ as
	\begin{align*}
		  & \left(UO^H\right)^{q'}\ket{\phi_0}_{XYE}\ket{+^{\ell}}^{\otimes \Xi}_{D}                                                                                                                                            \\
		= & \left(UO^H(\mathds 1-\proj{\hat{\mathbf a}})_{X_{\mathcal M}}\right)^{q'}\ket{\phi_0}_{XYE}\ket{+^{\ell}}^{\otimes \Xi}_{D}                                                                                         \\
		  & \quad+\sum_{i=1}^{q'}\left(UO^H\right)^{q'-i+1}\proj{\hat{\mathbf a}}_{X_{\mathcal M}}\left(UO^H(\mathds 1-\proj{\hat{\mathbf a}}\right)_{X_{\mathcal M}})^{i-1}\ket{\phi_0}_{XYE}\ket{+^{\ell}}^{\otimes \Xi}_{D}.
	\end{align*}
	Here, $E$ is the adversary's working register. Note that by construction, $\mathcal P_{\hat{\mathbf a}}$ always queries an input starting with the $\mathbf a$ in its output proof (if it outputs one). Therefore we have
	\begin{align*}
		P_s\left(UO^H(\mathds 1-\proj{\hat{\mathbf a}}\right)_{X_{\mathcal M}})^{q'}\ket{\phi_0}_{XYE}\ket{+^{\ell}}^{\otimes \Xi}_{D}=0.
	\end{align*}
	Now note that by construction,
	\begin{align*}
		\proj{\hat{\mathbf a}}_{X_{\mathcal M}}\left(UO^H(\mathds 1-\proj{\hat{\mathbf a}}\right)_{X_{\mathcal M}})^{i-1}\ket{\phi_0}_{XYE}\ket{+^{\ell}}^{\otimes \Xi}_{D}=\ket{\tilde\psi_i}_{XYED_{\hat{\mathbf a}}^c}\ket{+^{\ell}}^{\otimes ([k]\cdot \challspacesize \cdot |\mathcal Z|)}_{D_{\hat{\mathbf a}}},
	\end{align*}
	for some sub-normalized state $\ket{\tilde\psi_i}$, as before each query, the input register is projected away from $\ket{\hat{\mathbf a}}$.\footnote{Sub-normalization follows from $\sum_i |\ket{\tilde\psi_i}|^2 \le 1$ which holds by orthogonality of the decomposition.} Let $\ket{\psi_i}=\|\ket{\tilde \psi_i}\|^{-1}\ket{\tilde \psi_i}$. We now define, for each pair $(\hat{\mathbf a}, i)\in\mathcal M^k\times [q]$, a new prover $\mathcal P_{\hat{\mathbf a}, i}^H$ as follows.
	\begin{enumerate}
		\item Prepare $\ket{\psi_i}$ as follows:
		      \begin{enumerate}
			      \item Initialize a compressed oracle for a random function $H'$ identically distributed as $H$.
			      \item Run $\mathcal P^{H'}_{\hat{\mathbf a}}$ until right before the $i$-th query, but for each query, apply the binary measurement whether the query input starts with $\hat{\mathbf a}$. If it does, abort and restart from (a).
			      \item Finally, apply the same measurement to the query input register. If it returns ``not $\hat{\mathbf a}$'', abort and restart from (a).
		      \end{enumerate}
		\item Run $(x,\pi)\leftarrow\mathcal P^{H''}_{\hat{\mathbf a}}$ from its $i$-th query, starting with the prepared joint state $\ket{\psi_i}\ket{+^{\ell}}^{\otimes ([k]\cdot \challspacesize \cdot |\mathcal Z|)}$ of prover and $H'$ oracle database, where
		      \begin{align*}
			      H''(\mathbf a,i,c,z)=\begin{cases}
				                           H(\mathbf a,i,c,z)  & \text{ if }\mathbf a=\hat{\mathbf a} \\
				                           H'(\mathbf a,i,c,z) & \text{ else.}
			                           \end{cases}
		      \end{align*}
		      Here, it is understood that $H'$ is still instantiated with the compressed oracle simulated by the prover $\mathcal P_{\hat{\mathbf a}, i}^H$, while $H$ is the oracle the prover $\mathcal P_{\hat{\mathbf a}, i}^H$ has access to.
		\item Output $(x,\pi)$.
	\end{enumerate}
	This prover might not be time-efficient. This is not a problem, however, as all extractability statements hold for provers only (at most) fulfilling a query bound.

	We can now relate the probability that $\mathcal P$ succeeds and the extractor fails to the provers $\mathcal P_{\hat{\mathbf a}, i}^H$ as follows,
	\begin{align*}
		    & \Pr_{(x,\pi,w)\leftarrow\langle \mathcal P, \mathcal E\rangle}[\mathcal V(x,\pi)=1\wedge (x,w)\notin R]                                                                                                                                                                \\
		=   & \sum_{\hat{\mathbf a}}\left\|\mathcal E_fP_s\sum_{i=1}^{q'}\left(UO^H\right)^{q'-i+1}\proj{\hat{\mathbf a}}_{X_{\mathcal M}}\left(UO^H(\mathds 1\!-\!\proj{\hat{\mathbf a}}\right)_{X_{\mathcal M}})^{i-1}\ket{\phi_0}_{XYE}\ket{+^{\ell}}^{\otimes \Xi}_{D}\right\|^2 \\
		\le & \sum_{\hat{\mathbf a}}\Bigg(\sum_{i=1}^{q'}\Big\|\mathcal E_fP_s\left(UO^H\right)^{q'-i+1}                                                                                                                                                                             \\
		    & \quad \quad\quad\quad\proj{\hat{\mathbf a}}_{X_{\mathcal M}}\left(UO^H(\mathds 1\!-\!\proj{\hat{\mathbf a}}\right)_{X_{\mathcal M}})^{i-1}\ket{\phi_0}_{XYE}\ket{+^{\ell}}^{\otimes \Xi}_{D}\Big\|\Bigg)^2                                                             \\
		\le & \delta \sum_{\hat{\mathbf a}}\left(\sum_{i=1}^{q'}\left\|\ket{\tilde \psi_i}\right\|\right)^2,
	\end{align*}
	where the first inequality is the triangle inequality and the second inequality results from assumption \eqref{eq:extred-assump1} applied to the provers $\mathcal P_{\hat{\mathbf a}, i}^H$. It remains to bound the sum of norms on the right-hand side. Define
	\begin{align*}
		\varepsilon_{\hat{\mathbf a},i}=\left\|\proj{\hat{\mathbf a}}(UO)^{i-1}\ket{\phi_0}\ket{+^{\ell}}^{\otimes \Xi}_{D}\right\|^2.
	\end{align*}
	We can now bound, using Cauchy-Schwarz in the first step,
	\begin{align*}
		    & \frac 1 {q'}	\left(\sum_{i=1}^{q'}\left\|\ket{\tilde \psi_i}\right\|\right)^2\le\sum_{i=1}^{q'}\left\|\ket{\tilde \psi_i}\right\|^2                                                                           \\
		=   & \sum_{i=1}^{q'}\bra{\phi_0}_{XYE}\bra{+^{\ell}}^{\otimes \Xi}_{D}\left(\left(\mathds 1-\proj{\hat{\mathbf a}}\right)_{X_{\mathcal M}}\left(O^H\right)^\dagger U^\dagger\right)^{i-1}                          \\
		    & \quad\quad\quad\quad\quad\quad \proj{\hat{\mathbf a}}_{X_{\mathcal M}}\left(UO^H\left(\mathds 1-\proj{\hat{\mathbf a}}\right)_{X_{\mathcal M}}\right)^{i-1}\ket{\phi_0}_{XYE}\ket{+^{\ell}}^{\otimes \Xi}_{D} \\
		=   & 1-\bra{\phi_0}_{XYE}\bra{+^{\ell}}^{\otimes \Xi}_{D}\left(\left(\mathds 1-\proj{\hat{\mathbf a}}\right)_{X_{\mathcal M}}\left(O^H\right)^\dagger U^\dagger\right)^{q'}                                        \\
		    & \quad\quad\quad\quad\quad\quad\left(UO^H\left(\mathds 1-\proj{\hat{\mathbf a}}\right)_{X_{\mathcal M}}\right)^{q'}\ket{\phi_0}_{XYE}\ket{+^{\ell}}^{\otimes \Xi}_{D}                                          \\
		\le & 4\sum_{i=1}^{q'}\varepsilon_{\hat{\mathbf a},i}.
	\end{align*}
	Here, we have used the quantum union bound, \cref{lem:q-union}, with projectors $P_i=\left(\left( O^H\right)^\dagger U^\dagger\right)^{i-1}$ $\left(\mathds 1-\proj{\hat{\mathbf a}}\right)(UO)^{i-1}$ in the last line. We thus finally get
	\begin{align*}
		\Pr_{(x,\pi,w)\leftarrow\langle \mathcal P, \mathcal E\rangle}[\mathcal V(x,\pi)
		=1\wedge (x,w)\notin R] & \leq\delta \sum_{\hat{\mathbf a}}\left(\sum_{i=1}^{q'}\left\|\ket{\tilde \psi_i}\right\|\right)^2 \\
		                        & \le4 q'\cdot \delta  \sum_{\hat{\mathbf a}}\sum_{i=1}^{q'}\varepsilon_{\hat{\mathbf a},i}         \\
		                        & =4 {q'}^2\cdot \delta,
	\end{align*}
	where we have used the normalization of $\ket{\phi_0}$ in the last line.
\end{proof}

\subsection{Main Result}
We now state the main result on QROM security of Fischlin's Transform.

\begin{theorem}[Post-Quantum Security of the Fischlin Transform]
	\label{thm:main}
	Let $\Sigma$ be a $\Sigma$-protocol for a relation $R$ with special soundness,
	unique responses, and commitment entropy. Let the parameters $k$, $\ell$, $c$
	of the Fischlin transform $\mathrm{Fis}[\Sigma]$ satisfy $N = c \cdot 2^\ell
		\cdot \log k$, $2^{1/c} \leq k \leq 2^{2^\ell / (256c)}$, $\ell \geq 14$, and
	$\ell = O(\log \log k)$. Then $\mathrm{Fis}[\Sigma]$ is a non-interactive
	zero-knowledge proof of knowledge with straight-line extractability in the
	\textup{QROM}, concretely satisfying:
	\begin{enumerate}
		\item \textup{(Perfect simulation.)} The extractor simulates the
		      quantum-accessible random oracle perfectly, i.e.\ $\varepsilon_{\mathrm{sim}}
			      = 0$.
		\item \textup{(Straight-line extractability.)} For any $q$-query malicious
		      prover $P^*$, the extraction error satisfies
		      \[
			      \varepsilon_{\mathrm{ex}} \;\leq\; q^2 \cdot \mathrm{negl}(k).
		      \]
		\item \textup{(Zero-knowledge.)} $\mathrm{Fis}[\Sigma]$ is computationally
		      indistinguishable from simulated proofs for any quantum polynomial-time
		      distinguisher.
	\end{enumerate}
\end{theorem}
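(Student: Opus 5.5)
The proof assembles three components: perfect simulation, extraction, and zero-knowledge. Completeness should also be checked, since a NIZK requires it.

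\textbf{Perfect simulation.} The extractor $\Ext_{\mathsf{Fis}}$ answers all queries with Zhandry's compressed oracle. Before the final database measurement, this is perfectly indistinguishable from a uniformly random quantum-accessible oracle. The database measurement happens only after $(\pi,Z)$ is output and $v$ is computed, so it does not affect the joint distribution of $(\pi,Z,v)$. This gives $\varepsilon_{\mathrm{sim}}=0$ and matches \eqref{eq:extred-1} of \cref{thm:genprover}.

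\textbf{Straight-line extractability.} Take $\delta$ to be the bound of \cref{Extractability main bound}. That lemma is stated for an arbitrary dishonest prover whose success is restricted to proofs starting with a fixed $\mathbf a_0$, so it supplies exactly the hypothesis \eqref{eq:extred-assump1} uniformly over $\mathcal P'$ and $\mathbf a_0$. \cref{thm:genprover} then yields $\varepsilon_{\mathrm{ex}}\le 4(q+k)^2\delta$.

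Under the stated parameter regime, $\ell=O(\log\log k)$, so $2^{\ell}=\mathrm{polylog}(k)$. Hence both exponents in the bound on $\varepsilon$, namely $k/(128c2^{\ell}\log k)$ and $k/(8\cdot 2^\ell)$, are $k/\mathrm{polylog}(k)$, and $\delta\le\varepsilon/(1-\varepsilon)=\negl(k)$. Absorbing $(q+k)^2\le 2q^2+2k^2$ and the polynomial factor $k^2$ into the negligible function gives $\varepsilon_{\mathrm{ex}}\le q^2\negl(k)$ for $q\ge1$; this is \cref{cor:main}. The extractor's running time is polynomial in $\lambda+q$, because the compressed database has at most $q+k$ nonempty entries and $\Ext_{\mathrm{ss}}$ is PPT.

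\textbf{Completeness.} The honest prover's per-repetition failure probability is $(1-2^{-\ell})^{N}\le e^{-c\log k}$. With the union bound over $k$ repetitions, the abort probability is at most $k\cdot k^{-c\log e}$. This is negligible only if $c$ is large enough, or if it is reinterpreted with the paper's convention that $t$ equals the challenge space size. I would flag this as a parameter condition to verify rather than a real obstacle.

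\textbf{Zero-knowledge.} I expect this to be the main new work. The simulator works repetition by repetition, using SHVZK to produce $k$ simulated transcripts whose hash inputs are then programmed to $0^\ell$. For each $i$:
\begin{enumerate}
\item sample $c_i$ with the distribution induced by the honest "first challenge hashing to zero" process; concretely, sample fresh random hash values sequentially until one is $0^\ell$;
\item run $(a_i,z_i)\gets\Sim_{\mathrm{SHVZK}}(x,c_i)$;
\item reprogram $H$ at $(\mathbf a,i,c,z)$ for the earlier challenges to the sampled nonzero values, and at $(\mathbf a,i,c_i,z_i)$ to $0^\ell$.
\end{enumerate}
The resulting proof has the same distribution as the honest one, conditioned on the oracle values along the honest path.

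The hybrid argument has three steps:
\begin{enumerate}
\item Replace honest commitments and responses by SHVZK outputs, one repetition at a time, with a computational loss of $k\cdot\negl(\lambda)$ by HVZK against QPT distinguishers.
\item Argue that the honest prover's oracle values at the touched points are fresh uniform values, so sampling them and programming $H$ there corresponds to the \textsf{Reprogram} oracle of \cref{lem:adarep}.
\item Bound the cost of reprogramming with \cref{lem:adarep}. There are at most $k(N+1)$ reprogrammed points per proof. Each is located at an input containing $\mathbf a$, whose min-entropy is at least $k\alpha$ by commitment entropy. Hence $p_{\max}\le 2^{-k\alpha}$, and the loss is about $k N(\sqrt{q2^{-k\alpha}}+q2^{-k\alpha})$, which is negligible.
\end{enumerate}
The delicate part is step 2, the claim that the honest prover's view of these points is exactly modelled by fresh reprogramming. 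I would handle it by having the simulator first sample $\mathbf a$ (via the SHVZK commitments), then issue a reprogram query for each touched input. The honest experiment then coincides with $\texttt{Repro}_0$ and the simulation with $\texttt{Repro}_1$, and the theorem follows.
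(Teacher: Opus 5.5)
\textbf{Verdict.} Your perfect-simulation and extractability arguments follow the paper's route: \cref{Extractability main bound} supplies hypothesis \eqref{eq:extred-assump1} of \cref{thm:genprover}, and $4(q+k)^2\delta$ is absorbed into $q^2\negl(k)$. Your zero-knowledge argument, however, has a genuine gap.

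\textbf{The gap.} Step 3 of your simulator reprograms $H$ at $(\mathbf a,i,c,z)$ for the earlier challenges $c<c_i$, but the simulator has no such $z$. $\Sim_{\mathrm{SHVZK}}(x,c_i)$ gives an accepting response only for $c_i$. An efficient procedure producing accepting responses to two challenges under the same $a_i$ would compute a witness from $x$ alone via $\Ext_{\mathrm{ss}}$.

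You also cannot leave those points unprogrammed. The distinguisher chose $(x,w)$ itself. For Schnorr, say, it can compute from $w$ and $(a_i,c_i,z_i)$ the accepting response $z_c$ for every $c$. It then tests whether $H(\mathbf a,i,c,z_c)\neq 0^\ell$ for all $c<c_i$. This always holds for honest proofs. If those values are fresh, it holds only with probability $(1-2^{-\ell})^{c_i}$, which is about $1/2$ on average. For the same reason your hybrid order fails: you cannot switch to SHVZK outputs (your step 1) while $c_i$ is still determined by hashing honest responses to several challenges.

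\textbf{The missing idea.} It is the programming in \cref{def:sim-fis}, and it is where the unique-responses hypothesis enters; your zero-knowledge argument never uses it. Sample $\tilde H:[k]\times\mathcal C\to\{0,1\}^\ell$ and answer \emph{every} query $(\mathbf a,i,c,z)$ with $\mathcal V_\Sigma(x,a_i,c,z)=1$ by $\tilde H(i,c)$. The simulator evaluates this predicate at query time without knowing $z$ in advance. Unique responses make the answer a well-defined function of $(i,c)$.

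The hybrids of \cref{thm:zk} proceed in this order:
\begin{enumerate}
\item reprogram, keeping honest commitments and responses;
\item tie $c_i$ to $\tilde H$;
\item only then replace $(a_i,z_i)$ by $\Sim_{\mathrm{SHVZK}}(x,c_i)$, once nothing else depends on $\mathsf{st}_i$.
\end{enumerate}
Combined with this programming, your choice of $c_i$ as the first zero of $\tilde H(i,\cdot)$ is the right one. It reproduces the honest prover's distribution exactly, which is more faithful than sampling uniformly among the zeros as \cref{def:sim-fis} does.

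\textbf{Smaller issues.}
\begin{enumerate}
\item In \cref{lem:adarep}, your per-call bound $p_{\max}\le 2^{-k\alpha}$ fails for every call after the first, because those points share the already-sampled $\mathbf a$. The block of points with prefix $\mathbf a$ must be reprogrammed as a single high-entropy event.
\item For constant $c$, your completeness error $k\cdot e^{-c\log k}$ is a fixed inverse polynomial, not negligible. Completeness therefore needs $c=\omega(1)$, which the other parameter constraints allow; the paper does not address completeness.
\end{enumerate}
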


\begin{proof}
	Perfect simulation and straight-line extractability follow from Lemma~\ref{Extractability main bound} and Corollary~\ref{cor:main};
	zero-knowledge from Theorem~\ref{thm:zk} in \cref{sec: zk} in the Appendix.
\end{proof}

\ifsubmission
	\paragraph{Acknowledgements.}
	We have used AI tools for spell checking, grammar and other English-language-related improvements.
\else
	\section{Acknowledgements}
	We have used AI tools for spell checking, grammar and other English-language-related improvements.
	CM thanks Gorjan Alagic for useful discussions. CM and JS acknowledge support by
	the Independent Research Fund Denmark via a DFF Sapere Aude
	grant (IM-3PQC, grant ID 10.46540/2064-00034B). The authors thank the anonymous reviewers of the CRYPTO 2026 program committee for useful feedback.
\fi

\bibliographystyle{alpha}

\begin{thebibliography}{DFMS22b}

	\bibitem[AHJ{\etalchar{+}}23]{AC:AHJMRY23}
	Carlos {Aguilar Melchor}, Andreas H{\"u}lsing, David Joseph, Christian Majenz,
	Eyal Ronen, and Dongze Yue.
	\newblock {SDitH} in the {QROM}.
	\newblock In Jian Guo and Ron Steinfeld, editors, {\em ASIACRYPT~2023,
			Part~VII}, volume 14444 of {\em {LNCS}}, pages 317--350, Guangzhou, China,
	December~4--8, 2023. Springer, Singapore, Singapore.

	\bibitem[BBB{\etalchar{+}}25]{C:BBBDKM25}
	Carsten Baum, Ward Beullens, Lennart Braun, Cyprien {Delpech de Saint Guilhem},
	Michael Kloo{\ss}, Christian Majenz, Shibam Mukherjee, Emmanuela Orsini,
	Sebastian Ramacher, Christian Rechberger, Lawrence Roy, and Peter Scholl.
	\newblock Shorter, tighter, {FAESTer}: Optimizations and improved ({QROM})
	analysis for {VOLE}-in-the-head signatures.
	\newblock In Yael~Tauman Kalai and Seny~F. Kamara, editors, {\em CRYPTO~2025,
			Part~VI}, volume 16005 of {\em {LNCS}}, pages 124--156, Santa Barbara, CA,
	USA, August~17--21, 2025. Springer, Cham, Switzerland.

	\bibitem[BDF{\etalchar{+}}11]{AC:BDFLSZ11}
	Dan Boneh, {\"O}zg{\"u}r Dagdelen, Marc Fischlin, Anja Lehmann, Christian
	Schaffner, and Mark Zhandry.
	\newblock Random oracles in a quantum world.
	\newblock In Dong~Hoon Lee and Xiaoyun Wang, editors, {\em ASIACRYPT~2011},
	volume 7073 of {\em {LNCS}}, pages 41--69, Seoul, South Korea, December~4--8,
	2011. Springer Berlin Heidelberg, Germany.

	\bibitem[CFHL21]{EC:CFHL21}
	Kai-Min Chung, Serge Fehr, Yu-Hsuan Huang, and Tai-Ning Liao.
	\newblock On the compressed-oracle technique, and post-quantum security of
	proofs of sequential work.
	\newblock In Anne Canteaut and Fran\c{c}ois-Xavier Standaert, editors, {\em
			EUROCRYPT~2021, Part~II}, volume 12697 of {\em {LNCS}}, pages 598--629,
	Zagreb, Croatia, October~17--21, 2021. Springer, Cham, Switzerland.

	\bibitem[CMS19]{TCC:ChiManSpo19}
	Alessandro Chiesa, Peter Manohar, and Nicholas Spooner.
	\newblock Succinct arguments in the quantum random oracle model.
	\newblock In Dennis Hofheinz and Alon Rosen, editors, {\em TCC~2019, Part~II},
	volume 11892 of {\em {LNCS}}, pages 1--29, Nuremberg, Germany, December~1--5,
	2019. Springer, Cham, Switzerland.

	\bibitem[DFMS22a]{C:DFMS22}
	Jelle Don, Serge Fehr, Christian Majenz, and Christian Schaffner.
	\newblock Efficient {NIZKs} and signatures from commit-and-open protocols in
	the {QROM}.
	\newblock In Yevgeniy Dodis and Thomas Shrimpton, editors, {\em CRYPTO~2022,
			Part~II}, volume 13508 of {\em {LNCS}}, pages 729--757, Santa Barbara, CA,
	USA, August~15--18, 2022. Springer, Cham, Switzerland.

	\bibitem[DFMS22b]{EC:DFMS22}
	Jelle Don, Serge Fehr, Christian Majenz, and Christian Schaffner.
	\newblock Online-extractability in the quantum random-oracle model.
	\newblock In Orr Dunkelman and Stefan Dziembowski, editors, {\em
			EUROCRYPT~2022, Part~III}, volume 13277 of {\em {LNCS}}, pages 677--706,
	Trondheim, Norway, May~30~--~June~3, 2022. Springer, Cham, Switzerland.

	\bibitem[Fis05]{C:Fischlin05}
	Marc Fischlin.
	\newblock Communication-efficient non-interactive proofs of knowledge with
	online extractors.
	\newblock In Victor Shoup, editor, {\em CRYPTO~2005}, volume 3621 of {\em
			{LNCS}}, pages 152--168, Santa Barbara, CA, USA, August~14--18, 2005.
	Springer Berlin Heidelberg, Germany.

	\bibitem[GHHM21]{AC:GHHM21}
	Alex~B. Grilo, Kathrin H{\"o}velmanns, Andreas H{\"u}lsing, and Christian
	Majenz.
	\newblock Tight adaptive reprogramming in the {QROM}.
	\newblock In Mehdi Tibouchi and Huaxiong Wang, editors, {\em ASIACRYPT~2021,
			Part~I}, volume 13090 of {\em {LNCS}}, pages 637--667, Singapore,
	December~6--10, 2021. Springer, Cham, Switzerland.

	\bibitem[HJMN24]{C:HJMN24}
	Andreas H{\"u}lsing, David Joseph, Christian Majenz, and Anand~Kumar Narayanan.
	\newblock On round elimination for special-sound multi-round identification and
	the generality of the hypercube for {MPCitH}.
	\newblock In Leonid Reyzin and Douglas Stebila, editors, {\em CRYPTO~2024,
			Part~I}, volume 14920 of {\em {LNCS}}, pages 373--408, Santa Barbara, CA,
	USA, August~18--22, 2024. Springer, Cham, Switzerland.

	\bibitem[Ks22]{AC:Konshe22}
	Yashvanth Kondi and {abhi} {shelat}.
	\newblock Improved straight-line extraction in the random oracle model with
	applications to signature aggregation.
	\newblock In Shweta Agrawal and Dongdai Lin, editors, {\em ASIACRYPT~2022,
			Part~II}, volume 13792 of {\em {LNCS}}, pages 279--309, Taipei, Taiwan,
	December~5--9, 2022. Springer, Cham, Switzerland.

	\bibitem[MU05]{MitzenmacherUpfal05}
	Michael Mitzenmacher and Eli Upfal.
	\newblock {\em Probability and Computing: Randomized Algorithms and
		Probabilistic Analysis}.
	\newblock Cambridge University Press, 1st edition, 2005.

	\bibitem[OV22]{o2022quantum}
	Ryan O'Donnell and Ramgopal Venkateswaran.
	\newblock The quantum union bound made easy.
	\newblock In {\em Symposium on Simplicity in Algorithms (SOSA)}, pages
	314--320. SIAM, 2022.

	\bibitem[Pas03]{C:Pass03}
	Rafael Pass.
	\newblock On deniability in the common reference string and random oracle
	model.
	\newblock In Dan Boneh, editor, {\em CRYPTO~2003}, volume 2729 of {\em {LNCS}},
	pages 316--337, Santa Barbara, CA, USA, August~17--21, 2003. Springer Berlin
	Heidelberg, Germany.

	\bibitem[RT25]{C:RotTes25}
	Lior Rotem and Stefano Tessaro.
	\newblock Straight-line knowledge extraction for multi-round protocols.
	\newblock In Yael~Tauman Kalai and Seny~F. Kamara, editors, {\em CRYPTO~2025,
			Part~VII}, volume 16006 of {\em {LNCS}}, pages 95--127, Santa Barbara, CA,
	USA, August~17--21, 2025. Springer, Cham, Switzerland.

	\bibitem[Unr15]{EC:Unruh15}
	Dominique Unruh.
	\newblock Non-interactive zero-knowledge proofs in the quantum random oracle
	model.
	\newblock In Elisabeth Oswald and Marc Fischlin, editors, {\em EUROCRYPT~2015,
			Part~II}, volume 9057 of {\em {LNCS}}, pages 755--784, Sofia, Bulgaria,
	April~26--30, 2015. Springer Berlin Heidelberg, Germany.

	\bibitem[Zha19]{C:Zhandry19}
	Mark Zhandry.
	\newblock How to record quantum queries, and applications to quantum
	indifferentiability.
	\newblock In Alexandra Boldyreva and Daniele Micciancio, editors, {\em
			CRYPTO~2019, Part~II}, volume 11693 of {\em {LNCS}}, pages 239--268, Santa
	Barbara, CA, USA, August~18--22, 2019. Springer, Cham, Switzerland.

\end{thebibliography}
\newcommand{\etalchar}[1]{$^{#1}$}

\appendix
\section*{Appendix}
\section{Deferred proof of \cref{lem:cum-mean-bound}}\label{app:mu'bound}
\begin{proof}[of \cref{lem:cum-mean-bound}]
	Let $\ket{\beta}_{\tilde D_{[i+1:m']}}$ be the post-measurement state after the first $i-1$ subregisters of $\tilde D$ have been measured and fixed outcomes $y_1,\ldots y_{i-1}$ have been obtained. Then $\ket{\beta}\in W^{m-i+1}_{n-i+1}$. When measuring the first remaining register of $\ket{\beta}$, the probability of obtaining outcome $y_{i}=0^{\ell}$ is thus lower-bounded as
	\begin{equation*}
		\Pr[X_{i}=1|y_1,\ldots, y_{i-1}]\ge \frac{2^{-l}(n-i+1)}{(m-i+1)}-2^{1-l / 2} \sqrt{\frac{(n-i+1)(m-n)}{(m-i+1)^2}}
	\end{equation*}
	by \cref{lem:measure}
	and therefore in particular
	\begin{align*}
		\mathbb E[X_{i}|X_{< i}]=\Pr[X_{i}=1|X_{< i}] & \ge  \frac{2^{-l}(n-i+1)}{(m-i+1)}-2^{1-l / 2} \sqrt{\frac{(n-i+1)(m-n)}{(m-i+1)^2}} \\
		                                              & \ge \frac{2^{-l}(n-i)}{(m-i)}-2^{1-l / 2} \sqrt{\frac{(n-i)(m-n)}{(m-i)^2}}.
	\end{align*}
	We can now lower-bound now for any $n'\le n$,
	\begin{align*}
		\mu' & = \sum_{i=1}^{n}\mathbb E[X_{i}|X_{< i}]                                                                     \\
		     & \geq \sum_{i=1}^{n'}\mathbb E[X_{i}|X_{< i}]                                                                 \\
		     & \geq \sum_{i=1}^{n'}\left(\frac{n-i}{m-i}\right) 2^{-l}-2^{1-l/ 2} \sqrt{\frac{(n-i)(m-n)}{(m-i)^2}}         \\
		     & = \sum_{i=1}^{n'}\left(1-\frac{m-n}{m-i}\right) 2^{-l}-2^{1-l / 2} \sqrt{\frac{(n-i)(m-n)}{(m-i)^2}}.        \\
		     & =n'- 2^{-l}(m-n)\sum_{i=1}^{n'}\frac{1}{m-i} - \sum_{i=1}^{n'}2^{1-l / 2} \sqrt{\frac{(n-i)(m-n)}{(m-i)^2}}.
	\end{align*}
	We have the following inequalities for $ n'< m $ on the above terms in the sum:
	\begin{align*}
		\text{1. On the second term:} & \sum_{i=1}^{n'}\frac{1}{m-i}\leq\int_1^{n'}\frac{dx}{m-x}=\log\left(\frac{m-1}{m-n'}\right) \text{,}          \\
		\text{2. On the third term:}  & \sum_{i=1}^{n'}\sqrt{\frac{(n-i)}{(m-i)^2}}\leq\sum_{i=1}^{n'}\sqrt{\frac{(m-i)}{(m-i)^2}}                    \\
		                              & \leq\int_{1}^{n'}\frac{dx}{\sqrt{m-x}}=2\left[\sqrt{\left(m-1\right)}-\sqrt{\left(m-n^{\prime}\right)}\right]
	\end{align*}

	Hence we get:
	\begin{align*}
		 & \mu'\geq 2^{-l}\left[n^{\prime}\!-\!(m-n) \log\left(\frac{m-1}{m-n'}\right)\right] \!-\!2^{2-l / 2} \sqrt{m-n}\left[\sqrt{\left(m-1\right)}\!-\!\sqrt{\left(m-n^{\prime}\right)}\right] \\
		 & \geq 2^{-l}\left[n^{\prime}+(m-n) \log\left(\frac{m-n'}{m-1}\right)\right]-2^{2-l/2}\sqrt{m(m-n)}                                                                                       \\
		 & =2^{-l}\left[k(\challspacesize-1-\gamma)-2^{2+l}k\gamma+(k\gamma) \log\left(\frac{\gamma(1+2^{2+l})}{2^{\ell}\log{k}-1}\right)\right]-2^{2-l/2}\sqrt{m(m-n)}                            \\
		 & \geq2^{-l}k\left[(\challspacesize-(1+\gamma+2^{2+l}\gamma)+(\gamma) \log\left(\frac{4\gamma}{\log{k}}\right)\right]-2^{2-l/2}\sqrt{k(\challspacesize-1)\gamma k}                        \\
		 & \geq2^{-l}k\left[(\challspacesize-1-\gamma\left(1+2^{2+l}+ \log\left(\frac{4\gamma}{\log{k}}\right)\right)-2^{2+l/2}\sqrt{\challspacesize\gamma }\right]                                \\
	\end{align*}

	Here, we have chose $n'$ so that we do not unnecessarily lower-bound non-negative terms by negative ones,

	\begin{align}\label{n'}
		\nonumber         & {\left[\frac{n-i}{m-i}\right]2^{-l}- \frac{2^{1-l} \sqrt{(n-i)(m-n)}}{m-i} \geq 0 } \\
		\nonumber\implies & (n-i) 2^{-l} \geq 2^{1-l / 2} \sqrt{(n-i)(m-n)}                                     \\
		\nonumber\implies & (n-i)^2 2^{-2 l} \geq 2^{2-l}(n-i)(m-n)                                             \\
		\nonumber\implies & (n-i) \geq 2^{2+l}(m-n)                                                             \\
		\implies          & i \leq n-2^{2+l}(m-n) \eqqcolon n'.
	\end{align}
	\qed
\end{proof}
We remark that  the parameter restrictions of Lemma~\ref{Extractability main bound} imply $n' \ge 0$, so the truncation in the proof is non-empty.

\section{A gentle measurement lemma}\label{sec:gentle}

\begin{lemma}[Gentle measurement with additive error]\label{P_n ext norm lemma}
	$||(P_{\eta}\CompOperatorComp{E}_f-\CompOperatorComp{E}_f)|\phi_{\mathrm{succ}}\rangle||^2
		\leq 2{\varepsilon'}+8\sqrt{\varepsilon_{\gamma,(1-\gamma)k}}$.
	\begin{proof}
		Directly expanding:
		$$
			\begin{aligned}
				||(P_{\eta}\CompOperatorComp{E}_f-\CompOperatorComp{E}_f)|\phi_{\mathrm{succ}}\rangle||^2 & =
				|| P_{\eta}\CompOperatorComp{E}_f\ket{\phi_{\mathrm{succ}}}||^2+||\CompOperatorComp{E}_f\ket{\phi_{\mathrm{succ}}}||^2-2 \langle\phi_{\mathrm{succ}}|\CompOperatorComp{E}_f P_{\eta}\CompOperatorComp{E}_f\left|\phi_{\mathrm{succ}}\right\rangle                                                      \\
				                                                                                          & = || P_{\eta}\CompOperatorComp{E}_f\ket{\phi_{\mathrm{succ}}}||^2+||\CompOperatorComp{E}_f\ket{\phi_{\mathrm{succ}}}||^2- 2||P_{\eta}\CompOperatorComp{E}_f  \ket{\phi_{\mathrm{succ}}}||^2                \\
				                                                                                          & \leq  || P_{\eta}||^2||\CompOperatorComp{E}_f\ket{\phi_{\mathrm{succ}}}||^2+||\CompOperatorComp{E}_f\ket{\phi_{\mathrm{succ}}}||^2-2(1-\varepsilon')||\CompOperatorComp{E}_f\ket{\phi_{\mathrm{succ}}}||^2 \\ &\quad +8\sqrt{\varepsilon_{\gamma,(1-\gamma)k}} \text{(due to lemma \ref{lem:ineq 33} )}\\
				                                                                                          & \le (1+1-2(1-\varepsilon'))||\CompOperatorComp{E}_f\ket{\phi_{\mathrm{succ}}}||^2+8\sqrt{\varepsilon_{\gamma,(1-\gamma)k}}                                                                                 \\
				                                                                                          & =2\varepsilon'||\CompOperatorComp{E}_f|\phi_{\mathrm{succ}}\rangle||^2+8\sqrt{\varepsilon_{\gamma,(1-\gamma)k}}                                                                                            \\
				                                                                                          & \leq 2\varepsilon'+8\sqrt{\varepsilon_{\gamma,(1-\gamma)k}}
			\end{aligned}
		$$
		\qed
	\end{proof}
\end{lemma}

\section{Bounding Lemmas }

The following Lemmas gives numerical analysis for the bounds.
\begin{lemma}\label{lem: negligiblity value lemma}
	For some constant $c$, let the challenge space be of size  $\challspacesize=c\cdot 2^{\ell}\cdot \log{(k)}$. Let further $\gamma=4\cdot 2^{-l}$, $2^{\frac 1 c}\le k\le 2^{\frac{2^{\ell}}{256\cdot c}}$ and $l\ge 14$. Then

	$$\varepsilon''+2\sqrt{{\varepsilon'}}+7\sqrt{\sqrt{\varepsilon_{\gamma,(1-\gamma)k}}}\leq  3 \exp\left[-\frac{ k  }{128 \cdot c\cdot 2^{\ell}\cdot \log k}\right]+7\exp\left[-\frac{ k}{8\cdot 2^{\ell}} \right]=\mathrm{negl}(k).$$

\end{lemma}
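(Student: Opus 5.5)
The plan is to substitute $\gamma=4\cdot 2^{-\ell}$ into each of the three error terms and bound each one separately against the claimed right-hand side. The parameters $\eta$ and $\delta'$ are still free, subject to the coupling $\eta-(1-2\gamma)k=(1-\delta')\underline\mu'$ from \cref{lem:ineq 33}, and I will fix them during the proof. Throughout I use $2^{\ell}\gamma=4$, $\mu=2^{-\ell}k\challspacesize=ck\log k$ and $m=k(\challspacesize-1)$. The upper constraint $k\le 2^{2^{\ell}/(256c)}$ is equivalent to $c\log k\le 2^{\ell}/256$. Since $\ell\ge 14$, $\gamma\le 2^{-12}$ is tiny.

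\textbf{The $\varepsilon_{\gamma,(1-\gamma)k}$ term.} By \cref{lem:comp 0^k}, $\varepsilon_{\gamma,(1-\gamma)k}=\exp(-(\gamma-2\cdot 2^{-\ell})(1-\gamma)k/2)=\exp(-(1-\gamma)k/2^{\ell})$. Its fourth root is $\exp(-(1-\gamma)k/(4\cdot 2^{\ell}))$. Because $\gamma\le 1/2$, this is at most $\exp(-k/(8\cdot 2^{\ell}))$, which accounts for the $7\exp[\cdot]$ summand.

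\textbf{The $\varepsilon'$ term.} I first simplify $\underline\mu'$ from \cref{lem:cum-mean-bound}. The bracket becomes $\challspacesize-1-\gamma(17+\log(4\gamma/\log k))-8\sqrt{\challspacesize}$. The logarithm is negative because $4\gamma<\log k$, so dropping it only loosens the bound. This gives
\begin{align*}
\underline\mu'\ge \mu-2^{-\ell}k\bigl(1+17\gamma+8\sqrt{\challspacesize}\bigr).
\end{align*}
The key observation is $2^{-\ell}\cdot 8\sqrt{\challspacesize}=8\sqrt{c\log k/2^{\ell}}\le 8/16=1/2$. Hence $\underline\mu'\ge \mu-k(1/2+o(1))$, and this is positive and at least of order $ck\log k$.

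I now choose $\delta'$ by $\delta'\underline\mu'=k/4$. This satisfies $\delta'\le 1$ because $\underline\mu'\ge k/4$, which follows from $c\log k\ge 1$, i.e.\ from $k\ge 2^{1/c}$. Then $\varepsilon'=\exp(-(k/4)^2/(2m))\le\exp(-k/(32c2^{\ell}\log k))$. Consequently $2\sqrt{\varepsilon'}\le 2\exp(-k/(64c2^{\ell}\log k))$, which is within the target exponent $128$.

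\textbf{The $\varepsilon''$ term.} With this choice, $\eta=(1-2\gamma)k+\underline\mu'-k/4$. The bound on $\underline\mu'$ gives $\eta-\mu\ge k(1-2\gamma-1/4-1/2-2^{-\ell}(1+17\gamma))\ge k/8$ for $\ell\ge 14$. So $\delta=(\eta-\mu)/\mu\ge 1/(8c\log k)$. If $\delta>1$, I replace it by $\delta=1$ and use monotonicity of $\Pr[\mathfrak S\ge\eta]$ in $\eta$, so that \cref{chernoff definition theorem} applies. \Cref{lem:simple-tail-bound} then yields $\varepsilon''\le\exp(-\delta^2\mu/3)\le \exp(-k/(192c\log k))$. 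This is at most $\exp(-k/(128c2^{\ell}\log k))$ since $2^{\ell}\ge 2$.

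Summing the three bounds gives the claim. Negligibility in $k$ follows because $\ell=O(\log\log k)$ makes both exponents $k/\mathrm{polylog}(k)$.

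\textbf{Main obstacle.} The delicate step is checking that the slack $\eta-\mu$ stays of order $k$ after all the losses in $\underline\mu'$. This hinges on the $8\sqrt{\challspacesize}$ term, which is exactly what forces the constant $256$ in the upper bound on $k$. I would double-check that inequality and the constants first before committing to $\delta'\underline\mu'=k/4$.
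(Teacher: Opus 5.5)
Your proof is correct and shares the paper's skeleton. Both substitute $\gamma=4\cdot 2^{-\ell}$, bound the three terms separately, and use $c\log k\le 2^{\ell}/256$ to get $2^{-\ell}\cdot 8\sqrt{\challspacesize}\le 1/2$. Hence $\mu-\underline\mu'$ is only about $k/2$ and $\delta\ge 1/(8c\log k)$.

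Where you differ is in how you fix the single free parameter. The paper balances the two deviations, $\delta\mu=\delta'\underline\mu'$. It then rewrites $\varepsilon'$ in terms of $\delta^2\mu$ using $\underline\mu'\le\mu$ and \cref{lem:epsilon-prime-value}. You instead pin the absolute deviation, $\delta'\underline\mu'=k/4$. Then $\varepsilon'=\exp(-(k/4)^2/(2m))$ follows at once from $m\le k\challspacesize$, and all remaining slack goes into $\eta-\mu\ge k/8$.

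The balanced choice is the natural one for optimizing constants in principle. Your direct evaluation is shorter and, as written, more trustworthy, because the paper's detour loses a constant. From $\underline\mu'/(k\challspacesize)\ge 2^{-\ell}/2$ one only gets $\underline\mu'^2/(2m)\ge\underline\mu'/2^{\ell+2}$. The bound $\underline\mu'/2^{\ell}$ stated in \cref{lem:epsilon-prime-value} cannot hold, since it would require $\underline\mu'\ge 2^{1-\ell}m>\mu$. Your route gives exponent $k/(64c2^{\ell}\log k)$ for $\sqrt{\varepsilon'}$, comfortably inside the claimed $128$.

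One arithmetic slip needs fixing: $\gamma(1+4\cdot 2^{\ell})=\gamma+16$, not $17\gamma$. The correct estimate is $\mu-\underline\mu'\le 2^{-\ell}k(17+\gamma+8\sqrt{\challspacesize})$. The extra $16\cdot 2^{-\ell}k\le 2^{-10}k$ is harmless, and $\eta-\mu\ge k\bigl(\tfrac14-2\gamma-2^{-\ell}(17+\gamma)\bigr)\ge k/8$ still holds for $\ell\ge 14$.

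Your cap on $\delta$ is never triggered. Since $\underline\mu'\le\mu$, you have $\eta-\mu\le 3k/4$ and thus $\delta\le 3/(4c\log k)\le 3/4$. You were also right that negligibility needs the extra assumption $\ell=O(\log\log k)$, which is not among the lemma's hypotheses.
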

\begin{proof}

	Substituting the formulas for the parameters we get
	\begin{align}\label{e''+e'+e}
		\nonumber\bar\varepsilon & \coloneqq\varepsilon''+2\sqrt{{\varepsilon'}}+7\sqrt{\sqrt{\varepsilon_{\gamma,(1-\gamma)k}}}                                                                                         \\
		                         & \le\exp\left[-\frac{\delta^2\mu}{3}\right]+2\exp \left[-\frac{\delta'^2\underline{\mu'}}{2^{\ell+1}}\right]+7\exp\left[{-\left(\gamma-2 \cdot 2^{-l }\right) (1-\gamma)k / 8 }\right]\end{align}
	using  \cref{lem:epsilon-prime-value} below.

	We simplify
	\begin{align}\nonumber\bar\varepsilon\le\exp\left[-\frac{\delta^2\mu}{3}\right]+2\exp \left[-\frac{\delta'^2\underline{\mu'}}{2^{\ell+1}}\right]+7\exp\left[- 2^{-l } k / 8 \right]
	\end{align}
	where we have used $1-\gamma\ge 1/2$.

	We bound
	\begin{align}\nonumber
		\underline{\mu'}
		 & =2^{-l}k\left[\challspacesize-1-\gamma\left(1+2^{2+l}+ \log\left(\frac{4\gamma}{\log{k}}\right)\right)-2^{2+l/2}\sqrt{\challspacesize\gamma }\right]                        \\
		 & = \mu -2^{-l}k\left[1+\gamma\left(1+2^{2+l}+ \log\left(\frac{4\gamma}{\log{k}}\right)\right)+2^{2+l/2}\sqrt{\challspacesize\gamma }\right]\leq \mu\label{eq:mu-prime-le-mu}
	\end{align}
	as $\log\left(\frac{4\gamma}{\log k}\right)\ge 4-2l\ge-4\cdot 2^{\ell}$ by the assumptions $k\le 2^{2^{\ell}}$ and $l\ge 14$.

	For simplicity, we set $\delta\mu=\delta'\underline{\mu}'$. We thus get
	\begin{align}
		\delta'^2\underline{\mu}' & =\frac{\delta^2\mu^2}{\underline{\mu}'}\nonumber \\
		\ge \delta^2\mu.\label{eq:complicated-ge-simple}
	\end{align}
	using \cref{eq:mu-prime-le-mu}.

	We have $\eta=\mu(1+\delta)$ (from applying \cref{lem:simple-tail-bound}) and  $ \eta-(1-2 \gamma) k=\underline{\mu^{\prime}}\left(1-\delta^{\prime}\right) $ (see \cref{lem:ineq 33}). We thus get
	\begin{align}
		                & \mu(1+\delta)-(1-2 \gamma) k=\underline{\mu^{\prime}}\left(1-\delta^{\prime}\right)\nonumber \\
		\Leftrightarrow & \delta= \frac{(1-8\cdot 2^{-l})k-(\mu-\underline{\mu}')}{2\mu}.\label{eq:delta}
	\end{align}
	We bound
	\begin{align*}
		\mu-\underline{\mu}' & =2^{-l}k\left[1+\gamma\left(1+2^{2+l}+ \log\left(\frac{4\gamma}{\log{k}}\right)\right)+2^{2+l/2}\sqrt{\challspacesize\gamma }\right] \\
		                     & \le 2^{-l}k\left[1+4\cdot 2^{-l}\left(1+4\cdot 2^{\ell}\right)+8\sqrt{\challspacesize}\right]                                        \\
		                     & \le 2^{-l/2}k\left[21\cdot 2^{-l/2}+8\sqrt{c\cdot \log k}\right]
	\end{align*}
	We can use the assumption on $k$,
	\begin{align*}
		8\sqrt{c\cdot \log k}\le 8\sqrt{\frac{2^{\ell}}{256}}\le \frac 1 2 2^{\ell/2},
	\end{align*}
	so we get
	\begin{align*}
		\mu-\underline{\mu}' & \le  2^{-l/2}k\left[21\cdot 2^{-l/2}+8\sqrt{c\cdot \log k}\right] \\
		                     & \le \frac{3 k}{4},
	\end{align*}
	using $l\ge 14$. Plugging into \cref{eq:delta} we get \begin{align*}
		\delta & \ge \frac{\left(\frac 1 4 -8\cdot 2^{-l}\right)k}{2\mu} \\
		       & \ge \frac{1}{8 c\log k}.
	\end{align*}
	Here we have used $8\cdot 2^{-l}\le 2^{-11}\le 1/8$ due to the assumption $l\ge 14$. We can now use this to get
	\begin{align*}
		\delta^2\mu & \ge \frac{2^{-l} k \challspacesize}{64 c^2\log^2k}                 \\
		            & = \frac{2^{-l} k \cdot c\cdot 2^{\ell}\cdot \log k}{64 c^2\log^2k} \\
		            & =\frac{ k  }{64 c\log k}.
	\end{align*}
	Finally, we obtain
	\begin{align*}\nonumber\bar\varepsilon & \le\exp\left[-\frac{\delta^2\mu}{3}\right]+2\exp \left[-\frac{\delta'^2\underline{\mu'}}{2^{\ell+1}}\right]+7\exp\left[- 2^{-l } k / 8 \right] \\
                                       & \le \exp\left[-\frac{\delta^2\mu}{3}\right]+2\exp \left[-\frac{\delta^2\mu}{2^{\ell+1}}\right]+7\exp\left[- 2^{-l } k / 8 \right]              \\
                                       & \le 3 \exp \left[-\frac{\delta^2\mu}{2^{\ell+1}}\right]+7\exp\left[- 2^{-l } k / 8 \right]                                                     \\
                                       & \le 3 \exp\left[-\frac{ k  }{128 \cdot c\cdot 2^{\ell}\cdot \log k}\right]+7\exp\left[- 2^{-l } k / 8 \right].
	\end{align*}
	\qed
\end{proof}

We prove the deferred lemma.

\begin{lemma}\label{lem:epsilon-prime-value}
	For some constant $c$, let the challenge space be of size  $\challspacesize=c\cdot 2^{\ell}\cdot \log{(k)}$. Let further $\gamma=4\cdot 2^{-l}$, $2^{\frac 1 c}\le k\le 2^{\frac{2^{\ell}}{256\cdot c}}$ and $l\ge 14$. Then
	\begin{align*}
		\exp\left(-\frac{{\delta'}^2\underline\mu'^2}{2 m}\right) & \le\;
		\exp\!\left(
		- \frac{\delta'^2\underline{\mu}'}{2^{\ell}},
		\right) .
	\end{align*}
\end{lemma}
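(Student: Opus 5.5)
The inequality concerns only exponents, so I would compare them directly. Since $x\mapsto e^{-x}$ is decreasing, the claim
\[
\exp\left(-\frac{{\delta'}^2\underline\mu'^2}{2m}\right)\le \exp\left(-\frac{{\delta'}^2\underline\mu'}{2^{\ell}}\right)
\]
is equivalent (for $\delta'\neq 0$ and $\underline\mu'>0$) to
\[
\frac{\underline\mu'}{2m}\ge 2^{-\ell},\qquad\text{i.e.}\qquad \underline\mu'\ge 2\cdot 2^{-\ell} m = 2\cdot 2^{-\ell}k(\challspacesize-1).
\]
The first step is to check $\underline\mu'>0$ under the hypotheses. This is essentially done in the proof of \cref{lem: negligiblity value lemma}: there $\mu-\underline\mu'\le 3k/4$ is shown, while $\mu=c\,k\log k\ge k$ because $k\ge 2^{1/c}$. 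I would then insert the explicit formula for $\underline\mu'$ from \cref{lem:cum-mean-bound}, together with $m=k(\challspacesize-1)$, $\gamma=4\cdot 2^{-\ell}$ and $\challspacesize=c\,2^{\ell}\log k$.

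The main obstacle appears exactly at this comparison. By \cref{eq:mu-prime-le-mu} we have $\underline\mu'\le\mu=2^{-\ell}k\challspacesize$. This is only about $2^{-\ell}m$, not $2\cdot 2^{-\ell}m$. So the two displayed expressions differ by a factor of roughly $2$, and the comparison cannot go through using the Azuma exponent $\underline\mu'^2/(2m)$ as it stands.

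My plan to close this gap is to sharpen the exponent $\varepsilon'$ at its source, \cref{lem:martingale-tail-bound}, rather than the algebra. In the martingale $\hat Z_i=\sum_{j\le i}(Z_j-X_j)$, each increment has conditional variance $Z_j(1-Z_j)\le Z_j$. These $Z_j$ are of order $2^{-\ell}$, and their sum $\mu'=\sum_j Z_j$ is at most the expected number of zeroes, so their total is about $\mu'$ rather than $m$. I would therefore replace Azuma--Hoeffding with a Freedman/Bernstein-type lower-tail martingale bound. For a deviation $t=\delta'\underline\mu'$ and increments bounded by $1$, this gives an exponent of order $t^2/(2(\mu'+t/3))\approx{\delta'}^2\underline\mu'/2$ in the relevant regime $\delta'\le1$. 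This is larger than ${\delta'}^2\underline\mu'/2^{\ell}$ for $\ell\ge 14$, so redefining $\varepsilon'$ through this sharper bound makes the stated inequality hold comfortably.

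If one insists on keeping the original Azuma-based definition of $\varepsilon'$, I would instead state the inequality with $2^{\ell+2}$ in place of $2^{\ell}$. That version follows from $\underline\mu'\ge\mu/2\ge 2^{-\ell-1}m$, which in turn comes from $\mu-\underline\mu'\le 3k/4$ and $\mu\ge k$. The downstream constants in \cref{lem: negligiblity value lemma} would then change only by a constant factor in the exponent.
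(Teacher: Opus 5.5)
Your diagnosis is correct, and it can be sharpened: under the stated hypotheses the inequality is false, so no argument can establish it. For $\delta'\neq 0$ and $\underline\mu'>0$ it is equivalent to $\underline\mu'\ge 2\cdot 2^{-\ell}m=2\cdot 2^{-\ell}k(\challspacesize-1)$. But $\underline\mu'<\mu=2^{-\ell}k\challspacesize\le 2\cdot 2^{-\ell}k(\challspacesize-1)$, because $\challspacesize\ge 2^{\ell}\ge 2$. In the paper's use $\delta'>0$, since $\delta'\underline\mu'=\delta\mu$ with $\delta\ge 1/(8c\log k)$.

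The paper's own proof contains exactly this slip. After $m\le k\challspacesize$ it shows $\underline\mu'/(k\challspacesize)\ge \tfrac12 2^{-\ell}$ (written there as $\underline\mu'/\challspacesize$). That yields only $\underline\mu'^2/(2k\challspacesize)\ge \underline\mu'/2^{\ell+2}$, not $\underline\mu'/2^{\ell}$. So the correct statement has $2^{\ell+2}$, which is your fallback. Since $\underline\mu'\le\mu$, nothing better than roughly $2^{\ell+1}$ is possible. As you say, this changes the downstream error estimate only by constants in the exponent.

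There are two problems with your repairs.

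\textbf{The bound $\underline\mu'\ge\mu/2$.} Your derivation does not work: $\mu-\underline\mu'\le 3k/4$ together with $\mu=ck\log k\ge k$ only gives $\underline\mu'\ge\mu/4$, hence $2^{\ell+3}$. To get $\mu/2$, bound $\underline\mu'$ directly as the paper does. Take $\gamma=4\cdot 2^{-\ell}$ and $k\ge 2$ ($k$ is an integer exceeding $2^{1/c}>1$), so that $\gamma\log(4\gamma/\log k)\le 0$ can be dropped. Then $\underline\mu'\ge 2^{-\ell}k(\challspacesize-21-8\sqrt{\challspacesize})\ge(1-29/\sqrt{\challspacesize})\,\mu$. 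Since $\challspacesize\ge 2^{\ell}\ge 2^{14}$, this gives $\underline\mu'\ge\mu/2$, and with $m\le k\challspacesize$ the $2^{\ell+2}$ version follows.

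\textbf{The Freedman/Bernstein route.} It does not prove the lemma as stated, because the left-hand side is a fixed explicit expression, not whatever $\varepsilon'$ is redefined to be. As sketched it is also incomplete. Freedman's inequality needs an almost-sure (or separately controlled) upper bound on the predictable variation $\sum_j Z_j(1-Z_j)$. Here $\mu'=\sum_j Z_j$ is random, and \cref{lem:measure} only bounds the $Z_j$ from below.

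A cleaner way to realize your idea uses the deterministic lower bounds $p_i$ on $\Pr[X_i=1\mid X_{<i}]$ from the appendix, which hold for every outcome history. Then $\sum_i X_i$ stochastically dominates a sum of independent Bernoulli$(p_i)$ variables. The ordinary Chernoff lower tail then gives the exponent ${\delta'}^2\underline\mu'/2$ in \cref{lem:martingale-tail-bound}.
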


\begin{proof}
	We bound
	\begin{align*}
		\exp\!\left(
		- \frac{\delta'^2\underline{\mu'}^2}{2m}
		\right)
		 & \leq
		\exp\!\left(
		- \frac{\delta'^2\underline{\mu'}^2}{2k\challspacesize}
		\right) \\
		 & \leq
		\exp\!\left(
		- \frac{\delta'^2\underline{\mu'}}{2^{\ell}},
		\right),
	\end{align*}
	where the first inequality is straightforward from $m=k(\challspacesize-1)\le k\challspacesize$, and the second inequality is obtained using the definition of $\underline{\mu'}$ and the assumptions on the parameters, \begin{align*}
		\frac{\underline{\mu'}}{\challspacesize} & =\frac{2^{-l}k\left[\challspacesize-1-\gamma\left(1+4\cdot 2^{\ell}-\log\left(\frac{\log{k}}{4\gamma}\right)\right)-4\sqrt{2^{\ell}\gamma\challspacesize}\right]}{k\challspacesize} \\
		                                         & \geq2^{-l}\left[1-\frac{(1+\gamma\left(1+4\cdot 2^{\ell}\right)+4\sqrt{2^{\ell}\gamma\challspacesize})}{\challspacesize}\right]                                                     \\
		                                         & \geq2^{-l}\left[1-\frac{(1+4\cdot 2^{-l}\left(1+4\cdot 2^{\ell}\right)+4\sqrt{4\challspacesize})}{\challspacesize}\right]                                                           \\
		                                         & \geq2^{-l}\left[1-\frac{(21+8\sqrt{\challspacesize})}{\challspacesize}\right]                                                                                                       \\
		                                         & \geq2^{-l}\left[1-\frac{29}{\sqrt{c\log k\cdot 2^{\ell}}}\right]                                                                                                                    \\
		                                         & \geq2^{-l}\left[1-\frac{29}{\sqrt{ 2^{\ell}}}\right]                                                                                                                                \\
		                                         & \geq2^{-l}\left[1-29\cdot 2^{-7}\right]                                                                                                                                             \\
		                                         & \geq \frac 1 2 2^{-l}
	\end{align*}
	\qed\end{proof}

\section{Proof of Zero Knowledge}\label{sec: zk}

\js{need to change this too as per rebuttal round comment about incremental responses or change incremental response in main fischlin definition to randomised ones} We prove that the Fischlin transform is zero-knowledge in the quantum random oracle model (QROM). Our proof follows standard techniques for adaptive oracle reprogramming in the QROM \cite{AC:GHHM21}. We begin by defining the zero-knowledge simulator. Let $\mathsf{Sim}_\Sigma$ denote the honest-verifier zero-knowledge simulator for the underlying $\Sigma$-protocol as in \cref{SHVZK DEFN} \js{should I change $\mathsf{Sim}_\Sigma$ to $\mathsf{Sim}_{SHVZK}$ ? LOOKS better here with $_{\Sigma}$ reader remembers it's from Sigma Protocol}.

\begin{definition}[Simulator $\mathsf{Sim}_{\mathsf{Fis}}$]\label{def:sim-fis}
	The simulator $\mathsf{Sim}^H_{\mathsf{Fis}}$ proceeds as follows.

	\medskip
	\noindent
	\textbf{Setup.}
	Sample a random function
	\[
		\tilde H : [k]\times \mathcal C \to \{0,1\}^{\ell} .
	\]

	For each $i\in[k]$, sample $c_i\leftarrow^{\$}\mathcal C$ such that
	$\tilde H(i,c_i)=0^{\ell}$.
	If no such value exists, abort.

	\medskip
	\noindent
	\textbf{Proof Generation.}
	For each $i\in[k]$, run
	\[
		(a_i,z_i)\leftarrow \mathsf{Sim}_\Sigma(x,c_i).
	\]
	Set $\mathbf a=(a_i)_{i\in[k]}$ and similarly $\mathbf c$, $\mathbf z$.

	\medskip
	\noindent
	\textbf{Oracle Programming.}
	For any query $(\mathbf a,i,c,z)$, it answers as follows:
	\[
		H'(\mathbf a,i,c,z)=
		\begin{cases}
			\tilde H(i,c)      & \text{if } V(a_i,c,z)=1 \\
			H(\mathbf a,i,c,z) & \text{otherwise}.
		\end{cases}
	\]

	\noindent
	\textbf{Output.}	Define the proof
	\[
		\pi=(\mathbf a,\mathbf c,\mathbf z).
	\]
	and output $(\pi,H')$.

\end{definition}
With this simulator in hand, we can prove zero-knowledge.
\begin{theorem}\label{thm:zk}
	For a \sigp with commitment entropy, the Fischlin transform is zero-knowledge in the QROM.
\end{theorem}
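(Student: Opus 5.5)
We prove \cref{thm:zk} by a short hybrid argument. It moves from the real prover $\prover^{\Fis[\Sigma]}$ with oracle $H$ to the simulator $\mathsf{Sim}_{\mathsf{Fis}}$ with the reprogrammed oracle $H'$ of \cref{def:sim-fis}. Each hop is bounded either by SHVZK of $\Sigma$ or by the adaptive reprogramming bound, \cref{lem:adarep}.

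\textbf{Hybrid 0} is the real game. The distinguisher $\mathcal D_0^H$ outputs $(x,w)$, the honest prover computes $\pi$ with oracle $H$, and $\mathcal D_1^H$ receives $\pi$.

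\textbf{Hybrid 1.} After $\mathcal D_0$ has finished and the honest prover has sampled $\mathbf a$, we reprogram $H$ on every input $(\mathbf a,i,c,z)$ with $\mathcal V_\Sigma(x,a_i,c,z)=1$ to a fresh uniform value $\tilde H(i,c)$. By unique responses there are at most $kN$ such points, and they are indexed by $(i,c)$. The prover then runs its proof-of-work loop against the reprogrammed oracle, and $\mathcal D_1$ also sees the reprogrammed oracle. We view this as $kN$ reprogramming calls in the game of \cref{lem:adarep}. In each call the reprogrammed point is determined by $\mathbf a$, which is drawn from $\prover_{\Sigma,1}$ after $\mathcal D_0$'s queries. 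Its marginal therefore has max-probability at most $2^{-\alpha}$ by commitment entropy, so $p^{(r)}_{\max}\le 2^{-\alpha}$. The distinguishing advantage is then at most $kN(\sqrt{q2^{-\alpha}}+q2^{-\alpha}/2)$, which is negligible provided $\alpha$ is superlogarithmic. This is the reading of ``commitment entropy'' we adopt.

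The points are reprogrammed jointly, all depending on the single sample $\mathbf a$. To fit the lemma's one-point-per-call format, we issue the calls one at a time, each with distribution $p$ given by the conditional on the previous outcomes. The bound on $p^{(r)}_{\max}$ still holds, because after conditioning on $\mathbf a$ the inputs are distinct and determined. Alternatively, we could apply the lemma once with the adversary's queries made only to the first coordinate. Fitting this joint reprogramming into \cref{lem:adarep} is the step I expect to be the main obstacle.

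\textbf{Hybrid 2.} In Hybrid 1 the honest prover's output is a function of $\tilde H$: for each $i$, $c_i$ is the first challenge with $\tilde H(i,c_i)=0^\ell$, and $z_i$ is the honest response. We change the choice of $c_i$ to a uniformly random challenge with $\tilde H(i,c_i)=0^\ell$. This makes Hybrid 2 coincide with the simulator's challenge-selection rule. It is a distributional, not perfect, change: the ``first zero'' rule biases the joint law of $\tilde H$ and $c_i$. The fix is to observe that $\tilde H$ is a uniformly random function, so the law is invariant under permuting the challenge coordinates of $\tilde H(i,\cdot)$. Concretely, we replace the ordering by a uniformly random ordering of $\mathcal C$, which we may do without loss of generality by using randomised rather than incremental challenge order in the prover. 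With that change, $(\tilde H,c_i)$ has exactly the simulator's distribution. Both the prover and the simulator abort when no zero exists, and they do so in the same event.

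\textbf{Hybrid 3.} We replace $(a_i,z_i)$ by $\mathsf{Sim}_\Sigma(x,c_i)$. Since $c_i$ is now chosen independently of $\mathbf a$, we are exactly in the SHVZK setting. A standard $k$-step hybrid over $i$ reduces this to SHVZK. The reduction internally simulates $H$, the reprogramming and $\tilde H$, using an efficient $2q$-wise independent function as in Zhandry's simulation of random oracles. Hybrid 3 is the simulated game of \cref{def:sim-fis}, which completes the proof.
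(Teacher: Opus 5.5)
Your hybrid skeleton matches the paper's: reprogram the $\mathbf a$-slice using \cref{lem:adarep} and commitment entropy, align the challenge choice, then switch to $\mathsf{Sim}_\Sigma$. Two steps, however, do not go through as written.

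\textbf{Hybrid~1.} The one-point-per-call route fails. Once the first call has sampled $\mathbf a$, every later input to \textsf{Reprogram} is a point mass. So $p^{(r)}_{\max}=1$ for $r\ge 2$ and the bound is vacuous; ``distinct and determined'' is exactly what destroys it. Your alternative is the right one, and it is what the paper's appeal to the min-entropy of the commitment vector amounts to.
\begin{itemize}
\item Regard $H$ as the oracle $\mathbf a'\mapsto H(\mathbf a',\cdot)$; each $H$-query costs at most two slice queries.
\item Make a single call with $p$ the law of $(\mathbf a,\mathbf{st})$. Then $p_{\max}\le 2^{-k\alpha}$, and the error is $\sqrt{q'2^{-k\alpha}}+\tfrac12 q'2^{-k\alpha}$, where $q'$ is the doubled total query count.
\item This is negligible in $k$ for any constant $\alpha>0$ and has no $kN$ factor.
\end{itemize}
One caveat: this re-randomizes the whole slice, whereas $H'$ in \cref{def:sim-fis} keeps $H$ on non-accepting inputs. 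A second application closes the gap. There the reduction answers accepting slice inputs from its own $\tilde H$ and forwards only the others, so the two worlds of \texttt{Repro} are exactly ``fully fresh slice'' and ``$\tilde H$ on accepting inputs, $H$ elsewhere''.

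\textbf{Hybrid~2.} Replacing the incremental search by a randomized order is not without loss of generality. \cref{Fischlin Protocol} fixes the incremental order, so you would be proving zero-knowledge of a different scheme. The change is also unnecessary. As you note, in Hybrid~1 completeness makes every query of the honest prover land on $\tilde H(i,c)$. Hence $c_i=\min\{c:\tilde H(i,c)=0^\ell\}$ is a function of $\tilde H$ alone, independent of $(\mathbf a,\mathbf{st})$.

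So let the simulator take this first zero instead of a uniform one. It can do so because it programs $\tilde H$ on every accepting input, including those for challenges it could not answer itself, which it recognizes via $\mathcal V_\Sigma$. Your Hybrid~2 then disappears. Hybrid~3 becomes an SHVZK reduction that samples $\tilde H$, reads off $c_i$, and requests a transcript for that challenge. Since $c_i$ is not uniform, this needs SHVZK for arbitrary challenges, not the uniform-challenge form of \cref{SHVZK DEFN}.

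Your diagnosis of the first-zero versus uniform-zero mismatch is correct and sharper than the paper's. The paper's step $H_1\to H_1'$ claims the conditioning changes the distribution only negligibly. But with the incremental prover and the uniform-zero simulator of \cref{def:sim-fis}, a $\mathcal D_1$ that receives $w$ in $\mathsf{st}$ (e.g.\ for Schnorr) can recompute the accepting responses for all $c'<c_i$ and query them. The real proof never has an earlier zero. The simulated one has one with noticeable probability, since $\tilde H(i,\cdot)$ has about $N2^{-\ell}=c\log k$ zeros on average.
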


\begin{proof}
	The simulator clearly runs in expected polynomial time.

	\paragraph{Hybrid Argument.}

	We prove indistinguishability by a sequence of hybrid experiments.

	\medskip
	\noindent
	\textbf{Hybrid $H_0$.}
	The honest prover executes the Fischlin protocol using an unmodified quantum random oracle.

	\medskip
	\noindent
	\textbf{Hybrid $H_1$.}
	The prover generates commitments $(a_i)_{i\in[k]}$ honestly.
	The oracle is reprogrammed on all accepting transcripts with prefix $(\mathbf a,i)$ as in Definition~\ref{def:sim-fis}.
	Challenges $c_i$ are sampled uniformly, and responses $z_i$ are computed honestly.

	By the adaptive reprogramming lemma of \cite{AC:GHHM21} (\cref{lem:adarep}), $H_1$ is computationally indistinguishable from $H_0$, except with probability negligible in the min-entropy of the commitment vector.

	\medskip
	\noindent
	\textbf{Hybrid $H_1'$.}
	As in $H_1$, but challenges $c_i$ are sampled conditioned on $\tilde H(i,c_i)=0^{\ell}$.

	Since $\tilde H$ is uniformly random, this conditioning affects the distribution only negligibly. Hence $H_1'$ is indistinguishable from $H_1$.

	\medskip
	\noindent
	\textbf{Hybrid $H_2$.}
	As in $H_1'$, but responses are generated using the simulator $\mathsf{Sim}_\Sigma$ instead of the honest prover.

	By the honest-verifier zero-knowledge property of the $\Sigma$-protocol, $H_2$ is indistinguishable from $H_1'$ up to advantage at most $k\epsilon$, where $\epsilon$ is the distinguishing advantage for a single repetition. \js{need to clarify this epsilon is different from other one in extractability bound}

	Hybrid $H_2$ corresponds exactly to the output of $\mathsf{Sim}_{\mathsf{Fis}}$.
	Since each transition introduces only negligible distinguishing advantage, the simulator output is computationally indistinguishable from a real proof.
	This establishes zero knowledge in the QROM.
	\qed\end{proof}

\end{document}